\numberwithin{figure}{section}
\numberwithin{table}{section}
\numberwithin{equation}{section}%
\theoremstyle{plain}%
\newtheorem{theorem}{Theorem}[section]
\newtheorem{lemma}[theorem]{Lemma}
\newtheorem{corollary}[theorem]{Corollary}
\newtheorem{claim}[theorem]{Claim} 
\newtheorem{observation}[theorem]{Observation}
\newtheorem{definition}[theorem]{Definition}
\theoremstyle{plain}%
\newtheorem*{remark:unnumbered}[theorem]{Remark}%
\theoremstyle{nonumberplain}%
\newtheorem{proof}{Proof:}%
\newcommand{\myqedsymbol}{$\square$}
\newtheorem{proofof}{Proof of\!}%
\newcommand{\eps}{\varepsilon}%
\def\bar{\overline}
\def\ceil#1{\lceil {#1} \rceil}
\def\script#1{\mathcal{#1}}
\def\card#1{|#1|}
\def\set#1{\{#1\}}
\newcommand{\argmax}{\mathrm{argmax}{}\xspace}
\newcommand{\argmin}{\mathrm{argmin}{}\xspace}
\newcommand{\tldO}{\widetilde{O}}%
\def\sB{\script{B}}
\def\sC{\script{C}}
\def\sF{\script{F}}
\def\sI{\script{I}}
\def\sO{\script{O}}
\def\sQ{\script{Q}}
\def\sS{\script{S}}
\def\sU{\script{U}}
\def\sV{\script{V}}
\def\nn{\mathrm{NN}}
\def\n{\mathcal{NN}}
\def\cost{\mathrm{cost}}
\def\d{\boldsymbol{d}}
\def\ins{\mathrm{in}}
\def\opt{\textsc{OPT}}
\newcommand{\sol}{\textsc{SOL}}%
\def\Y{\boldsymbol{Y}}
\def\X{\boldsymbol{X}}
\begin{document}
\title{Individual Fairness for $k$-Clustering}
\author{
Sepideh Mahabadi\thanks{Toyota Technological Institute at Chicago (TTIC); \href{mailto:mahabadi@ttic.edu}{mahabadi@ttic.edu}} 
\and 
Ali Vakilian\thanks{Toyota Technological Institute at Chicago (TTIC); \href{mailto:vakilian@ttic.edu}{vakilian@ttic.edu}. The research was done when the author was at University of Wisconsin-Madison.}
}

\setlength{\abovedisplayskip}{3pt}
\setlength{\belowdisplayskip}{3pt}

\thispagestyle{empty}%
\setcounter{page}{0}

\pagenumbering{gobble}
\setcounter{page}{1}%
\pagenumbering{arabic}%

\date{}

\maketitle

\begin{abstract}

We give a local search based algorithm for $k$-median and $k$-means (and more generally for any $k$-clustering with $\ell_p$ norm cost function) from the perspective of individual fairness. 
More precisely, for a point $x$ in a point set $P$ of size $n$, let $r(x)$ be the minimum radius such that the ball of radius $r(x)$ centered at $x$ has at least $n/k$ points from $P$. Intuitively, if a set of $k$ random points are chosen from $P$ as centers, every point $x\in P$ expects to have a center within radius $r(x)$. An individually fair clustering provides such a guarantee for every point $x\in P$. This notion of fairness was introduced in~\cite{jung2019center} where they showed how to get an approximately feasible $k$-clustering with respect to this fairness condition.

In this work, we show how to get a bicriteria approximation for fair $k$-clustering: The $k$-median ($k$-means) cost of our solution is within a constant factor of the cost of an optimal fair $k$-clustering, and our solution approximately satisfies the fairness condition (also within a constant factor). Further, we complement our theoretical bounds with empirical evaluation.
\end{abstract}

\section{Introduction}\label{sec:intro}

Due to the increasingly use of machine learning in decision-making tasks such as awarding loans, estimating the likelihood of recidivism~\cite{galindo2000credit,chouldechova2017fair,dressel2018accuracy,kleinberg2018human}, it is crucial to design fair algorithms from the perspective of each individual input entity.
In general, the rich area of {\em algorithmic fairness} over the past few years has had two main aspects, (1) understanding different notions of fairness and formalizing them in the context of learning and optimization tasks (e.g. \cite{feldman2015certifying,kleinberg2017inherent,chouldechova2018frontiers,mehrabi2019survey,pessach2020algorithmic}) and (2) designing efficient algorithms with respect to the additional constraints caused by the fairness requirement (e.g. \cite{joseph2016fairness,hardt2016equality}). 
In this paper, we focus on the latter direction and in particular the design of fair algorithms for a basic task in unsupervised learning, namely {\em clustering}.
REcently, there has been a large body of work on design of fair algorithms for unsupervised learning and in particular clustering, e.g.,~\cite{chierichetti2017fair,bera2019fair,abraham2019fairness,har2019near, elzayn2019fair,baharlouei2019r}. 

Clustering is a fundamental task with huge number of applications such as feature engineering, recommendation systems and urban planning. Due to its importance, the clustering problem has been studied extensively from the fairness point view over the past few years~\cite{chierichetti2017fair,rosner18privacy,kleindessner2019fair,kleindessner2019guarantees,backurs2019scalable,
chen2019proportionally,schmidt2019fair,bercea2019cost,bera2019fair,huang2019coresets,
ahmadian2019clustering,davidson2019making,abraham2019fairness}. However, most previous results on this topic consider the clustering problem with respect to the notion of {\em group fairness}.
As introduced by~\cite{chierichetti2017fair}, in clustering with respect to {\em group fairness} requirement, the high-level goal is to come up with a minimum cost clustering of a given set of points with an extra constraint that requires all clusters to be balanced with respect to a set of specified protected attributes such as gender, race or religious.

In this paper, following the work of \cite{jung2019center}, we study the clustering problem from the {\em individual fairness} point of view: The goal is to design a clustering of the input point set so that all points are treated (approximately) equally. As an example, this is important when the clustering is used in certain infrastructural decisions such as where to open a new facility to serve residents in different neighborhoods. 
Formally, given a point set $P$ of size $n$, the {\em fair radius} is defined for each point $p\in P$ as the minimum radius such that the ball $B(p,r(p))$ contains at least $\ceil{n/k}$ points from $P$. Intuitively, this is the radius which the point $p$ expects to have a center within, if the centers were to be chosen uniformly at random. Therefore, it is natural to ask for a clustering solution to approximately respects this expectation and provides a clustering that has a center within distance $O(r(p))$ for every point $p\in P$, thus providing an individually fair clustering solution.

As mentioned above, this notion of fair clustering was introduced in \cite{jung2019center} where the authors showed that it is possible to get a $2$-approximate fair clustering, meaning that there exists a feasible solution of a set of $k$ centers where every point $p$ in the input has a center within distance $2r(p)$. This algorithmic result is based on the previous works of~\cite{chan2006spanners,charikar2010local} on metric embedding. Among other results, they showed that this factor of two loss in the fairness is tight: there are metric spaces and configurations of points where for $\alpha<2$ it is {\em impossible} to find $k$-centers such that for every point $p$, the distance of $p$ to its center is at most $\alpha r(p)$. Moreover, they showed empirically that the standard $k$-median ($k$-means) algorithm does not provide a good fairness guarantee and further demonstrated the {\em price of fairness} on real-world data sets.

\subsection{Our Contribution}
The above empirical result is confirmed by the following observation which shows that the solution returned by existing standard approaches for $k$-clustering that are oblivious to this fairness requirement, may be arbitrarily far from being fair.
\begin{observation}\label{lem:optimal-is-unfair}
The fairness ratio of an optimal $k$-clustering can be arbitrarily large. (Proof is in Appendix~\ref{sec:miss-intro}) 
\end{observation}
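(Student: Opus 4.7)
The plan is to exhibit an explicit family of instances, parameterized by a scale $D$, on which the optimal $k$-clustering is provably arbitrarily unfair. The guiding idea is to place enough duplicates at some location so that its fair radius is $0$, and then add a distant outlier whose cost pressure forces the optimal $\ell_p$ center of that location's cluster to be shifted away from the duplicates. Since $r(p)=0$ while the distance to the nearest center remains bounded below by a positive constant, the fairness ratio becomes unbounded.

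Concretely, I would take $k=2$ and place $n$ points on the real line as follows: $n/2$ copies at position $0$, $n/2-1$ copies at position $1$, and a single outlier at position $D$, with $D$ to be chosen large relative to $n$. Since $n/k=n/2$ points coincide with every duplicate at $0$, we immediately get $r(p)=0$ for each such point. It therefore suffices to show that, for large enough $D$, the optimal clustering places no center at $0$. I would do this by a case analysis over the $O(1)$ candidate partitions of the three groups into two clusters: for the $k$-means ($\ell_2$) objective (and more generally $\ell_p$ with $p>1$), any partition that places $D$ in a mixed cluster pays $\Omega(D^2)$ in squared-distance cost contribution from $D$ alone, whereas merging the stacked points at $0$ and $1$ into one cluster and isolating $\{D\}$ has total cost $O(n)$. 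Hence the latter partition is uniquely optimal for $D=\Omega(\sqrt{n})$, and the $\ell_p$-center of the merged cluster is the weighted centroid $\approx 1/2$, so every duplicate at $0$ is served at distance $\Theta(1)$ while its fair radius is $0$, driving the ratio to $\infty$.

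The main obstacle is the $k$-median ($\ell_1$) case, where the $1$-median is robust to a single outlier: with the construction above, the weighted median of the merged cluster sits exactly at $0$, so the argument fails verbatim. To recover the unbounded ratio for $\ell_1$, I would thicken the ``$1$-side'' with additional copies and place a second outlier symmetrically, so that the cheapest way to absorb both outliers forces a shared median strictly inside $(0,1)$ rather than at any single dense location. The rest of the argument is then identical: the fair radius of the stacked duplicates is $0$, while their distance to the nearest center is bounded away from $0$, giving an unbounded ratio. These modifications are routine perturbations of the base construction, and the full details are deferred to Appendix~\ref{sec:miss-intro}.
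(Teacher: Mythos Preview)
Your argument implicitly assumes the \emph{continuous} clustering model in which centers may be placed anywhere in the ambient space: the claim that ``the $\ell_p$-center of the merged cluster is the weighted centroid $\approx 1/2$'' only holds there. In this paper centers must be chosen from $P$ (Section~\ref{sec:prelim}, Theorem~\ref{thm:main}). In the discrete model your instance collapses: with $n/2$ copies at $0$ and $n/2-1$ copies at $1$, the optimal discrete center of the merged cluster---for $k$-median, $k$-means, or any $\ell_p$---is the point $0$ itself (cost $n/2-1$ there versus $n/2$ at the point $1$), so the duplicates at $0$ are served at distance $0$ and the ratio is $0/0$, not unbounded. Your $k$-median patch inherits the same defect: achieving $r(p)=0$ with $k=2$ forces at least $n/2$ duplicates at $p$, which is a strict majority of any cluster that contains them, so the discrete $1$-median (indeed the discrete $\ell_p$-center) of that cluster is $p$ regardless of how many outliers you append on either side.

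The paper's construction sidesteps $r=0$ entirely. It plants $k-1$ mutually far isolated points on one side and $k-1$ well-separated dense mini-clusters of radius $r$ on the other, each holding at least $n/k$ points so that their fair radius is $\Theta(r)$. The isolated points are placed so far away that any optimal solution must spend $k-1$ of its centers on them, leaving a single center for all $k-1$ mini-clusters; the uncovered mini-clusters then sit at distance $\Theta(R)$ from their nearest center, yielding fairness ratio $\Theta(R/r)$. This works uniformly for every $\ell_p$ cost in the discrete model and produces a finite but arbitrarily large ratio rather than a degenerate $c/0$.
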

This in particular shows the need to design ``efficient'' algorithms with this notion of fairness.   

\begin{figure}[t]
\center
\includegraphics[width=0.75\textwidth]{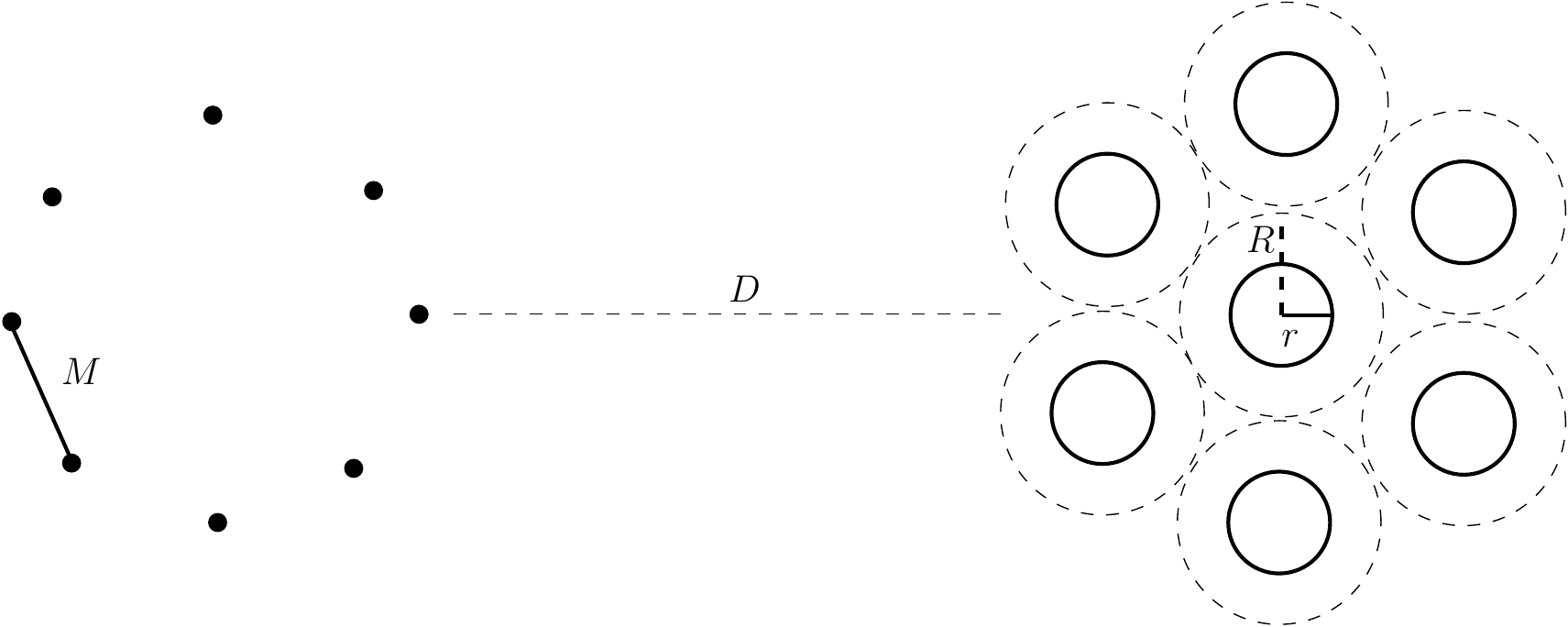}
\caption{This example shows that for any values of $\alpha$ and $k$, 
an optimal solution of $k$-median (or any other $\ell_p$ norm cost functions such as $k$-means or $k$-center) can be arbitrarily unfair.
This is described in the proof of Observation \ref{lem:optimal-is-unfair} in Appendix~\ref{sec:miss-intro}. In this example, $M$ denote the pairwise distance of points in left part and $D$ denotes the minimum distance of a point in the right part to a point in the left part. Moreover, the parameters $r, R, M$ and $D$ are picked so that $R >>r$ and $M=D >> 2n(R+r)$.}\label{fig:costly-fair}
\end{figure}

In this work, we show how to find a fair clustering that also minimizes the $k$-median or $k$-means cost (and more generally, any $\ell_p$ norm cost functions with $p\geq 1$). More specifically, we show that a variant of the local search algorithm provides the following $(O(1),O(1))$-bicriteria approximation for fair $k$-median and $k$-means.

\begin{theorem}\label{thm:main}
Given any desired fairness parameter $\alpha\geq 1$, there exists a polynomial time constant factor bicriteria approximation algorithm for $\alpha$-fair $k$-clustering: The algorithm can find a set of centers $\sC\subseteq P$ of size $k$, such that for each point $p\in P$, we have $d(p,\sC)\leq O(\alpha \cdot r(p))$, and further, $\cost(\sC)\leq O(\cost(\opt_\alpha))$. Here $\cost(\opt_\alpha)$ denotes the minimum clustering cost of $P$ with $k$ centers such that for every point $p$, there exists a center within distance $\alpha\cdot r(p)$.
\end{theorem}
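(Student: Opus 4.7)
The plan is to run a single-swap local search starting from the $O(1)$-fair initial solution of \cite{jung2019center}, accepting a swap $\sC \mapsto (\sC \setminus \{c\}) \cup \{q\}$ only when the resulting set stays $\beta$-fair for a fixed constant $\beta = O(\alpha)$ and the cost strictly decreases by a $(1 - \epsilon/\poly(n))$ factor. Polynomially bounded cost ratios imply the algorithm terminates in polynomial time, and the output is $\beta = O(\alpha)$-fair by construction, giving the fairness guarantee of the theorem directly.

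To bound $\cost(\sC)$, I would adapt the classical single-swap local-search analysis of Arya et al.\ for $k$-median. Let $\sigma : \opt_\alpha \to \sC$ send each OPT center to its nearest $\sC$-center, and form Arya's pairing of test swaps $(c, o)$ with $c \in \sC$ and $o \in \opt_\alpha$, where each $o$ appears in exactly one test swap and each $c$ appears in at most two. The cost change of each test swap is bounded by the standard $p \to o \to \sigma(o)$ reassignment argument, and, provided each test swap is $\beta$-fair-preserving, local optimality forces each cost change to be nearly nonnegative; Arya's summation then yields $\cost(\sC) \leq O(1)\cdot \cost(\opt_\alpha)$. The extension to general $\ell_p$-norm cost follows by replacing the exact triangle inequality with its approximate $\ell_p$ analogue, at the price of a $p$-dependent constant factor.

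The main obstacle is verifying that each test swap $(c, o)$ preserves $\beta$-fairness, since removing $c$ from $\sC$ may leave some point $p$ without any center of $\sC$ within $\beta r(p)$. Fix such a $p$ with $\alpha$-fair OPT-center $o_p^*$ (so $d(p, o_p^*) \leq \alpha r(p)$): the triangle chain $p \to o_p^* \to \sigma(o_p^*)$ together with $d(o_p^*, c) \leq d(o_p^*, p) + d(p, c)$ gives
\[
d(p, \sigma(o_p^*)) \leq d(p, o_p^*) + d(o_p^*, \sigma(o_p^*)) \leq \alpha r(p) + d(o_p^*, c) \leq (2\alpha + \beta) r(p),
\]
so whenever $\sigma(o_p^*) \neq c$, a surviving center is within this bound. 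In the remaining case, where $o_p^*$ and $o$ share $\sigma$-image $c$, the Arya pairing between ``captured'' and ``lonely'' $\sC$-centers selects an $o$ with $d(c, o) \leq d(c, o_p^*) \leq (\alpha + \beta) r(p)$, giving $d(p, o) \leq d(p, c) + d(c, o) = O(\alpha) r(p)$. Tuning $\beta$ as a sufficiently large constant multiple of $\alpha$ (and, if needed, upgrading to an $O(1)$-swap local search to handle the case of many OPT centers sharing a single $\sigma$-image) ensures every test swap is $\beta$-fair-preserving, closing the analysis and yielding the claimed bicriteria approximation.
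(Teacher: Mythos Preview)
Your fairness-preservation argument has a self-referential gap that no choice of $\beta$ can close. In the case $\sigma(o_p^*) \neq c$, your own displayed inequality gives
\[
d\bigl(p, \sigma(o_p^*)\bigr) \;\leq\; (2\alpha + \beta)\,r(p),
\]
where the $\beta$ term on the right comes from the fairness of the \emph{current} solution $\sC$ (via $d(p,c)\le\beta r(p)$). For the test swap to be admissible in your local search you need this to be at most $\beta r(p)$, i.e.\ $2\alpha + \beta \le \beta$, which fails for every $\alpha>0$. ``Tuning $\beta$ large'' does nothing here, since the right-hand side grows with $\beta$ exactly as fast as the threshold. Concretely, take a lonely center $c$ at distance $\beta r(p)$ from $p$, an OPT center $o_p^*$ at distance $\alpha r(p)$ on the opposite side, and place $\sigma(o_p^*)$ just inside radius $d(o_p^*,c)$ beyond $o_p^*$: after swapping out $c$ the nearest surviving center to $p$ sits at essentially $(2\alpha+\beta)r(p)$. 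So this is not slack in your estimate---the Arya test swap genuinely can fall outside the $\beta$-fair feasible region, and then local optimality says nothing about its cost change.

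The paper sidesteps this by \emph{not} using $\beta$-fairness as the local-search invariant. It precomputes a fixed family of at most $k$ disjoint critical balls (Definition~\ref{def:critical-balls}) and only requires that each ball contain a center; Lemma~\ref{lem:clustering-with-partition-fair} then shows any such feasible set is automatically $7\alpha$-fair. Because this constraint is a fixed combinatorial condition, independent of the running solution $\sC$, there is no feedback term of the form ``$+\beta$'': one merely has to design test swaps that keep one center per ball while also satisfying the Arya-style nearest-neighbor reassignment property. Doing both simultaneously is the technical heart of the argument (the $3$-bounded mapping and $(4,6)$-bounded covering of Sections~\ref{sec:mapping}--\ref{sec:covering}) and is what forces swaps of size up to $4$. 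Your parenthetical about ``upgrading to an $O(1)$-swap local search'' gestures at the right escape hatch, but the substantive missing ingredient is precisely a solution-independent feasibility constraint together with a construction of multi-swaps respecting it; your plan does not supply that.
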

We remark that while in this paper we mainly discuss $k$-median and $k$-means clusterings, more generally in Appendix~\ref{sec:general-cost}, we show that our analysis provides $(O(1), O(p))$-approximation for any cost function of the form $(\sum_{x\in P} d(x, S)^p)^{1/p}$. This in particular implies $(O(1),O(1))$-approximation for $\alpha$-fair $k$-median (by setting $p=1$ as in Corollary~\ref{cor:main-median}), $(O(1),O(1))$-approximation for $\alpha$-fair $k$-means (by setting $p=2$ as in Corollary~\ref{cor:main-means}), and $(O(1),O(\log n))$-approximation for $\alpha$-fair $k$-center (by setting $p=\log n$ in Corollary~\ref{cor:main-center}). 

We remark that while in this paper we only mention $k$-means and $k$-median clustering, more generally in Appendix~\ref{sec:general-cost}, we show that our analysis provides $(O(1), O(p))$-approximation for any cost function of the form $(\sum_{x\in P} d(x, S)^p)^{1/p}$; this in particular implies $(O(1),O(1))$-approximation for $\alpha$-fair $k$-median (when $p=1$), $(O(1),O(1))$-approximation for $\alpha$-fair $k$-means (when $p=2$) and $(O(1),O(\log n))$-approximation for $\alpha$-fair $k$-center (when $p=\log n$).

Also, we again note that our result is in contrast to the previous result which only provided one (approximately) feasible fair clustering {\em without minimizing the clustering cost}. Our algorithm is based on the local search algorithm and is easy to implement. More precisely, we start with a feasible solution which combines the output of the described algorithm of~\cite{chan2006spanners,charikar2010local,jung2019center} with the standard greedy algorithm of $k$-center (described in Section \ref{sec:initialization} in more details). Then in successive iterations, the algorithm improves the $k$-median ($k$-means) cost while respecting the fairness condition. Moreover, in order to show the theoretical guarantee, the local search algorithm should take swaps of size $4$, meaning that it considers swapping of at most $4$ centers in its current solution with the points outside of the solution.

Although local search is a widely used approach for $k$-median ($k$-means) clustering, our analysis is more involved and includes new structures that are crucial for handling the fairness constraints. 
We remark that for the sake of simplicity and readability of the paper, we have not optimized the constants in the approximation factors of the cost and the fairness guarantees.

\paragraph{Experiments.} Further, we run experiments on three datasets (Diabetes, Bank, Census) that have been previously used in the context of fair clustering (e.g., see~\cite{chierichetti2017fair,chen2019proportionally,backurs2019scalable,bera2019fair,huang2019coresets}). Our experiments show that in compare to the algorithm of \cite{jung2019center}, the $k$-median cost of our solution improves on average by a factor of 1.86, but it loses on fairness by a factor of 1.26 on average.

\subsection{Other Related Work}
Clustering is a fundamental problem in optimization and has been extensively studied in various settings.  
The $k$-center problem has a tight $2$-approximation~\cite{hsu1979easy,gonzalez1985clustering,hochbaum1985best} and after developing a series of constant factor approximation algorithms (e.g.,~\cite{charikar2002constant,arya2004local,li2016approximating}), the state-of-the-art for the $k$-median problem is $2.676$-approximation~\cite{byrka2014improved}. Also, the best known algorithm for $k$-means is a $6.357$-approximation~\cite{ahmadian2017better}. 
Refer to~\cite{aggarwal2013} for a survey on this topic.

\section{Preliminaries}\label{sec:prelim}
Throughout the paper, we use $P$ to denote the set of points that we wish to cluster, and the parameter $k$ to denote the number of centers we allow for clustering. For each $x\in P$, we use $B(x,r) = \{y\in P: d(x,y)\leq r\}$ to denote the set of points that are contained in the ball of radius $r$ around $x$. Also, in this paper we consider two main variants of clustering costs, {\em $k$-median} and {\em $k$-means}. In $k$-median, the goal is to select $k$ centers in $P$ such that the total sum of distances of points to their centers is minimized, $\min_{S\subseteq P: |S|\leq k} \sum_{p\in P} d(p,S)$, where for a set of points $S$ and a point $p$, the distance $d(p,S)$ is defined to be $\min_{s\in S} d(p,s)$. In $k$-means, the goal is to select $k$ centers in $P$ such that the sum of the square distances of points to their centers is minimized, $\min_{S\subseteq P: |S|\leq k} \sum_{p\in P}d(p,S)^2$. 

More generally, we analyze our algorithm for the fair variant of the clustering problem with any $\ell_p$ norm cost function, $\min_{S\subseteq P: |S|\leq k} (\sum_{p\in P} d(p,S)^p)^{1/p}$, where $p\geq 1$. Besides including $k$-means and $k$-median as its special cases, the general $\ell_p$ norm cost function implies an approximation guarantee for another common clustering cost function, namely $k$-center.  
In standard $k$-center, to goal is to minimize the maximum distance of points to their centers: $\min_{S\subseteq P: |S|\leq k} \max_{p\in P} d(p,S)$.   

Next, we formally define a fair radius for each point in the point set.
\begin{definition}[fair radius]
Given a set of $n$ points $P$ in a metric space $(X,d)$ and a parameter $\ell\in[n]$, for each $x\in P$ define $r_\ell(x)$ to be the radius of the minimum ball centered at $x$ that contains $({n/ \ell})$ points of $P$; $r_{\ell}(x) = \min(r: \card{B(x,r)} \geq n/\ell)$.
\end{definition}

\begin{definition}[$\alpha$-fair clustering~\cite{jung2019center}] Given a set of $n$ points $P$ in a metric space $(X, d)$, a $k$-clustering using a set of centers $S$ is $(\alpha,\ell)$-fair if for any $x\in P$, $d(x, S) \leq \alpha\cdot r_{\ell}(x)$ where $d(x, S)$ denotes the distance of $x$ to its closest neighbor in $S$. In the case $\ell = k$, we succinctly denote it as $\alpha$-fair $k$-clustering.
\end{definition} 

\begin{definition}[bicriteria approximation] 
Given a set of points $P$ in a metric space $(X, d)$, an algorithm is a ($\beta, \gamma$)-approximation for $\alpha$-fair $k$-clustering of a given cost function $\cost$\footnote{E.g., $k$-median, $k$-means, $k$-center or more generally any $\ell_p$ norm cost function.}  if the solution $\sol$ returned by the algorithm satisfies the following properties:
\begin{enumerate}
\item{\bf Cost guarantee:} $\cost(\sol) \leq \beta \cdot \cost(\opt_{\alpha})$ where $\opt_{\alpha}$ denotes an optimal solution of $\alpha$-fair clustering with respect to the given cost function $\cost$, and
\item{\bf Fairness guarantee:} $\sol$ is a $(\gamma \cdot \alpha)$-fair $k$-clustering.
\end{enumerate}
\end{definition}
Next, we define {\em critical balls} which are crucial in our analysis of the local search algorithm. The notion is used to show that our solution satisfies the fairness guarantee.
\begin{definition}[critical balls]\label{def:critical-balls}
Given a collection of $n$ points $P$ in a metric space $(X, d)$, a set of balls $B_1 = B(c^*_1, \alpha r_k(c^*_1)), \cdots, B_\ell = B(c^*_\ell, \alpha r_k(c^*_\ell))$ (where $\ell\leq k$) are called {\em critical} if they satisfy the following properties:

\begin{enumerate}[{C}-1]
	\item\label{enum:critical-1} For each $x\in P$, $d(x, \{c^*_1,\cdots, c^*_\ell\})\leq 6\alpha r_{k}(x)$,
	\item\label{enum:critical-3} For any pair of centers $c^*_i$ and $c^*_j$, $d(c_i,c_j)> 6\alpha \max\{r_k(c_i),r_k(c_j)\}$
\end{enumerate}
\end{definition}
In Lemma~\ref{lem:disjoint-balls}, we show that there exists a polynomial time algorithm for finding a set of critical balls of $P$. We say that a set of centers $S$ is {\em feasible} with respect to a set of given critical balls $\sB$ if for each $B\in \sB$, $|B\cap S|\geq 1$; each critical ball contains a center from $S$.  
\begin{claim}\label{clm:nn}
Let $o$ be a point in a critical ball $B \in \sB$.
The nearest neighbor of $o$ in a given set $S$ of feasible centers with respect to the critical balls cannot belong to a ball other than $B$. Moreover, the nearest neighbors of two points $o_1 \in B_1$ and $o_2 \in B_2$ where $B_1 \neq B_2$ cannot be the same.
\end{claim}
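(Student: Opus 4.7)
The plan is to exploit the well-separation property C-\ref{enum:critical-3} together with feasibility to control where the nearest neighbor of any point in a critical ball can live. Throughout, I will abbreviate $\rho_i := \alpha \, r_k(c^*_i)$, so the critical ball $B_i$ has radius $\rho_i$ and C-\ref{enum:critical-3} reads $d(c^*_i, c^*_j) > 6 \max\{\rho_i, \rho_j\}$.

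For the first part, fix $o \in B$ where $B = B(c^*, \rho)$. Because $S$ is feasible, there exists some $s \in S \cap B$, so by the triangle inequality
\[
d(o, s) \leq d(o, c^*) + d(c^*, s) \leq \rho + \rho = 2\rho.
\]
Thus the nearest neighbor of $o$ in $S$ lies at distance at most $2\rho$. Now suppose for contradiction that this nearest neighbor is some $s' \in S \cap B'$ with $B' = B(c^{**}, \rho') \neq B$. Then $d(o, s') \leq 2\rho$ and $d(s', c^{**}) \leq \rho'$, so by triangle inequality
\[
d(c^*, c^{**}) \leq d(c^*, o) + d(o, s') + d(s', c^{**}) \leq \rho + 2\rho + \rho' = 3\rho + \rho' \leq 4\max\{\rho, \rho'\},
\]
contradicting C-\ref{enum:critical-3} which demands $d(c^*, c^{**}) > 6\max\{\rho, \rho'\}$.

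For the second part, I first observe that the critical balls are pairwise disjoint: if some $x$ lay in both $B_i$ and $B_j$ with $i \neq j$, then $d(c^*_i, c^*_j) \leq \rho_i + \rho_j \leq 2\max\{\rho_i, \rho_j\}$, again contradicting C-\ref{enum:critical-3}. Now suppose $o_1 \in B_1$ and $o_2 \in B_2$ with $B_1 \neq B_2$ share a common nearest neighbor $s \in S$. By the first part applied to $o_1$, we have $s \in B_1$, and applied to $o_2$, we have $s \in B_2$, forcing $s \in B_1 \cap B_2$, contradicting disjointness.

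The only mildly subtle point is ensuring that the constant $6$ in C-\ref{enum:critical-3} is comfortably larger than the $4$ produced by the triangle-inequality chase; there is no real obstacle, and in fact any separation constant strictly greater than $4$ would suffice for this particular claim.
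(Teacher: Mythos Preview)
Your first part is correct and matches the paper's argument essentially exactly.

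There is a genuine gap in your second part. The first part establishes only that the nearest neighbor of $o_1$ in $S$ is not contained in any critical ball \emph{other than} $B_1$; it does \emph{not} establish that this nearest neighbor actually lies in $B_1$. The nearest neighbor could perfectly well lie outside every critical ball. So when you write ``By the first part applied to $o_1$, we have $s \in B_1$,'' that inference is unjustified, and your disjointness contradiction does not go through.

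The repair is straightforward and is essentially what the paper does: work with distances rather than ball membership. From feasibility you already know $d(o_1,s) \le 2\rho_1$ and $d(o_2,s) \le 2\rho_2$, so $d(o_1,o_2) \le 2\rho_1 + 2\rho_2 \le 4\max\{\rho_1,\rho_2\}$. On the other hand, your own triangle-inequality computation from the first part (applied with $o_2$ playing the role of a point in the foreign ball $B_2$) gives $d(o_1,o_2) \ge d(c^*_1,c^*_2) - \rho_1 - \rho_2 > 6\max\{\rho_1,\rho_2\} - 2\max\{\rho_1,\rho_2\} = 4\max\{\rho_1,\rho_2\}$, a contradiction. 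This avoids any claim about which ball (if any) contains $s$.
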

\begin{proof}
Consider the critical ball $B_1=B(c_1,\alpha r_k(c_1))$ that contains $o_1$. Since $S$ is a feasible center set with respect to $\sB$, $d(o_1,\nn_S(o_1))\leq 2\alpha r_k(c_1)$. However the distance of $o_1$ to any other critical ball $B_2=B(c_2,\alpha r_k(c_2))$ is at least $d(c_1,c_2) - \alpha r_k(c_1) - \alpha r_k(c_2) > 4 \alpha r_k(c_1)$ using Definition \ref{def:critical-balls}. Therefore, the nearest neighbor of $o_1$ cannot be in any ball other than $b_1$.

Now if two points $o_1\in B_1$ and $o_2 \in B_2$ are in different balls and have the same nearest neighbor $s \in S$, it means that their distance is at most $d(o_1,s)+d(o_2,s) \leq 2\alpha r_k(c_1) + 2\alpha r_k(c_2) \leq 4\alpha \max\{r_k(c_1),r_k(c_2)\}$. However by the previous argument their distance is larger than $4\alpha r_k(c_1)$ which is a contradiction.
\end{proof}

Lastly, for a point $x$, let $\nn_Y(x)$ denote the nearest neighbor of $x$ in the set of points $Y$. When $Y = P$, we drop the
subscript and refer to $\nn_P$ as $\nn$.

\section{High Level Description of Our Algorithm}

In this section, we provide a high-level description of our local search algorithm for $\alpha$-fair $k$-median. We note that the algorithm for $\alpha$-fair $k$-clustering with other cost functions (e.g., $k$-means or more generally any $\ell_p$ norm cost) is almost identical but in Section~\ref{sec:initialization}, where we compute the clustering cost in the local search algorithm, instead of working with pairwise distances we consider the corresponding function of distance (e.g., distance squared for $k$-means).

\subsection{Handling Fairness Constraints via Critical Balls}
We use a slightly modified variant of the greedy approach of~\cite{chan2006spanners,charikar2010local} to find $\ell\leq k$ disjoint {\em critical balls} (see Lemma~\ref{lem:disjoint-balls}).
Then, we ignore the fairness constraint and instead run the local search algorithm with an extra requirement: Find a set of $k$ centers that are feasible with respect to the critical balls. By Lemma~\ref{lem:clustering-with-partition-fair} in Section~\ref{sec:fairness-to-partition-constraint}, these centers result an $O(\alpha)$-fair $k$-clustering.
  
\begin{algorithm}[!h]
   \caption{Finds a set of critical balls of size at most $k$}
   \label{alg:critical-balls}
\begin{algorithmic}[1]
   \STATE {\bfseries Input:} set of points $P$ of size $n$, fairness parameter $\alpha$
   \STATE{$Z \leftarrow P, \sC^*\leftarrow \emptyset$}
   \REPEAT
	   \STATE $c \leftarrow \argmin_{x\in Z} r_{k}(x)$
	   \STATE $\sC^* \leftarrow \sC^* \cup \{c\}$
	   \STATE $Z \leftarrow \{x\in Z: d(x, c) > 6\alpha \cdot r_{k}(x)\}$	
   \UNTIL{$Z \neq \emptyset$}
   \STATE{\bfseries return} $\{B(c, \alpha r_k(c)): c\in \sC^* \}$
\end{algorithmic}
\end{algorithm}

\subsection{Initialization and Local Search Update}\label{sec:initialization}
Next, we initiate the algorithm with a feasible set of centers $S_0$ with respect to the constructed set of critical balls $\sB$ by Algorithm~\ref{alg:critical-balls}. Note that the choice of initial feasible set of centers plays an important role in bounding the number of iterations required by our local search algorithm and consequently the total runtime of our algorithms. In Theorem~\ref{thm:iterations}, we show that a modified variant of the standard greedy algorithm of $k$-center can be used to find a good initial set of centers $S_0$. See part {\bf I} in Algorithm~\ref{alg:local-search}.

Then, we go through iterations and in each iteration $j$, we check whether there exists a swap of size at most $t$ (i.e., replacing $t' \leq t$ centers in the current center set $S_j$ with a set of $t'$ centers outside of $S_j$) that results in a feasible set of centers $S'$ with respect to $\sB$ whose clustering cost improves upon the clustering cost with $S_j$ as centers by a factor more than $1/(1-\eps)$, i.e., $\cost(S') \leq (1-\eps)\cdot \cost(S_j)$. 
If there exits such a set $S'$, then we set $S_{j+1} = S'$ and proceed to the next iteration; otherwise, we stop the process and output the current set of centers, $S_j$, as our solution. We refer to such $S_j$ as a set of {\em $(t,\eps)$-stable} centers; there is no feasible swaps of size at most $t$ with respect to the critical balls that improve the clustering cost ``significantly''. See part {\bf II} in Algorithm~\ref{alg:local-search}.

\begin{algorithm}[!h]
   \caption{local search algorithm w.r.t. critical balls}
   \label{alg:local-search}
\begin{algorithmic}[1]
   \STATE {\bfseries Input:} set of points $P$ of size $n$, set of critical balls $\sB$ along with their centers $\sC^*$, upper bound $t$ on swap size 
   \STATE {\bfseries I. Constructing Initial Center Set $S'$} \label{line-a}
   \STATE $S' \leftarrow \sC^*, \ell \leftarrow |S'|$
   \FOR{$i=1$ {\bfseries to} $k-\ell$}
   	\STATE $z\leftarrow \argmax_{x\in P\setminus S'} d(x, S')$
   	\STATE $S' \leftarrow S' \cup \{z\}$
   \ENDFOR
   \STATE{\bfseries II. Local Search Update} \label{line-b}
   \REPEAT
	   \STATE $S \leftarrow S'$
	   \FOR{$i=1$ {\bfseries to} $t$}
	      \FOR{$T_1 \subseteq S$ and $T_2 \subseteq P\setminus S$ of size $i$}
	            \IF{$(S\cup T_2) \setminus T_1$ is feasible w.r.t $\sB$}
	               \STATE{$S' \leftarrow (S\cup T_2) \setminus T_1$}
	               \IF{$\cost(S') \leq (1-\eps)\cdot \cost(S)$}
					\STATE{\bfseries break to line 21}
	               \ENDIF
      		   \ENDIF
	      \ENDFOR		
	   \ENDFOR
   \UNTIL{$\cost(S') \leq (1-\eps)\cdot \cost(S)$}\label{line:update-end}
   \STATE{\bfseries return} $S$
\end{algorithmic}
\end{algorithm}

\begin{theorem}\label{thm:iterations}
The local search algorithm stops after $O({\log n \over \eps})$ iterations. Moreover, the runtime of the algorithm is $\tldO((kn)^{t+1}\cdot \eps^{-1})$.
\end{theorem}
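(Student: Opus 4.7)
The plan is to prove Theorem~\ref{thm:iterations} in two parts: the iteration bound via a geometric cost decrease argument, and the runtime bound via swap enumeration. The stopping rule on line~\ref{line:update-end} ensures that every non-terminal iteration $j$ achieves $\cost(S_{j+1}) \leq (1-\eps)\,\cost(S_j)$, so after $T$ non-terminal iterations,
\[
\cost(S_T) \;\leq\; (1-\eps)^T \cdot \cost(S_0) \;\leq\; e^{-\eps T}\cdot \cost(S_0).
\]
Hence it suffices to upper bound $\log\!\bigl(\cost(S_0)/\cost(S_T)\bigr)$ by $O(\log n)$, which reduces the iteration bound to a polynomial bound on the initial-to-final cost ratio.

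For the upper bound on $\cost(S_0)$, I would exploit the specific form of part~\textbf{I} of Algorithm~\ref{alg:local-search}: the initial $S_0$ extends $\sC^*$ to $k$ centers by a furthest-first greedy, which is a $2$-approximation for $k$-center subject to including $\sC^*$. Any $\sB$-feasible solution has maximum point-to-center distance $O(\alpha)\max_x r_k(x)$ (each $x$ is within $6\alpha r_k(x)$ of some $c^*\in\sC^*$ by property C-\ref{enum:critical-1}, and the corresponding critical ball contains one center of $S$), so $\max_p d(p,S_0) = O(\alpha)\max_x r_k(x)$, and hence $\cost(S_0)\leq n\cdot(\max_p d(p,S_0))^p = \poly(n)\cdot\mathrm{diam}(P)^p$ under the $\ell_p$ cost. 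For the lower bound on $\cost(S_T)$, since $S_T$ is itself $\sB$-feasible we have $\cost(S_T)\geq \cost(\opt)$, and $\cost(\opt)$ is at least the minimum positive pairwise distance raised to the $p$-th power. Under the standard assumption of polynomially bounded aspect ratio, the ratio $\cost(S_0)/\cost(S_T)$ is $\poly(n)$, yielding $T = O(\log n/\eps)$.

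For the runtime bound, each iteration enumerates all pairs $(T_1,T_2)$ with $T_1\subseteq S$, $T_2\subseteq P\setminus S$, and $|T_1|=|T_2|=i$ for some $i\leq t$; the total number of such pairs is $\sum_{i=1}^{t}\binom{k}{i}\binom{n-k}{i}=O((kn)^{t})$. Verifying feasibility with respect to $\sB$ for a candidate swap reduces to checking whether each of the $\leq k$ critical balls intersects $(S\cup T_2)\setminus T_1$, which takes $\tldO(k^2)$ time; evaluating $\cost((S\cup T_2)\setminus T_1)$ by nearest-neighbor assignment takes $O(nk)$ time. Hence each iteration runs in $\tldO((kn)^{t+1})$, and multiplying by $O(\log n/\eps)$ iterations and absorbing the logarithmic factor into $\tldO$ yields total runtime $\tldO((kn)^{t+1}\eps^{-1})$.

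The main obstacle is establishing the polynomial bound $\cost(S_0)/\cost(S_T)\leq\poly(n)$: a generic feasible starting solution could drive this ratio exponentially large, inflating the iteration count beyond $O(\log n/\eps)$. The critical-ball-aware furthest-first initialization sidesteps this by simultaneously (i) enforcing $\sB$-feasibility through inclusion of $\sC^*$ and (ii) controlling the maximum point-to-center distance via the standard $k$-center greedy guarantee combined with property C-\ref{enum:critical-1} of the critical balls, which then propagates to a polynomial bound on the $\ell_p$ cost.
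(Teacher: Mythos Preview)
Your runtime analysis is fine and matches the paper's. The gap is in the iteration bound: you establish $\cost(S_0)/\cost(S_T)\leq\poly(n)$ only \emph{under an additional assumption of polynomially bounded aspect ratio}, which the theorem does not posit and the paper does not need. Without that assumption your argument gives no useful bound, since $\max_x r_k(x)/d_{\min}$ can be arbitrarily large relative to $n$.

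The paper avoids this by tying $\cost(S_0)$ directly to $\opt$ (the optimal $\sB$-feasible cost) rather than to the diameter. Let $\mu:=\max_{x\in P\setminus S_0} d(x,S_0)$, so $\cost(S_0)\leq n\mu$. The key step you are missing is the lower bound $\opt=\Omega(\mu)$, proved as follows. The set $T:=S_0\cup\{x^\star\}$ (with $x^\star$ achieving $\mu$) has $k+1$ points; by the furthest-first rule, any two points of $T$ at least one of which lies outside $\sC^*$ are at distance $\geq\mu$. In any $k$-clustering two points of $T$ share a cluster. If both lie in $\sC^*$ they are in distinct critical balls, and Claim~\ref{clm:nn} forbids them from sharing a center in any $\sB$-feasible solution; otherwise their mutual distance is $\geq\mu$, forcing one of them to be $\geq\mu/2$ from its center. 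Hence $\opt\geq\mu/2$ and $\cost(S_0)=O(n\cdot\opt)$, giving $O(\log n/\eps)$ iterations with no aspect-ratio hypothesis. Your invocation of the greedy ``$2$-approximation for constrained $k$-center'' is the right instinct, but you used it only to upper-bound $\mu$ by the diameter; the point is to use the same structure to \emph{lower-bound} $\opt$ by $\mu$.
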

\begin{proof}
Let $\sC^* = \{c^*_1,\cdots, c^*_\ell\}$ denote the set of the balls returned by Algorithm~\ref{alg:critical-balls}. To construct $S_0$, we run the following modified greedy algorithm of $k$-center (see part {\bf I} in Algorithm~\ref{alg:local-search}).

Initialize $S_0$ to $\sC^*$. Then, go through $k-\ell$ rounds and in each round $i$, add to $S_0$ the point $x_i \in P\setminus S_0$ to $S_0$ who is the furthest from $S_0$; $\forall x\in P\setminus S_0, d(x_i, S_0) \geq d(x, S_0)$.

When the greedy algorithm terminates, let $\mu$ denote the maximum distance of a point in $P\setminus S_0$ to $S_0$; $\mu := \max_{x\in P\setminus S_0} d(x, S_0)$. Note $S_0$ is a feasible set of centers with respect to the critical balls $\sB$ and the clustering cost with $S_0$ as centers is at most $n\cdot \mu$. Next, we show that the cost of $k$-clustering with any feasible set of centers with respect to the critical balls (in particular, the set of optimal centers $O$) is at least $\Omega(\mu)$. Consider the $k+1$ points  in $T := S_0 \cup \{x\}$. In any $k$-clustering at least two points $p_1, p_2$ in $T$ belong to a same cluster. There are two cases for $p_1$ and $p_2$:
\begin{itemize}
\item{\it At least one of $p_1, p_2$ belongs to $T\setminus \sC^*$.} Let assume that $p_2$ denote the one that added to $T$ later than $p_1$ (the last point added to $T$ is $x$). Note that by the description of the greedy process, for each $x_i$, $d(x_i, \sC^*\cup\{x_1, \cdots,x_{i-1}\}) \geq d(x, S_0) = \mu$. This implies that $d(p_1, p_2) \geq \mu$. Then, since $d(\cdot)$ satisfies the triangle inequality\footnote{Note that the squared distance (which is used in $k$-means) does not satisfy the triangle inequality but instead satisfy an approximate version of triangle inequality which is sufficient for our purpose: $d(x,y)^2 \leq 2 d(x,z)^2 + 2 d(z,y)^2$.}, the distance of at least one of $p_1$ and $p_2$ to the center of their cluster is $\Omega(\mu)$. Hence, $\opt = \Omega(\mu)$.
\item{\it Both $p_1, p_2$ belong to $\sC^*$.} Since all points in $\sC^*$ belong to different critical balls, by Claim~\ref{clm:nn}, in any feasible set of centers with respect to $\sB$,	$\sC^*$ belong to different clusters. Hence, this case cannot happen.  
\end{itemize}
We showed that $\cost(S_0) = O(n \cdot \cost(O)) = O(n\cdot \opt)$. Since, in each round of the local search algorithm the cost decrease by a factor of $(1-\eps)$, the total number of rounds before obtaining a $(t,\eps)$-stable set of centers is $O({\log n \over \eps})$. Moreover, the time to decide whether there exist a swap of size at most $t$ that improves the clustering cost by at least a factor of $(1-\eps)$ is $O(k^t \cdot n^t \cdot n\cdot k)$ where $O(k^t \cdot n^t)$ bounds the number of swaps of size at most $t$ and $kn$ denotes the required amount of time to recompute the clustering cost for each potential swap. In total, the algorithm runs in time $\tldO((kn)^{t+1} \cdot \eps^{-1})$.   
\end{proof}
\section{Handling Fairness Constraints}\label{sec:fairness-to-partition-constraint}

To satisfy the fairness requirement, as a first step, by slightly modifying the greedy algorithm of~\cite{chan2006spanners,charikar2010local}, in Lemma~\ref{lem:disjoint-balls} we show that given a set of points $P$ we can find a set of critical balls in polynomial time. Then, in Lemma~\ref{lem:clustering-with-partition-fair} we show that a feasible set of centers with respect to the critical balls is an approximate $\alpha$-fair solution.

\begin{lemma}\label{lem:disjoint-balls}
Given a set of $n$ points $P$ in a metric space $(X,d)$, there is an algorithm that runs in $O(n^2)$ and finds a set of critical balls of size at most $k$. 
\end{lemma}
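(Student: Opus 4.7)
The plan is to analyze Algorithm~\ref{alg:critical-balls} directly, showing that its output satisfies both conditions C-\ref{enum:critical-1} and C-\ref{enum:critical-3} of Definition~\ref{def:critical-balls}, that the number of selected centers is at most $k$, and that the whole procedure can be implemented in $O(n^2)$ time. Let $c_1^*, c_2^*, \ldots$ denote the centers picked in successive iterations, and let $Z_j$ denote the set $Z$ at the start of iteration $j$ (so $Z_1 = P$).

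For condition C-\ref{enum:critical-1}, I would observe that a point $x$ is removed from $Z$ only when some center $c$ is chosen with $d(x,c) \leq 6\alpha\, r_k(x)$; since the loop terminates only when $Z$ is empty, every $x \in P$ is within distance $6\alpha\, r_k(x)$ of some $c_i^* \in \sC^*$. For condition C-\ref{enum:critical-3}, the key structural observation is that, because $Z$ shrinks monotonically, whenever $c_j^*$ is in $Z_j$ it was also in $Z_i$ for every $i < j$, so the greedy rule $c_i^* = \argmin_{x \in Z_i} r_k(x)$ forces $r_k(c_i^*) \leq r_k(c_j^*)$ for every $i < j$. Moreover, since $c_j^*$ survived the filtering step at iteration $i$, we have $d(c_i^*, c_j^*) > 6\alpha\, r_k(c_j^*) = 6\alpha\, \max\{r_k(c_i^*), r_k(c_j^*)\}$, which is exactly C-\ref{enum:critical-3}.

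The size bound $\ell \leq k$ will follow from a disjointness argument: by C-\ref{enum:critical-3} and the triangle inequality,
\[
d(c_i^*, c_j^*) > 6\alpha\, \max\{r_k(c_i^*), r_k(c_j^*)\} \geq \alpha\, r_k(c_i^*) + \alpha\, r_k(c_j^*),
\]
so the balls $B(c_i^*, \alpha r_k(c_i^*))$ are pairwise disjoint. Since $\alpha \geq 1$, each such ball contains $B(c_i^*, r_k(c_i^*))$ and hence at least $\lceil n/k \rceil$ points of $P$. Having more than $k$ disjoint such balls would contradict $|P| = n$, giving $\ell \leq k$.

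For the running time, I would first precompute $r_k(x)$ for every $x \in P$: computing all pairwise distances costs $O(n^2)$, and for each $x$ the value $r_k(x)$ is the $\lceil n/k \rceil$-th smallest entry in its distance list, which can be extracted in $O(n)$ via linear-time selection, giving $O(n^2)$ overall. Afterwards, each iteration of the main loop scans $Z$ once to find the argmin and once to filter, costing $O(|Z|) = O(n)$; since at most $k$ centers are produced, the loop contributes $O(kn) = O(n^2)$. The only subtle point — which I view as the main obstacle — is the argument for C-\ref{enum:critical-3}, because it requires carefully exploiting the monotonicity of $Z_j$ together with the greedy minimality rule to conclude both $r_k(c_i^*) \leq r_k(c_j^*)$ and $d(c_i^*, c_j^*) > 6\alpha\, r_k(c_j^*)$ from the single fact that $c_j^*$ was not removed at iteration $i$.
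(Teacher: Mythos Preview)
Your proposal is correct and follows essentially the same approach as the paper: verify C-\ref{enum:critical-1} from the termination condition, derive C-\ref{enum:critical-3} by combining the greedy minimality (which gives $r_k(c_i^*) \leq r_k(c_j^*)$ for $i<j$) with the survival of $c_j^*$ through the filtering step, and bound $\ell \leq k$ via disjointness of the balls and a counting argument. Your treatment is in fact slightly more explicit than the paper's on two points---you spell out why the balls are pairwise disjoint and why $\alpha \geq 1$ is needed for each to contain at least $\lceil n/k \rceil$ points---but the underlying argument is the same.
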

\begin{proof}
Sort the points in $P$ in a non-decreasing order based on their fair radius $r_k(\cdot)$. Initially all points in $P$ are {\em uncovered}.
While there exists a point that is uncovered, we choose an uncovered point $p$ that has the smallest $r_k(p)$ and add a ball centered at $p$ of radius $r_k(p)$. Next we mark all uncovered $p'\in P$ such that $d(p,p')\leq 6\alpha r_k(p')$ as covered. Then we proceed to the next iteration. See Algorithm~\ref{alg:critical-balls} for pseudocode of this subroutine.

Note that the first property (i.e., {\it C-\ref{enum:critical-1}}) is clearly satisfied. To show that the second property (i.e., {\it C-\ref{enum:critical-3}}) holds, consider an arbitrary pair of centers $c_i$ and $c_j$. W.l.o.g., suppose that $c_i$ is added before $c_j$ which implies that $r_k(c_i) \leq r_k(c_j)$.
At the time $c_j$ is added to $\sC^*$, since it is yet uncovered, its distance to all the previously chosen centers, which includes $c_i$, is at least $6\alpha r_k(c_j)$. Lastly, since the constructed balls are disjoint and each contains at least ${n/k}$ balls, the number of critical balls is at most $k$.

The algorithm spends $O(n^2)$ to compute the fair radius values $r_k(x)$ for all $x\in P$. Then, the rest of the algorithm can be implemented in $O(n \log n + k n)$: We sort the points based on their fair radius values which takes $O(n\log n)$ and in each iteration we compute the distance of the newly picked center to all yet uncovered points in $P$ which in total takes $O(kn)$.
\end{proof}

Next, we show that in order to provide the fairness guarantee (approximately), it suffices to only satisfy the guarantee for the subset of points generated by Algorithm~\ref{alg:critical-balls}. In particular, this reduces the problem of $\alpha$-fair $k$-clustering to an instance of $k$-clustering with {\em partition} constraints: Given a collection of points $P$ and a collection of $\ell$ critical balls $\sB = \{B_1, \cdots, B_\ell\}$ where $\ell \leq k$, find a minimum cost clustering which is feasible with respect to $\sB$. 

Note that by the definition of $\alpha$-fairness, it is straightforward to verify that an $\alpha$-fair $k$-clustering of $P$ is a feasible $k$-clustering with respect to any collection of critical balls. In the following lemma we prove that any feasible $k$-clustering with respect to a collection of critical balls is $O(\alpha)$-fair.

\begin{lemma}\label{lem:clustering-with-partition-fair}
Given a set of $\ell\leq k$ of critical balls $\sB$ centered at $\sC^*$ of the input point set $P$, let $S = \set{s_1,\cdots, s_k}$ be a feasible $k$-clustering solution with respect to $\sB$.
Then, the corresponding clustering using $S$ as centers is an $O(\alpha)$-fair $k$-clustering of $P$. 
\end{lemma}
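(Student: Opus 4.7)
The plan is to show that for each $x \in P$ the feasible set $S$ contains a center within distance $O(\alpha)\cdot r_k(x)$ of $x$, which is exactly what $O(\alpha)$-fairness requires.

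The first and main step is to associate with each $x \in P$ a critical center $c^* \in \sC^*$ satisfying both $d(x, c^*) \leq 6\alpha\, r_k(x)$ and $r_k(c^*) \leq r_k(x)$. Existence of some $c^*$ with the first inequality follows directly from property C-\ref{enum:critical-1}. The second inequality is the delicate part, and I would extract it from the specific greedy construction in Algorithm~\ref{alg:critical-balls}: if $x \in \sC^*$ just set $c^* = x$; otherwise let $c^*$ be the iterate that first caused $x$ to be marked as covered. Since Algorithm~\ref{alg:critical-balls} selects $c^*$ as the point minimizing $r_k(\cdot)$ over the still-uncovered set $Z$ at that moment, and $x$ was in $Z$ at that moment, we get $r_k(c^*) \leq r_k(x)$, while the covering rule gives exactly $d(x, c^*) \leq 6\alpha\, r_k(x)$.

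The second step uses feasibility of $S$: the critical ball $B(c^*, \alpha\, r_k(c^*))$ contains some $s \in S$, so $d(c^*, s) \leq \alpha\, r_k(c^*) \leq \alpha\, r_k(x)$, the last inequality by the bound from the first step. Combining via the triangle inequality gives
\[
d(x, S) \;\leq\; d(x, c^*) + d(c^*, s) \;\leq\; 6\alpha\, r_k(x) + \alpha\, r_k(x) \;=\; 7\alpha\, r_k(x),
\]
which is $O(\alpha)\, r_k(x)$ as desired, proving the lemma.

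I expect the main (indeed the only) obstacle to be establishing $r_k(c^*) \leq r_k(x)$. It does not follow from the axiomatic conditions C-\ref{enum:critical-1}, C-\ref{enum:critical-3} alone---those would yield only $r_k(c^*) \leq (6\alpha+1)\, r_k(x)$ (via triangle inequality applied to a witness of $r_k(x)$), producing a weaker $O(\alpha^2)$-fairness bound---so the proof crucially leans on the greedy processing order of Algorithm~\ref{alg:critical-balls}. Once that property is in hand, the remainder is a straightforward two-edge triangle-inequality argument.
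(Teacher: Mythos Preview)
Your proposal is correct and follows essentially the same approach as the paper: both identify the first critical center $c^*$ that covered $x$ in Algorithm~\ref{alg:critical-balls}, use the greedy processing order to conclude $r_k(c^*)\le r_k(x)$, and then combine $d(x,c^*)\le 6\alpha r_k(x)$ with $d(c^*,s)\le \alpha r_k(c^*)$ via the triangle inequality to obtain the $7\alpha$ bound. Your additional remark that the axiomatic properties C-\ref{enum:critical-1}, C-\ref{enum:critical-3} alone would only yield $O(\alpha^2)$-fairness is a nice clarification of why the greedy order is essential, though the paper does not spell this out.
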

\begin{proof}
Consider a point $x\in P$ and let $c_x$ be the first center in $\sC^*$ that covers $x$. Note that since we are adding centers to $\sC^*$ in a non-decreasing order of the fair radius values,
\begin{align}\label{eq:radius-monotonicity}
r_k(c_x) \leq r_k(x)
\end{align} 
Moreover, let $s_x$ be a center in $S$ that belongs to $B_x = B(c_x, \alpha r_k(c_x))$. Then,
\begin{align*}
d(x, s_x) &\leq d(x, c_x) + d(c_x, s_x) &&\rhd \text{by triangle inequality} \\
		   &\leq 6\alpha r_k(x) + \alpha r_k(c_x) &&\rhd \text{by Property \textit{C-\ref{enum:critical-1}} and $s_x \in B_x$} \\
		   &\leq 7\alpha r_k(x) &&\rhd \text{by Eq.~\eqref{eq:radius-monotonicity}}
\end{align*}
Hence, $S$ is a $(7\alpha)$-fair $k$-clustering. 
\end{proof}
\section{Analysis of Local Search Algorithm for $\alpha$-Fair $k$-Median}\label{sec:k-median}
In this section, we analyze our proposed {\em local search} algorithm for the $\alpha$-fair $k$-median clustering. 
In Section~\ref{sec:local-search-median}, we adopt the analysis of the local search algorithm for the ``vanilla'' $k$-median~\cite{arya2004local} to the fair $k$-median problem via the mapping and covering introduced in Section~\ref{sec:mapping-covering} and show that the local search approach achieves an $(O(1), O(1))$-approximation for $\alpha$-fair $k$-clustering. Similarly, in Section~\ref{sec:general-cost}, we follow the local search analysis of~\cite{kanungo2004local,gupta2008simpler} and use our new mapping and covering constructions to prove an $(O(p), O(1))$-approximation for $\alpha$-fair $k$-clustering with respect to the general $\ell_p$ norm cost function.

\subsection{Bounded Mapping and Covering}\label{sec:mapping-covering} 
The analysis of the local search of ``vanilla'' $k$-median (and similarly $k$-means) relies on the existence of a mapping between any set of $(t,\eps)$-stable centers and the set of optimal centers. 

We use $O$ and $S$ to respectively denote an optimal set of centers for $\alpha$-fair $k$-median of $P$ and a set of $(t, \eps)$-stable centers of $P$ with respect to the critical balls $\sB$. Note that since not all sets of $k$ points in $P$ are feasible centers with respect to $\sB$, we have to deal with extra constraints when we design a mapping $\pi: O \rightarrow S$. 

Next we list the desired properties for the mapping $\pi$ and introduce a covering $\sQ$ of the edges in $\pi$ with certain properties that help us to bound the approximation guarantee of our algorithm. For simplicity, we assume that the sets $S$ and $O$ are disjoint. We remark that if $S\cap O$ is non empty, then we can simply ignore the centers in $S\cap O$ along with their clusters in the optimal solution at no cost. 
\begin{definition}[$\Delta$-bounded mapping]\label{def:bounded-mapping}
Given a pair of sets of points $S, O$ in a metric space $(X, d)$, we say a mapping $\pi: O \rightarrow S$ is $\Delta$-bounded if it satisfies the following properties:
\begin{enumerate}[{D}-1]
	\item\label{enum:mapping-1} Each center in $O$ is mapped to exactly one center in $S$.
	\item\label{enum:mapping-2} For all $s\in S$, at most $\Delta$ centers in $O$ are mapped to $s$.
\end{enumerate}
\end{definition}

\begin{definition}[$(t, \gamma)$-bounded covering]\label{def:bounded-partition}
Let $\sB = \set{B_1, \cdots, B_\ell}$ be a set of critical balls for $P$. We define $\sQ = \set{Q_1, \cdots, Q_m}$ as a covering of all edges in $\pi: O\rightarrow S$, where each $Q_i$ is a subset of the edges $(o,\pi(o))$ such that their union covers all edges of the mapping. We refer to each $Q_i$ as a partition.
For each partition $Q\in \sQ$, let $O(Q)$ denote the set of endpoints in $Q$ that belong to $O$, and let $S(Q)$ denote the set of endpoints in $Q$ that belong to $S$. We say that the covering $\sQ$ is $(t, \gamma)$-bounded if it satisfies the following properties:
\begin{enumerate}[{E}-1]
	\item\label{enum:partition-0} Each edge $(o,\pi(o))$ appears in at least one and at most $\gamma$ partitions in $\sQ$. 
	\item\label{enum:partition-1} Each partition $Q$ is a set of at most $t$ disjoint edges. In other words, (i) for each pair of $o_1\neq o_2 \in O(Q)$, $\pi(o_1) \neq \pi(o_2)$ and (ii) $|O(Q)| = |S(Q)| \leq t$. 
	\item\label{enum:partition-2} For each partition $Q \in \sQ$ and each critical ball $B_j \in \sB$, $|\big( (B_j \cap S) \setminus S(Q) \big) \cup (B_j \cap O(Q))| \geq 1$. In words, by performing the swaps corresponding to the edges in $Q$, each ball $B_j$ will have at least one center in $(S\setminus S(Q)) \cup O(Q)$. 
	\item\label{enum:partition-3} Given a partition $Q$, for all pair $o \in O(Q)$ and $o'\in O\setminus O(Q)$, $\pi(o) \neq \nn(o')$.
\end{enumerate}
\end{definition}

\subsection{Analysis of Local Search}\label{sec:local-search-median}
In Sections~\ref{sec:mapping} and~\ref{sec:covering}, we prove the main result of this section which says that given a pair of sets of centers $O$ and $S$, we can find an $O(1)$-bounded mapping $\pi$ together with a $(O(1), O(1))$-bounded covering $\sQ$ of the edges in $\pi$. 
In the following we show that if there exist such bounded mapping and covering, then the $k$-median cost of $P$ using a $(t,\eps)$-stable set of centers $S$ is within a constant factor of an optimal $\alpha$-fair $k$-median of $P$ (i.e., clustering using $O$).    

\begin{lemma}\label{lem:constant-approximate-stable}
Consider a set of points $P$ in a metric space $(X, d)$ and let $S$ be a set of $(t, \eps)$-stable centers. Suppose that there exists a pair of $\Delta$-bounded mapping $\pi: O\rightarrow S$ and $(t, \gamma)$-bounded covering $\sQ$ of the edges in $\pi$. If $\eps \leq 1/(2\gamma k)$, then $\cost(S) \leq  2\gamma\cdot (2\Delta +1) \cdot \opt$ where $\opt$ denotes the cost of an optimal $\alpha$-fair $k$-median of $P$ (i.e., the $k$-median cost using $O$).   
\end{lemma}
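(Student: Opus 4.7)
The plan is to adapt the multi-swap local search analysis of~\cite{arya2004local}, restricting the test swaps to those that respect feasibility w.r.t.\ $\sB$. For each $Q\in\sQ$ I form the candidate center set $S_Q := (S\setminus S(Q))\cup O(Q)$. Property \textit{E-\ref{enum:partition-1}} gives $|S(Q)|=|O(Q)|\le t$, so $S\to S_Q$ is a legal swap of size at most $t$, and property \textit{E-\ref{enum:partition-2}} guarantees that $S_Q$ is feasible w.r.t.\ $\sB$. Consequently $(t,\eps)$-stability of $S$ rules out a significant improvement, yielding $\cost(S_Q)-\cost(S)\ge -\eps\,\cost(S)$ for every $Q\in\sQ$.

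I then upper bound $\cost(S_Q)$ by an explicit reassignment. For $x\in P$ let $o(x)\in O$ and $s(x)\in S$ denote its closest centers in the optimal and current solutions. Route $x$ to: (i) $o(x)$ if $o(x)\in O(Q)$; (ii) $s(x)$ if $o(x)\notin O(Q)$ and $s(x)\notin S(Q)$; and (iii) $\pi(o(x))$ otherwise. Property \textit{E-\ref{enum:partition-3}} is critical here: it guarantees $\pi(o(x))\notin S(Q)$ whenever $o(x)\notin O(Q)$, so the third-case target indeed lies in $S_Q$. Using the triangle inequality together with $d(o(x),\pi(o(x)))\le d(o(x),s(x))$ (which holds when $\pi$ is taken to be the nearest-neighbor mapping in the construction), the extra cost on $x$ in case (iii) is at most $2\,d(x,o(x))$. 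Summing the three cases yields the per-swap upper bound
\[
\cost(S_Q)-\cost(S)\;\le\;\sum_{x:\,o(x)\in O(Q)}\bigl(d(x,o(x))-d(x,s(x))\bigr)\;+\;2\sum_{x:\,s(x)\in S(Q),\,o(x)\notin O(Q)} d(x,o(x)).
\]

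To conclude I sum both inequalities over $Q\in\sQ$ and double count. By \textit{E-\ref{enum:partition-0}} each edge of $\pi$ lies in at most $\gamma$ partitions, so the first term aggregates to at most $\gamma(\opt-\cost(S))$. For the second term, $x$ contributes only when $s(x)\in S(Q)$; by \textit{D-\ref{enum:mapping-2}} at most $\Delta$ edges of $\pi$ enter $s(x)$, each in at most $\gamma$ partitions, so $x$ contributes in at most $\Delta\gamma$ swaps, totalling at most $2\Delta\gamma\,\opt$. Since every partition is nonempty and the total edge-multiplicity across $\sQ$ is at most $\gamma k$, we have $|\sQ|\le\gamma k$; together with $\eps\le 1/(2\gamma k)$ this gives $|\sQ|\eps\le 1/2$. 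Combining,
\[
-\tfrac{1}{2}\cost(S)\;\le\;-|\sQ|\,\eps\,\cost(S)\;\le\;\gamma(2\Delta+1)\,\opt\;-\;\gamma\,\cost(S),
\]
and rearranging gives $\cost(S)\le 2\gamma(2\Delta+1)\,\opt$, as claimed. The main obstacle is the third-case reassignment: the natural target $\pi(o(x))$ could a priori be among the removed centers, so property \textit{E-\ref{enum:partition-3}} of the covering is exactly what makes the argument go through; engineering a covering that simultaneously satisfies this with small $\Delta$ and $\gamma$ is the combinatorial heart of the proof and is deferred to the construction of $\pi$ and $\sQ$.
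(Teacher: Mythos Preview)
Your overall structure matches the paper's proof, but the reassignment in case~(iii) has a genuine error. You reroute $x$ to $\pi(o(x))$ and invoke Property~\textit{E-\ref{enum:partition-3}} to claim $\pi(o(x))\notin S(Q)$. That is not what \textit{E-\ref{enum:partition-3}} says: it asserts $\pi(o)\neq \nn_S(o')$ for $o\in O(Q)$ and $o'\notin O(Q)$, i.e.\ it guarantees $\nn_S(o(x))\notin S(Q)$, \emph{not} $\pi(o(x))\notin S(Q)$. Since $\pi$ is only $\Delta$-bounded and not injective, $\pi(o(x))$ may very well coincide with $\pi(o)$ for some $o\in O(Q)$ and hence lie in $S(Q)$. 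Your parenthetical fix (``when $\pi$ is taken to be the nearest-neighbor mapping'') does not rescue the argument: the lemma is stated for an arbitrary $\Delta$-bounded $\pi$, and the mapping actually constructed later (Lemma~\ref{lem:stable-mapping}) is \emph{not} the nearest-neighbor map whenever $\nn_S(o)\in\n_{\geq 2}$. The correct target---and this is exactly what the paper does---is $s_{o(x)}:=\nn_S(o(x))$: then \textit{E-\ref{enum:partition-3}} gives $s_{o(x)}\notin S(Q)$, and the needed inequality $d(o(x),s_{o(x)})\le d(o(x),s(x))$ is automatic.

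There is also a smaller slip in your aggregation of the first term. The per-$o$ contribution $\sum_{x:o(x)=o}\bigl(d(x,o)-d(x,s(x))\bigr)$ can be positive or negative, and its multiplicity across $\sQ$ lies between $1$ and $\gamma$; you therefore cannot bound the aggregate by $\gamma(\opt-\cost(S))$. Splitting the positive and negative parts yields the correct $\gamma\,\opt-\cost(S)$ (positive part counted at most $\gamma$ times, negative part at least once). With this correction the final inequality becomes $-\tfrac12\cost(S)\le \gamma(2\Delta+1)\opt-\cost(S)$, which rearranges to the stated $\cost(S)\le 2\gamma(2\Delta+1)\opt$; your displayed inequality with $-\gamma\cost(S)$ on the right would instead give the (unjustified) stronger bound $\cost(S)\le \tfrac{\gamma}{\gamma-1/2}(2\Delta+1)\opt$.
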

\begin{proof}
Consider an arbitrary partition $Q \in \sQ$. 
Here, we bound the difference between the $k$-median cost of clustering $P$ with $S$ as centers and $S_Q = (S\cup O(Q))\setminus{S(Q)}$ as centers. 

Let $S_j\subseteq P$ denote the cluster of points that are mapped to $s_j$ in the optimal $k$-median clustering with $S$ as centers and let $O_i\subseteq P$ denote the cluster of points that are mapped to $o_i$ in the optimal $k$-median clustering with $O$ as centers. Moreover, we use $\sO_Q$ and $\sS_Q$ respectively to denote $\bigcup_{o_i\in O(Q)} O_i$ and $\bigcup_{s_j\in S(Q)} S_j$.
\begin{align}
\sum_{x\in P} d(x, S_Q) - d(x, S) 
	&\leq \sum_{x\in \sO_Q} d(x, S_Q) - d(x, S) + \sum_{x\in \sS_Q \setminus \sO_Q} d(x, S_Q)  -  d(x, S)  \nonumber \\
	&\leq \sum_{x\in \sO_Q} d(x, o_x) - d(x, s_x) \qquad\rhd \text{$s_x = \nn_S(x), o_x=\nn_O(x)$} \nonumber \\ 
	&+ \sum_{x\in \sS_Q \setminus \sO_Q} d(x, s_{o_x}) - d(x, s_x) \qquad\rhd s_{o_x}= \nn_S(o_x) \label{eq:cost_of_swap}
\end{align}
Next, we bound the the value of $d(x, s_{o_x})$ for each $x\in \sS_Q\setminus \sO_Q$.
Note that since $o_x \notin O(Q)$, by Property \textit{E-\ref{enum:partition-3}} of ($\pi, \sQ$), none of the optimal centers in $O(Q)$ is mapped to the nearest neighbor of $o_x$ in $S$. In other words, $s_{o_x} \notin S(Q)$ and in particular $s_{o_x} \neq s_j$ (see Figure~\ref{fig:single_cost}).
\begin{align}
d(x, s_{o_x}) 
	&\leq d(x, o_x) + d(o_x, s_{o_x}) \nonumber \\
	&\leq d(x, o_x) + d(o_x, s_x) \quad &&\rhd \text{by the definition of $s_{o_x}$ and since $s_x \neq s_{o_x}$} \nonumber \\
	&\leq d(x, o_x) + d(o_x, x) + d(x, s_x) \quad &&\rhd \text{by the triangle inequality} \nonumber \\
	&= 2d(x, o_x) + d(x, s_x) \label{eq:closest-stable-center}
\end{align}
Now, by summing over all partitions $Q\in \sQ$,
\begin{align}
\sum_{Q\in \sQ} \sum_{x\in P} d(x, S_Q) - d(x, S) 
&\leq \sum_{Q\in \sQ} \Big( \sum_{x\in \sO_Q} \big(d(x, o_i) - d(x, s_x) \big) + \sum_{x\in \sS_Q \setminus \sO_Q} \big( d(x, s_{o_x}) - d(x, s_x)\big) \Big) \nonumber \\
&\leq \gamma\cdot \cost(O) - \cost(S) + \sum_{Q\in \sQ} \sum_{x\in \sS_Q \setminus \sO_Q} \big( d(x, s_{o_x}) - d(x, s_x)\big) \nonumber \\
&\leq \gamma\cdot \cost(O) - \cost(S) + \gamma\cdot \Delta \cdot \sum_{x\in P} d(x, s_{o_x}) - d(x, s_x) \nonumber \\
&\leq \gamma\cdot \cost(O) - \cost(S) + \gamma\cdot \Delta \cdot \sum_{x\in P} 2d(x, o_x) \quad\rhd\text{by Eq.~\eqref{eq:closest-stable-center}} \nonumber \\
&\leq \gamma\cdot(2\Delta+1)\cdot \cost(O) - \cost(S)\label{eq:partition-cost}
\end{align}
where the first inequality follows from Eq.~\eqref{eq:cost_of_swap}. 
The second inequality holds since each $o\in O$ is an endpoint of exactly one edge in the mapping $\pi$ (by Property~\textit{D-\ref{enum:mapping-1}}) and each edge of the mapping $\pi$ appears in at least one partition and at most $\gamma$ partitions of $\sQ$ (by Property~\textit{E-\ref{enum:partition-0}}). The third inequality holds since for all $x\in P$, $d(x, s_{o_x}) - d(x, s_x) \geq 0$ and each point $s\in S$ is an endpoint of at most $\Delta$ edges in the mapping $\pi$ (by Property~\textit{D-\ref{enum:mapping-2}}) and each edge of the mapping appears in at most $\gamma$ partitions of $\sQ$ (by Property~\textit{E-\ref{enum:partition-0}}).  

Next, since $S$ is a $(t,\eps)$-stable set of centers, for each $Q\in \sQ$, $\sum_{x\in P} d(x, S_Q) - d(x, S) \geq -\eps \cdot \cost(S)$, 

\begin{align*}
-\eps \cdot |\sQ|\cdot \cost(S)  
&\leq \sum_{Q\in \sQ} \sum_{x\in P} d(x, S_Q) - d(x, S) \\
&\leq \gamma\cdot(2\Delta+1)\cdot \cost(O) - \cost(S) &&\rhd\text{by Eq.~\eqref{eq:partition-cost}}
\end{align*}
which implies that $\cost(S) \leq {\gamma(2\Delta +1) \over 1 - \eps |\sQ|} \cost(O)$. Since $|\sQ| \leq k \cdot \gamma$ and $\eps \leq 1/(2k\gamma)$, the $k$-median cost of clustering with the set $S$ as centers is at most $2\gamma\cdot(2\Delta+1)\cdot \cost(O) = 2\gamma\cdot(2\Delta+1)\cdot \opt$. 
\end{proof}
\begin{figure}[t]
\center
\includegraphics[width=0.35\textwidth]{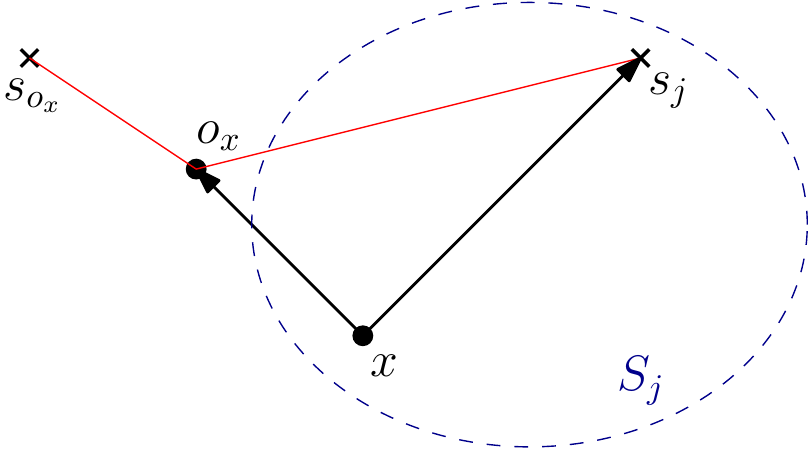}
\caption{The point $x$ belongs to $S_j$ which is a subset of $\sS_Q\setminus \sO_Q$ for a partition $Q$ in the covering $\sQ$. Then, $o_x$ does not belong to $O(Q)$ and in particular $NN_{S}(o_x)$ is different from $s_j$.}\label{fig:single_cost}
\end{figure}

\subsection{Construction of Mapping $\pi$}\label{sec:mapping}
Let us start by a few notations and claims that we will use to construct our mapping.
\paragraph{Notations.} We use $\n$ to denote the subset of points in $S$ that are the nearest neighbors of a point in $O$; $\n = \{s\in S: \exists o\in O \text{ s.t. } \nn_S(o) = s\}$. Moreover, for any value of $i\geq 0$, $\n_i$ denotes the subset of $S$ that are the nearest neighbors of exactly $i$ points in $O$; $\n_i = \{s\in S: |\{o\in O : \nn_S(o) = s\}| = i \}$. 
We also write $\n_{\geq i}$ to denote the set $\{s\in S: |\{o\in O : \nn_S(o) = s\}| \geq i \}$.

\begin{definition}[safe ball/point/edge]\label{def:safe}
A ball $B \in \sB$ is safe if 
\begin{enumerate}
\item\label{cond:ball-1} either it has at least two points from $S$, 
\item\label{cond:ball-2} or it contains a point $o\in B$ that has a unique nearest neighbor (i.e., $\nn_S(o)\in \n_1$).
\end{enumerate}
A point $s \in S$ is safe if $s \notin \n_{\geq 2}$ and
\begin{enumerate}
\item\label{cond:point-1} either $s$ is not contained in any ball of $\sB$, 
\item\label{cond:point-2} or the ball containing $s$ is safe.
\end{enumerate}
Moreover, for a pair of vertices $o\in O$ and $s\in S$ we say that the edge $(o, s)$ is safe if 
\begin{enumerate}
\item\label{cond:edge-1} either $s$ is a safe point, 
\item\label{cond:edge-2} or both $s$ and $o$ belong to the same ball in $\sB$.
\end{enumerate}
Lastly, when a ball or point or edge is not safe we denote it as unsafe.
\end{definition}

\begin{figure}[!h]
\center
\includegraphics[width=0.35\textwidth]{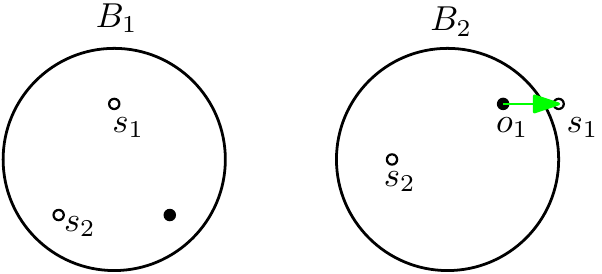}
\caption{$B_1$ and $B_2$ are safe balls.}\label{fig:safe-balls}
\end{figure}

\begin{definition}[$s_{\nn}$ and $s_{\ins}$]
For each ball $B \in \sB$, define $s_{\nn}(B)$ to be the point in $S$ that is the nearest neighbor of an arbitrary point $o\in B\cap O$. Similarly, for each ball $B\in \sB$, let $s_{\ins}(B)$ be an arbitrary point $s\in B\cap S$. 
\end{definition}
\begin{observation}
For any pair of balls $B_1, B_2 \in \sB$, $s_{\nn}(B_1) \neq s_{\nn}(B_2)$.
\end{observation}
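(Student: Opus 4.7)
The plan is to derive this observation as an essentially immediate corollary of Claim~\ref{clm:nn}. First, I would verify that $s_{\nn}(B)$ is well-defined for every critical ball $B \in \sB$: since $O$ is a set of centers for an $\alpha$-fair $k$-median clustering of $P$, by the discussion preceding Lemma~\ref{lem:clustering-with-partition-fair} the set $O$ is feasible with respect to the critical balls $\sB$. In particular, every $B \in \sB$ contains at least one point of $O$, so one can select some $o_B \in B \cap O$ in order to evaluate $\nn_S(o_B) = s_{\nn}(B)$.

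Next, for any two distinct critical balls $B_1 \neq B_2$ in $\sB$, I would let $o_1 \in B_1 \cap O$ and $o_2 \in B_2 \cap O$ be the representative points chosen in the definitions of $s_{\nn}(B_1) = \nn_S(o_1)$ and $s_{\nn}(B_2) = \nn_S(o_2)$. Since $S$ is feasible with respect to $\sB$ (this is maintained as an invariant of the local search update in Algorithm~\ref{alg:local-search}), the hypotheses of Claim~\ref{clm:nn} apply to both $o_1$ and $o_2$. The second half of Claim~\ref{clm:nn} then gives $\nn_S(o_1) \neq \nn_S(o_2)$, which is exactly the conclusion $s_{\nn}(B_1) \neq s_{\nn}(B_2)$.

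There is essentially no obstacle: the content of the observation is directly captured by Claim~\ref{clm:nn}, and the only subtle point is confirming that $B \cap O$ is nonempty for each $B \in \sB$ so that the definition of $s_{\nn}(B)$ refers to an actual point; this follows from $\alpha$-fair feasibility of $O$ implying feasibility with respect to $\sB$. No additional geometric estimate or case analysis is needed.
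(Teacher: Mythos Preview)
Your proposal is correct and follows essentially the same approach as the paper, which simply states that the observation follows from Claim~\ref{clm:nn}. You have merely spelled out the details: well-definedness of $s_{\nn}(B)$ via feasibility of $O$, and then the direct application of the second part of Claim~\ref{clm:nn} to the representative points $o_1,o_2$.
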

The observation above follows from Claim~\ref{clm:nn}.
\begin{observation}\label{obr:unsafe}
The set of unsafe points that are in $\n_0$ is a subset of $\bigcup_{B\in \sB} s_{\ins}(B)$.
\end{observation}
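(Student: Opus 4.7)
The plan is to unpack the definitions carefully and show that if $s \in S$ lies in $\n_0$ and is unsafe, then its containing critical ball $B$ must contain only $s$ from $S$, forcing $s = s_{\ins}(B)$.

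First I would observe that since $s \in \n_0$, no point of $O$ has $s$ as its nearest neighbor in $S$, so in particular $s \notin \n_{\geq 2}$. Consulting Definition~\ref{def:safe}, for $s$ to be unsafe while $s \notin \n_{\geq 2}$, the disjunction in the point-safety clause must fail: conditions \ref{cond:point-1} and \ref{cond:point-2} both fail. This means $s$ is contained in some critical ball $B \in \sB$, and $B$ itself is unsafe.

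Next I would use the fact that $B$ is unsafe to bound $|B \cap S|$. Condition \ref{cond:ball-1} in the ball-safety definition fails, so $|B \cap S| \leq 1$. On the other hand, since $S$ is feasible with respect to $\sB$, every critical ball intersects $S$, so $|B \cap S| \geq 1$. Combining these gives $B \cap S = \{s\}$, and by the definition of $s_{\ins}(B)$ as an arbitrary element of $B \cap S$, we conclude $s_{\ins}(B) = s$.

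The argument is essentially a definition-chase: the main subtlety is noticing that membership in $\n_0$ immediately discharges the $\n_{\geq 2}$ clause, so unsafety must come from being trapped inside an unsafe ball, after which the feasibility of $S$ forces uniqueness within that ball. No nontrivial geometric estimate (such as invoking Claim~\ref{clm:nn}) is needed here; the earlier observation about distinctness of $s_{\nn}(B)$ across balls is a separate matter. I therefore expect the proof to be only a few lines, and the only thing to be careful about is to explicitly cite feasibility of $S$ with respect to $\sB$ when deducing $|B \cap S| \geq 1$.
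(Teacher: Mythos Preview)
Your proposal is correct and follows essentially the same argument as the paper's proof: both observe that an unsafe $s\in\n_0$ must lie in an unsafe ball $B$, whence $|B\cap S|=1$ and $s=s_{\ins}(B)$. Your write-up is slightly more explicit in discharging the $\n_{\geq 2}$ clause and in citing feasibility of $S$ to get $|B\cap S|\geq 1$, but the logic is identical.
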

\begin{proof}
Consider an unsafe point $s\in \n_0$. By the definition of safe points, $s$ should belong to a ball $B\in \sB$, and further (by the definition of safe balls) this ball should contain exactly one point from $S$ in it, i.e., $\{s\}=S\cap B$. But in this case $s$ is the unique choice for $s_{\ins}(B)$ and in fact $s=s_{\ins}(B)$. 
\end{proof}

\begin{observation}\label{obr:non-neighbor}
The number of points in $S$ that are not the nearest neighbor of any point in $O$ is large enough, i.e.,  
\begin{align}
|\n_0| = \sum_{s\in S} \max\{0, |\nn^{-1}(s)| - 1\}. 
\end{align}  
\end{observation}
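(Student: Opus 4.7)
The plan is to prove this by a simple double-counting argument, observing that both sides count the same quantity via the equality $|O|=|S|=k$.

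First, I would introduce the shorthand $d_s := |\nn^{-1}(s)|$ for each $s \in S$, so that $d_s$ is the number of optimal centers whose nearest neighbor in $S$ is $s$. Under this notation, $\n_i = \{s \in S : d_s = i\}$, and in particular $|\n_0| = |\{s \in S : d_s = 0\}|$. Since every $o \in O$ has exactly one nearest neighbor in $S$, the map $\nn : O \to S$ is well-defined and single-valued, so
\begin{equation*}
\sum_{s \in S} d_s \;=\; |O| \;=\; k,
\end{equation*}
where we use the standing assumption in this section that $|O|=|S|=k$ (both $O$ and $S$ are sets of $k$ centers).

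Next, I would evaluate the right-hand side directly. Splitting $S$ according to whether $d_s = 0$ or $d_s \geq 1$,
\begin{equation*}
\sum_{s \in S} \max\{0, d_s - 1\} \;=\; \sum_{s \in S : d_s \geq 1}(d_s - 1) \;=\; \Bigl(\sum_{s \in S} d_s\Bigr) - |\{s \in S : d_s \geq 1\}|.
\end{equation*}
Since $\{s : d_s \geq 1\}$ and $\{s : d_s = 0\} = \n_0$ partition $S$, we have $|\{s : d_s \geq 1\}| = |S| - |\n_0| = k - |\n_0|$. Plugging in $\sum_s d_s = k$ yields $\sum_{s \in S} \max\{0, d_s - 1\} = k - (k - |\n_0|) = |\n_0|$, as desired.

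There is no real obstacle here: the identity is purely a counting fact that uses only $|O|=|S|=k$ and the fact that $\nn : O \to S$ is a function. The only thing worth being careful about is making the cardinality assumption $|O|=|S|=k$ explicit, since the identity would fail if $O$ and $S$ had different sizes.
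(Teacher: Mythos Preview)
Your proof is correct and is exactly the argument the paper has in mind: the paper's own proof is the single sentence ``It simply follows from the fact that $|S|=|O|$,'' and your double-counting via $\sum_{s\in S} d_s = |O| = |S|$ is precisely the intended elaboration of that line.
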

\begin{proof}
It simply follows from the fact that $|S|=|O|$.
\end{proof}
The following lemma shows that there exist enough safe points in $\n_0$. 
\begin{lemma}
\label{lem:safe-large}
Let $\sF$ denote the set of safe points in $\n_0$. Then, 
\begin{align}\label{eq:safe-large}
|\sF| \geq \sum_{s\in S} \max\{0, |\nn^{-1}(s)| - 2\}. 
\end{align}  
\end{lemma}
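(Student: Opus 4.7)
The plan is to bound $|\n_0 \setminus \sF|$ (the number of unsafe points lying in $\n_0$) by $|\n_{\geq 2}|$ via an injective charging, and then combine with Observation~\ref{obr:non-neighbor} and a routine arithmetic identity to recover the claimed lower bound on $|\sF|$.

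First I would pin down the structure of an unsafe point $s\in \n_0$. Because $s\in\n_0\not\subseteq\n_{\geq 2}$, Definition~\ref{def:safe} tells us that $s$ is unsafe precisely when it lies in some ball $B\in\sB$ (otherwise Condition~\ref{cond:point-1} would apply) and that ball is itself unsafe (otherwise Condition~\ref{cond:point-2} would apply). Since $B$ is unsafe it fails Condition~\ref{cond:ball-1}, so $|B\cap S|\le 1$; combined with the feasibility of $S$ with respect to $\sB$, this forces $|B\cap S|=1$, and hence $s=s_{\ins}(B)$, giving a more usable form of Observation~\ref{obr:unsafe}.

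Next I would charge each such unsafe ball to a distinct element of $\n_{\geq 2}$. Since $O$ is $\alpha$-fair, applying the fairness guarantee to the center $c^*$ of each critical ball shows that $O$ is feasible with respect to $\sB$, and in particular $B\cap O\neq\emptyset$. Pick any $o_B\in B\cap O$ and set $\sigma(B):=\nn_S(o_B)$. Then $\sigma(B)\in\n_{\geq 1}$ because $o_B$ is one of its inverse images; and $\sigma(B)\in\n_{\geq 2}$ because the failure of Condition~\ref{cond:ball-2} for the unsafe ball $B$ says that no $o\in B\cap O$ has $\nn_S(o)\in\n_1$. Injectivity of $\sigma$ across distinct unsafe balls is exactly the second assertion of Claim~\ref{clm:nn}: if $B_1\neq B_2$ are critical balls and $o_{B_i}\in B_i$, then $\nn_S(o_{B_1})\neq \nn_S(o_{B_2})$.

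Combining these observations gives $|\n_0\setminus \sF|\le|\n_{\geq 2}|$. Using Observation~\ref{obr:non-neighbor}, $|\n_0|=\sum_{s\in S}\max\{0,\,|\nn^{-1}(s)|-1\}$, and $|\n_{\geq 2}|=\sum_{s\in S}[\,|\nn^{-1}(s)|\ge 2\,]$; the elementary identity $\max\{0,a-1\}-[a\ge 2]=\max\{0,a-2\}$ for nonnegative integers $a$ then yields
\[
|\sF|\;\ge\;|\n_0|-|\n_{\geq 2}|\;=\;\sum_{s\in S}\max\{0,\,|\nn^{-1}(s)|-2\},
\]
as desired. The main obstacle is the first step: one must carefully parse the layered definitions to conclude that an unsafe point in $\n_0$ is precisely the unique $S$-point of some unsafe ball. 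Once this reduction is in place, the injective charge into $\n_{\geq 2}$ is essentially forced by Claim~\ref{clm:nn} together with the failure of Condition~\ref{cond:ball-2}, and the remaining arithmetic is routine.
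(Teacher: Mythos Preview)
Your proof is correct and follows essentially the same approach as the paper: both reduce to showing $|\n_0\setminus\sF|\le|\n_{\ge 2}|$ by associating each unsafe point in $\n_0$ with the unique unsafe ball containing it and then mapping that ball to $s_{\nn}(B)=\nn_S(o_B)\in\n_{\ge 2}$, with injectivity coming from Claim~\ref{clm:nn}. Your presentation as an explicit injective charging is slightly cleaner than the paper's summation-and-case-split, and you are more careful in justifying $B\cap O\neq\emptyset$ via the $\alpha$-fairness of $O$, which the paper leaves implicit in its definition of $s_{\nn}(B)$.
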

\begin{proof}
Let $\sU$ denote the set of unsafe points in $\n_0$. First we prove the following statement. The number of points in $\sU$ is smaller than the number of points in $\n_{\geq 2}$:
\begin{align}\label{eq:unsafe-small}
|\sU| \leq |\n_{\geq 2}|
\end{align}
Before showing that Eq.~\eqref{eq:unsafe-small} holds, we show that Eq.~\eqref{eq:unsafe-small} suffices to prove the lemma (i.e., Eq.~\eqref{eq:safe-large}). 
By the definition, a point $s$ is in $\sF$ if $s\notin \sU \cup \n$. 

\begin{align*}
|\sF| 
&= |S| - |\n| - |\sU| \\
&\geq |S| - |\n| - |\n_{\geq 2}| &&\rhd\text{Eq.~\eqref{eq:unsafe-small}}\\
&= |S| - |\{s\in S: |\nn^{-1}(s)| \geq 1\}| - |\{s\in S: |\nn^{-1}(s)| \geq 2\}| \\
&= |O| - |\{s\in S: |\nn^{-1}(s)| \geq 1\}| - |\{s\in S: |\nn^{-1}(s)| \geq 2\}| \\
&= \sum_{s\in S} \big(|\nn^{-1}(s)| - \boldsymbol{1}_{\{|\nn^{-1}(s)| \geq 1\}} - \boldsymbol{1}_{\{|\nn^{-1}(s)| \geq 2\}}\big) \\
& = \sum_{s\in S} \max(0, |\nn^{-1}(s)|-2)
\end{align*}
Now, we show that $|\sU| \leq |\n_{\geq 2}|$.
\begin{align*}
|\sU| 
&= \sum_{B \in \sB} |\sU \cap s_{\ins}(B)|  &&\rhd\text{Observation~\ref{obr:unsafe}} \\
&= \sum_{B: s_{\nn}(B) \in \n_1} |\sU \cap s_{\ins}(B)| +  \sum_{B: s_{\nn}(B) \in \n_{\geq 2}} |\sU \cap s_{\ins}(B)| \\
&\leq \sum_{B: s_{\nn}(B) \in \n_1} |\sU \cap s_{\ins}(B)| +  | \{B: s_{\nn}(B) \in \n_{\geq 2}\} |
\end{align*}
Next, we show that for any $B$ such that $s_{\nn}(B)\in \n_1$, $s_{\ins}(B)\notin \sU$ which implies that $|\sU \cap s_{\ins}(B)| =0$. The proof is by case analysis:
\begin{itemize}
\item {$s_{\ins}(B) \in \n$:} in this case $s_{\ins}(B)$ is not in $\sU$ since by definition $\sU$ contains the unsafe points in $\n_0$.
\item {$s_{\ins}(B)\notin \n$}: in this case, since $B$ contains a point $o\in O$ with a unique nearest neighbor in $S$, by condition~\ref{cond:ball-2} of safe balls, $B$ is a safe ball and since $s_{\ins}(B)\notin \n$ the point $s_{\ins}(B)$ is safe as well.
\end{itemize}
Thus,
\begin{align*}
|\sU| 
&\leq \sum_{B: s_{\nn}(B) \in \n_1} |\sU \cap s_{\ins}(B)| +  |\{B: s_{\nn}(B) \in \n_{\geq 2}\}| \\
&= |\{B: s_{\nn}(B) \in \n_{\geq 2}\}| \\
&\leq |\n_{\geq 2}|
\end{align*}
\end{proof}

\begin{figure}[t]
\center
\includegraphics[width=0.25\textwidth]{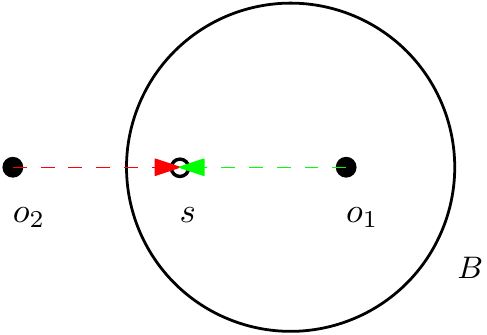}
\caption{$B$ is an unsafe ball and $s = \pi(o_1) = \pi(o_2)$. Moreover, $s$ is not the nearest neighbor of any points other than $o_1$ and $o_2$.}\label{fig:double-pointer}
\end{figure}

Next we will define a mapping $\pi: O \rightarrow S$ from which we can later derive a covering $\sQ$.
\begin{lemma}[$3$-bounded mapping]\label{lem:stable-mapping}
There exists a mapping $\pi: O \rightarrow S$ with the following properties:
\begin{enumerate}
\item{If $\nn_S(o) \in \n_1$, then $\pi(o) = \nn_S(o)$.}
\item{If $\nn_S(o) \in \n_{\geq 3}$, then the edge $(o, \pi(o))$ is safe and $\pi(o) \in \n_0$.}
\item{If $s$ is the nearest neighbor of exactly two distinct points $o_1, o_2$ in $O$, then $\pi(o_1) = \pi(o_2) \in \n_0$. Moreover, one of the following holds
\begin{enumerate}
\item each of the edges $(o_i,\pi(o_i))$ is safe. 
\item each of the edges $(o_i,\pi(o_i))$ is unsafe and goes to a ball that has a safe outgoing edge.
\item there exists an (unsafe) ball $B\in \sB$ such that $(o_1,\pi(o_1))$ belongs to $B$ and $o_2$ does not belong to any ball (see Figure~\ref{fig:double-pointer}).
\end{enumerate}
}
\item{For each $s$, $|\{o\in O: \pi(o) = s\}| \leq 3$.}
\end{enumerate}
\end{lemma}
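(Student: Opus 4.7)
The plan is to construct the mapping $\pi$ greedily in three phases, one per structural case in properties~1--3 of the statement, and then verify property~4 by a global counting check. The three ingredients we will rely on are the safe-target budget $|\sF|\geq\sum_{s\in S}\max\{0,|\nn^{-1}(s)|-2\}$ from Lemma~\ref{lem:safe-large}, the total size $|\n_0|=\sum_{s\in S}\max\{0,|\nn^{-1}(s)|-1\}$ from Observation~\ref{obr:non-neighbor}, and the characterization of unsafe $\n_0$-points as the unique $S$-members $s_{\ins}(B)$ of unsafe balls $B\in\sB$ from Observation~\ref{obr:unsafe}.

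Phase~1 takes care of property~1: for every $o\in O$ with $\nn_S(o)\in\n_1$, set $\pi(o):=\nn_S(o)$, contributing exactly one edge to each $s\in\n_1$. Phase~2 takes care of property~2: for each $s^*\in\n_{\geq 3}$ with $i:=|\nn^{-1}(s^*)|\geq 3$, reserve $\lceil i/3\rceil$ fresh safe targets in $\sF$ and distribute its $i$ preimages among them in batches of at most three. Since $\lceil i/3\rceil\leq i-2$ whenever $i\geq 3$, the aggregate demand $\sum_{s^*\in\n_{\geq 3}}\lceil i/3\rceil$ is bounded above by $\sum_{s^*\in\n_{\geq 3}}(i-2)\leq|\sF|$, so the reservation fits. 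Because every phase-2 target lies in $\sF$ and is therefore a safe point, every such edge is automatically safe by Definition~\ref{def:safe}, establishing property~2.

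Phase~3 then takes each $s^*\in\n_2$ with preimages $\{o_1,o_2\}$ and routes both to a common point of $\n_0$: preferentially to a still-unused element of $\sF$ (yielding case~(a) automatically since the target is safe) and, if $\sF$ has been exhausted, to some $s_{\ins}(B)$ inside an unsafe ball $B$ produced by Observation~\ref{obr:unsafe}. When the target is unsafe, we split into case~(b) (neither $o_i$ lies in $B$, but $B$ already carries a safe outgoing edge routed during phase~1 or phase~2, a feature inherited from the unsafeness of $B$ forcing $s_{\nn}(B)\in\n_{\geq 2}$) and case~(c) (exactly one of $o_1,o_2$ lies in $B$, so its edge is safe because both endpoints lie in $B$, while the other lies in no critical ball). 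The 3-boundedness of property~4 then follows by construction: each $s\in\n_1$ gets one edge from phase~1; each $f\in\sF$ gets at most three edges, since it is consumed either by a single batch of phase~2 (size $\leq 3$) or by a single pair of phase~3 (size $2$), never both; and each $s_{\ins}(B)$ used as an unsafe fall-back in phase~3 receives exactly two edges.

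The principal obstacle is the sub-case dichotomy in phase~3: once $\sF$ is depleted, one must verify that the remaining $\n_2$ pairs always admit a fall-back target whose structure is genuinely one of cases~(b) or~(c), and that case~(c) only arises with precisely the geometry stated in the lemma (one $o_i$ inside $B$, the other outside every critical ball). This requires a careful routing-history argument that tracks, as phase~2 proceeds, which unsafe balls have already acquired a safe outgoing edge (forcing later pairs into case~(b)) and which have not (forcing them into case~(c)), and establishes via pigeonhole that no unsafe ball ever hosts more than one $\n_2$ pair. The tight interplay between the $\sF$-budget, the unsafe-ball structure, and the order of routing is where the quantitative work of the proof lives.
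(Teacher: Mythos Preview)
Your three-phase outline mirrors the paper's construction closely, and phases~1 and~2 are fine. The genuine gap is in phase~3, specifically in the justification of case~(b). You claim that when the pair $\{o_1,o_2\}$ is routed to the unsafe target $s_{\ins}(B)$ with neither $o_i$ in $B$, then ``$B$ already carries a safe outgoing edge routed during phase~1 or phase~2, a feature inherited from the unsafeness of $B$ forcing $s_{\nn}(B)\in\n_{\geq 2}$.'' This inference is incorrect: unsafeness of $B$ only tells you that every $o\in B\cap O$ has $\nn_S(o)\in\n_{\geq 2}$, but $\n_{\geq 2}$ includes $\n_2$, and points with nearest neighbor in $\n_2$ are handled in phase~3, not phase~1 or~2. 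If every $o\in B\cap O$ happens to satisfy $\nn_S(o)\in\n_2$, then phases~1 and~2 produce no edge out of $B$ at all, and there is no reason the phase-3 edges out of $B$ (if any have been routed yet) land on safe targets rather than on other unsafe $s_{\ins}(B')$. Your arbitrary fall-back routing gives you no control here, so neither~(b) nor~(c) is forced.

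The paper closes exactly this gap by inserting a prioritization step before the arbitrary assignment: among the remaining $\n_2$-pairs, it first routes every pair $\{o_1,o_2\}$ for which some unsafe ball $B$ simultaneously contains a free $s_{\ins}(B)$ \emph{and} one of $o_1,o_2$, sending that pair into $B$. Only afterwards are the leftover pairs assigned arbitrarily. This ordering buys the following: if $s_{\ins}(B)$ is still free when the arbitrary stage begins, then no $o\in B\cap O$ can have its nearest neighbor in the still-unassigned part of $\n_2$ (otherwise the prioritization would have consumed $s_{\ins}(B)$). Hence every $o\in B\cap O$ was already mapped in phase~2 or in the safe-target part of phase~3, and in either case $(o,\pi(o))$ is a safe edge leaving $B$, yielding case~(b). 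Your proposal gestures at ``a careful routing-history argument'' but does not supply this prioritization, and without it the dichotomy between (b) and (c) simply fails. The pigeonhole claim you mention (``no unsafe ball ever hosts more than one $\n_2$ pair'') is neither true in general nor what is needed.
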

\begin{proof}
We design $\pi$ as follows. 
\paragraph{Step 1: $\nn_S(o) \in \n_1$.} In this case we set $\pi(o) = \nn_S(o)$. This guarantees the first property in the lemma.
\paragraph{Step 2: $\nn_S(o) \in \n_{\geq 3}$.} By Lemma~\ref{lem:safe-large} there exist enough safe points in $\n_0$ so that for each point $s\in \n_{\geq 3}$, we can allocate a set of safe points $F(s)$, such that i) $|F(s)| = |\nn^{-1}(s)| - 2$, and ii) $F(s)\cap F(s')=\emptyset$ for any $s'\neq s$.
Next we define the mapping $\pi$ for the set of points in $|\nn^{-1}(s)|$ and map them to the points in $F(s)$ such that the in-degree of each safe point in $F(s)$ is at most $|\nn^{-1}(s)|/ (|\nn^{-1}(s)|-2) \leq 3$. Note that for each $o'\in \nn^{-1}(s)$, $(o', \pi(o'))$ is a safe edge. This guarantees the second property in the lemma.

\begin{claim}\label{clm:non-neighbor-count}
Let $\sF_0$ denote the set of points in $\n_0$ for which some points in $O$ are mapped to by the end of step 2. Then, $|\n_0 \setminus \sF_0| \geq |\n_{\geq 2}|$.
\end{claim}
\begin{proof}
The step 2 of the construction of the mapping consumes exactly $\sum_{s\in S} \max(0, |\nn^{-1}(s)|-2)$ safe points in $\n_0$; $|\sF_0| = \sum_{s\in S} \max(0, |\nn^{-1}(s)|-2)$. On the other hand, by Observation~\ref{obr:non-neighbor}, $|\n_0| = \sum_{s\in S} \max(0, |\nn^{-1}(s)|-1)$. Hence, $|\n_0 \setminus \sF_0| \geq |\n_{\geq 2}|$.
\end{proof}

\paragraph{Step 3.a: $\nn_S(o) \in \n_2$.} 
First we assign these points to unused safe points (if there exists any) in a way so that if $\nn(o_1) = \nn(o_2)$ then $\pi(o_1) = \pi(o_2)$. In this case both $(o_1, \pi(o_1))$ and $(o_2, \pi(o_2))$ are safe and the degree of $\pi(o_1) = \pi(o_2)$ is exactly $2$ which corresponds to property 3.(a) in the lemma.

Note that by Claim~\ref{clm:non-neighbor-count}, at the end of step 2, the number of free points in $\n_0$ is at least $|\n_{\geq 2}|$ and we have enough free points in $\n_0$ for mapping the subset of points whose nearest neighbors belong to $\n_2$. 
\begin{claim}\label{clm:safe-outgoing}
All outgoing edges of the mapping $\pi$ constructed so far (i.e., in step 1, 2 and 3.a) that leave a ball in $\sB$ are safe.
\end{claim}
\begin{proof}
All edges of $\pi$ constructed in steps 2 and 3.a are safe as stated earlier. So consider an outgoing edge $(o,s)$ in $\pi$ where $s=\nn_S(0)$ belongs to $\n_1$. By Claim \ref{clm:nn}, $s$ cannot be in any ball and thus it is a safe point and hence $(o,s)$ is safe.
\end{proof}

Next, let $\bar{\n}_2$ denote the set of points $s\in \n_2$ such that the mapping $\pi$ is not yet defined for $o_1, o_2$ where $\nn_S(o_1) = \nn_S(o_2) = s$. Moreover, let $\bar{\n}_0$ denote the subset of $\n_0$ that are not yet used in the mapping $\pi$ so far. Since for each pair $o_1, o_2$ where $\nn(o_1) = \nn(o_2) \in \n_2$, $\pi(o_1) = \pi(o_2)$, it is straightforward to verify that the invariant $|\bar{\n}_0| \geq |\bar{\n}_2|$ still holds.

\paragraph{Step 3.b: $\nn_S(o) \in \bar{\n}_2$.} 

\begin{enumerate}
\item\label{case:inner-edge} For a pair of points $o_1$ and $o_2$ whose (identical) nearest neighbor belongs to $\bar{\n}_2$, if there exists an unsafe ball in $\sB$ that contains a free point $s \in \bar{\n}_0$ (i.e., is not assigned to any point in $O$ by the mapping $\pi$ so far) and $\sB$ contains at least one of $o_1$ and $o_2$ then set $\pi(o_1) = \pi(o_2) = s$. (Note that using Claim \ref{clm:nn}, $o_1$ and $o_2$ cannot belong to different balls as they share the same nearest neighbor.)
\item\label{case:outer-edge} Once there is no such pair of points $o_1$ and $o_2$ as in the previous case anymore, we consider an arbitrary one-to-one assignment $\phi$ from the free points in $\bar{\n}_2$ to $\bar{\n}_0$. Finally, for a pair of points $o_1$ and $o_2$ where $\nn_S(o_1) = \nn_S(o_2) = s \in \bar{\n}_2$ we set $\pi(o_1) = \pi(o_2) = \phi(s)$.  
\end{enumerate}

Next we show that if $(o_1, \pi(o_1) =s_1)$ is an unsafe edge (which is true for all constructed edges in step 3.b), then either property 3.(b) or 3.(c) holds. Suppose for contradiction that $(o_1, s_1)$ goes to a ball $B$ that does not have any safe outgoing edge and $o_2 \notin B$. Note that by the feasibility of centers set $O$, the ball $B$ contains a point in $o \in O$ where $o\notin \{o_1,o_2\}$. 
Since $(o_1, s_1)$ is an unsafe edge, $B\cap S = \{s_1\}$ which implies that $\pi(o)$ does not belong to $B$. 
Moreover, by our construction (step 3.b-\ref{case:inner-edge} above) $\nn_S(o)\notin \bar{\n}_2$; otherwise, $o$ had to be mapped to $s_1$. Hence, by Claim~\ref{clm:safe-outgoing}, the outgoing edge $(o, \pi(o))$ is safe which is a contradiction.
\end{proof}

\subsection{Construction of the Covering $\sQ$}\label{sec:covering}

\begin{lemma}[$(4,6)$-bounded Covering]\label{lem:stable-partitioning}
Given a mapping $\pi$ that satisfies conditions of Lemma~\ref{lem:stable-mapping}, we can find a $(4,6)$-bounded covering $\sQ$.
\end{lemma}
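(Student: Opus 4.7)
The plan is to build the covering $\sQ$ edge by edge, using singleton partitions $\{e\}$ for safe edges and size-$2$ (or at most size-$4$) partitions for unsafe edges, then verifying that the resulting collection satisfies properties E-\ref{enum:partition-0}--E-\ref{enum:partition-3}.

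My main simplifying observation is that property~E-\ref{enum:partition-3} is automatic for any partition $Q$ containing an edge $(o, \pi(o))$ of $\pi$: since parts~(1)--(3) of Lemma~\ref{lem:stable-mapping} guarantee $\pi(o) \in \n_0 \cup \n_1$, the set $\nn^{-1}(\pi(o))$ is either empty (when $\pi(o) \in \n_0$) or equal to $\{o\}$ (when $\pi(o) \in \n_1$, by part~(1), which forces $\pi^{-1}(\pi(o)) = \nn^{-1}(\pi(o)) = \{o\}$). In both cases $\nn^{-1}(\pi(o)) \subseteq O(Q)$ holds trivially, so I only need to arrange E-\ref{enum:partition-0}--E-\ref{enum:partition-2}.

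For each safe edge $e = (o, \pi(o))$ per Definition~\ref{def:safe}, I will add the singleton $\{e\}$ to $\sQ$. Properties E-\ref{enum:partition-0} (with multiplicity~$1$) and E-\ref{enum:partition-1} (with size~$1$) are immediate. Property E-\ref{enum:partition-2} follows directly from the definition of safety: either $\pi(o) \notin \bigcup \sB$ and no ball is disturbed, or $\pi(o)$ sits in a critical ball $B$ with $|B \cap S| \geq 2$ so that removing $\pi(o)$ still leaves a center in $B$, or $o$ and $\pi(o)$ share a ball and the new $o$ replaces the removed $\pi(o)$ in place. For each unsafe edge $e = (o, \pi(o))$ with $\pi(o)$ in an unsafe ball $B$ and $o \notin B$, I will pair $e$ with a companion edge $e' = (o', \pi(o'))$ where $o' \in B \cap O$ and $\pi(o') \neq \pi(o)$; Lemma~\ref{lem:stable-mapping} part~3, cases~(b) and~(c), guarantees the existence of such a safe companion out of $B$ via the safe-outgoing-edge property, and the resulting two-edge partition restores E-\ref{enum:partition-2} for $B$ by placing $o' \in B$ in the swap to compensate for the loss of $\pi(o)$, while the safety of $e'$ ensures the ball (if any) containing $\pi(o')$ remains populated.

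The main obstacle lies in the pathological situation arising from case~3(c) of Lemma~\ref{lem:stable-mapping}, where every $O$-endpoint in $B$ already belongs to $\pi^{-1}(\pi(o))$ (so no direct companion $o'$ with $\pi(o') \neq \pi(o)$ exists in $B$). In this regime I expect to grow the partition to size~$3$ or~$4$ by chaining through additional edges, exploiting the in-degree bound $|\pi^{-1}(s)| \leq 3$ (Lemma~\ref{lem:stable-mapping} part~(4)) to ensure the chain terminates within~$4$ edges. Careful bookkeeping of how often each edge appears in $\sQ$---as the primary unsafe edge, as a first-level companion, and as a deeper-level companion---then yields the multiplicity bound $\gamma \leq 6$, since the in-degree bound limits the number of unsafe edges targeting any particular ball that a single companion can serve. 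The hardest step is precisely this cascade analysis together with the associated double-counting argument across the three cases (a)--(c) of Lemma~\ref{lem:stable-mapping} part~3.
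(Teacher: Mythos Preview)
Your observation that property~E-\ref{enum:partition-3} is automatic for any partition built from edges of $\pi$ (since $\pi(o)\in\n_0\cup\n_1$) is correct and useful. However, the proposal has a concrete gap in its treatment of safe edges, and its handling of case~3(c) diverges from the paper in a way that is not fully justified.

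\textbf{The safe-edge gap.} Your three sub-cases for a safe edge --- (i) $\pi(o)\notin\bigcup\sB$, (ii) $\pi(o)\in B$ with $|B\cap S|\ge 2$, (iii) $o$ and $\pi(o)$ in the same ball --- do not exhaust Definition~\ref{def:safe}. A fourth possibility is that $\pi(o)$ is the \emph{unique} $S$-point in a ball $B$ that is safe only via condition~\ref{cond:ball-2} (some $o'\in B$ has $\nn_S(o')\in\n_1$), while $o\notin B$. The singleton swap $\{(o,\pi(o))\}$ then removes the only center from $B$ without putting anything back, so E-\ref{enum:partition-2} fails. The paper handles exactly this as its Case~II: it pairs $(o,\pi(o))$ with the edge $(o',\pi(o'))$, where $o'\in B$ and $\pi(o')=\nn_S(o')\in\n_1$; since by Claim~\ref{clm:nn} this $\pi(o')$ lies in no ball, the resulting size-$2$ partition is feasible. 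This is also where the multiplicity first climbs above~$1$: such an $o'$ can be borrowed by up to three incoming edges because $|\pi^{-1}(\pi(o))|\le 3$.

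\textbf{Case 3(c).} Your ``chaining'' plan runs into the difficulty that the only available $o'\in B$ may be $o_1$ with $\pi(o_1)=\pi(o_2)$, so adding $(o_1,\pi(o_1))$ to the partition containing $(o_2,\pi(o_2))$ violates the disjointness in E-\ref{enum:partition-1}. The paper sidesteps this not by chaining but by \emph{redefining} $\pi$ on $o_2$: it sets $\pi(o_2):=\nn(o_2)\in\n_2$ and takes the partition $\{(o_1,\pi(o_1)),(o_2,\nn(o_2))\}$. Now the two $S$-endpoints are distinct ($\pi(o_1)\in\n_0$, $\nn(o_2)\in\n_2$), E-\ref{enum:partition-2} holds since $o_1\in B$ replaces $\pi(o_1)$ and $\nn(o_2)$ lies in no ball (Claim~\ref{clm:nn}), and E-\ref{enum:partition-3} holds because $\nn^{-1}(\nn(o_2))=\{o_1,o_2\}=O(Q)$. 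Note that after this modification your ``$\pi(o)\in\n_0\cup\n_1$'' shortcut no longer applies to $o_2$, which is why the paper verifies E-\ref{enum:partition-3} directly for this partition.

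In short, the overall architecture of your plan matches the paper, but you need the size-$2$ partition for the missing safe-edge sub-case, and for case~3(c) the paper's trick of altering $\pi$ is what makes a clean size-$2$ partition possible rather than an unspecified cascade.
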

\begin{proof}
First we show that if an edge $(o, \pi(o))$ is safe then there exists a feasible $2$-swap that contains the edge. 
Consider the following cases of a safe edge:
\begin{enumerate}[I.]
\item In the following cases, $(o, \pi(o))$ is a feasible (singleton) partition. 
	\begin{enumerate}
		\item $o$ and $\pi(o)$ both belong to the same ball.
		\item $\pi(o)$ does not belong to any ball.
		\item $\pi(o)$ belongs to a ball that contains at least two points from $S$.
	\end{enumerate}
\begin{figure}[t]
\center
\includegraphics[width=0.5\textwidth]{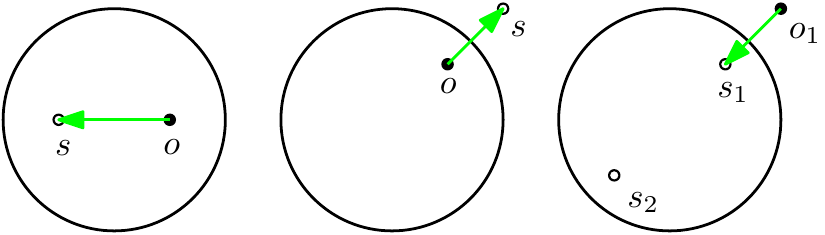}
\caption{Examples of singleton partitions.}\label{fig:partition-I}
\end{figure}
\item\label{item-second} Let $B$ denote the ball containing $\pi(o)$. Then $B\cap S =\{\pi(o)\}$ and there exists a point $o' \in B$ such that $\pi(o') \in \n_1$ and $\pi(o')\notin B$. Thus we form a partition of size two consisting of $(o, \pi(o))$ and $(o', \pi(o'))$ which corresponds to a feasible $2$-swaps. This is a valid partition because by Claim~\ref{clm:nn}, $\pi(o')$ cannot be contained in any ball.
\begin{figure}[b]
\center
\includegraphics[width=0.25\textwidth]{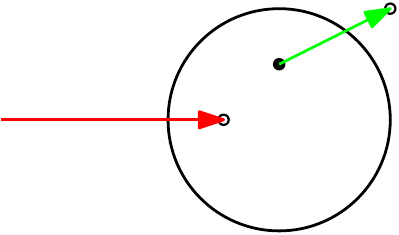}
\caption{An examples of a partition of size $2$.}\label{fig:partition-II}
\end{figure}  	
\end{enumerate}  
Note that in case~\ref{item-second}, since $B\cap S =\{\pi(o)\}$ and by the last condition in Lemma~\ref{lem:stable-mapping} the in-degree of $\pi(o)$ is at most $3$, the edge $(o', \pi(o'))$ appears in at most three $2$-swaps. 

By the properties of Lemma~\ref{lem:stable-mapping}, $(o,\pi(o))$ is a safe edge in the following cases and by what we just showed there is a decomposition of edges into swaps of size at most two such that each edge appears in at most three of them.
\begin{itemize}
\item $\nn(o) \in \n_{\geq 3}$.
\item $\nn(o) \in \n_{2}$ and $(o, \pi(o))$ is a safe edge (Case 3-(a) in Lemma~\ref{lem:stable-mapping}).
\item $\nn(o) \in \n_1$ and either the edge is fully contained in a ball or $\pi(o)$ is not inside any ball. 
\end{itemize} 
Similarly, if $\nn(o) \in \n_{2}$ and $(o, \pi(o))$ goes to a ball that has a safe outgoing edge (Case 3-(b) in Lemma~\ref{lem:stable-mapping}), then all such edges are covered by swaps of size three such that each edge appears in at most $6$ of them.

Hence, it only remains to handle the unsafe edges $(o, \pi(o))$ where either 
\begin{enumerate}[(1)]
\item $\nn(o)\in \n_2$ and there exists a ball $B\in \sB$ such that $(o,\pi(o))$ belongs to $B$ and $o_2$ does not belong to any ball where $\nn_S(o) =\nn_S(o_2)$ (Case 3-(c) in Lemma~\ref{lem:stable-mapping}), or
\item $\nn_S(o)\in \n_1$, $(o, \nn_S(o))$ is an unsafe edge, $o$ is not inside any ball and $\pi(o)$ is contained in a ball. 
\end{enumerate}
Note that the second case can be handled easily as the ball containing $\pi(o)$ has an outgoing edge and by this stage all outgoing edges are covered in a swap of size at most $3$. We can add the edge to the swap corresponding to the outgoing edge of $\pi(o)$ which will results in a swap of size at most $4$. Moreover, since $(o, \pi(o))$ is the only incoming edge of the ball containing $\pi(o)$ still each edge appear in at most $6$ swaps.  

To cover the edges of type $(1)$ above, we set $\pi(o_2) = \nn(o_2)$. Then we add $(o, \pi(o))$ and $(o_2, \pi(o_2) = \nn(o_2) = \nn(o))$ as a partition in $Q$. Note that this satisfies  all properties of $(t, \gamma)$-bounded mapping and in particular Property~\textit{E-\ref{enum:partition-3}}. Moreover, these edges appear in exactly one partition of $\sQ$. 
\end{proof}

\begin{corollary}[restatement of theorem \ref{thm:main} for $k$-median]\label{cor:main-median}
The local search algorithm with swaps of size at most $4$ returns a $(84, 7)$-bicriteria approximate solution of $\alpha$-fair $k$-median of a point set $P$ of size $n$ in time $\tldO(k^{5} n^4)$.
\end{corollary}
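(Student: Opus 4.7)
The plan is to assemble the machinery developed in Sections~\ref{sec:fairness-to-partition-constraint} and~\ref{sec:k-median}. Concretely, I would first run Algorithm~\ref{alg:critical-balls} to obtain a collection $\sB$ of at most $k$ critical balls (whose existence and polynomial construction time are guaranteed by Lemma~\ref{lem:disjoint-balls}), and then feed $\sB$ into Algorithm~\ref{alg:local-search} with swap budget $t = 4$ and improvement threshold $\eps = 1/(12k)$. Let $S$ denote the returned center set. Throughout, $O$ denotes an optimal $\alpha$-fair $k$-median set of centers, which is in particular feasible with respect to $\sB$.

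The fairness guarantee is immediate from the reduction of Section~\ref{sec:fairness-to-partition-constraint}: because $S$ is feasible with respect to the critical balls, Lemma~\ref{lem:clustering-with-partition-fair} directly gives that $S$ is a $(7\alpha)$-fair $k$-clustering of $P$, yielding the factor $\gamma = 7$ in the bicriteria statement.

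For the cost guarantee, I would chain the three lemmas from Section~\ref{sec:k-median}. Applying Lemma~\ref{lem:stable-mapping} to the pair $(O, S)$ produces a $3$-bounded mapping $\pi : O \to S$, and Lemma~\ref{lem:stable-partitioning} refines $\pi$ into a $(4,6)$-bounded covering $\sQ$ of its edges. Crucially, every partition $Q \in \sQ$ encodes a swap of size at most $4$ that remains feasible with respect to $\sB$ (this is Property~\textit{E-\ref{enum:partition-2}}), so the $(4,\eps)$-stability of $S$ certifies that no such swap improves the cost by more than a factor of $(1-\eps)$. Setting $\Delta = 3$, $\gamma = 6$, $t = 4$, and noting that $\eps = 1/(12k) = 1/(2\gamma k)$ satisfies the hypothesis of Lemma~\ref{lem:constant-approximate-stable}, the lemma yields
$$\cost(S) \;\leq\; 2\gamma(2\Delta + 1)\cdot \opt \;=\; 2 \cdot 6 \cdot 7 \cdot \opt \;=\; 84 \cdot \opt,$$
which is the $\beta = 84$ factor in the statement.

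Finally, the running time follows by plugging $t = 4$ and $\eps^{-1} = O(k)$ into Theorem~\ref{thm:iterations}. The one step I expect to be the minor obstacle is bookkeeping: confirming that the swap budget $t = 4$ used to declare stability exactly matches the partition size produced by Lemma~\ref{lem:stable-partitioning}, so that the local optimum certificate indeed covers every swap appearing in $\sQ$ — otherwise the invocation of Lemma~\ref{lem:constant-approximate-stable} would not be licensed. All the combinatorial difficulty (the mapping construction, the careful handling of unsafe balls, and the conversion of the mapping into a covering of small feasible swaps) has already been absorbed into the proofs of Lemmas~\ref{lem:stable-mapping} and~\ref{lem:stable-partitioning}, so the corollary follows essentially by instantiating the parameters.
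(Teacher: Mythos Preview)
Your proposal is correct and follows essentially the same route as the paper's own proof: invoke Lemma~\ref{lem:clustering-with-partition-fair} for the $7\alpha$ fairness bound, combine Lemmas~\ref{lem:stable-mapping} and~\ref{lem:stable-partitioning} to obtain a $3$-bounded mapping with a $(4,6)$-bounded covering, plug $\Delta=3$, $\gamma=6$ into Lemma~\ref{lem:constant-approximate-stable} to get the factor $84$, and appeal to Theorem~\ref{thm:iterations} with $t=4$ and $\eps = O(1/k)$ for the running time. Your write-up is in fact slightly more explicit than the paper's (you spell out $\eps = 1/(12k)$ and flag the role of Property~\textit{E-\ref{enum:partition-2}}), and the ``minor obstacle'' you anticipate is a non-issue precisely because the $(4,6)$-bounded covering guarantees every partition has size at most $4$, matching the swap budget.
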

\begin{proof}
By Lemma~\ref{lem:clustering-with-partition-fair}, the result of our local search algorithm returns a $(7\alpha)$-fair $k$-clustering. By Lemma~\ref{lem:constant-approximate-stable} and the existence of a pair of $3$-bounded mapping and $(4,6)$-bounded covering (see Lemma~\ref{lem:stable-mapping} and~\ref{lem:stable-partitioning}), the cost the returned $k$-clustering is not more that $84\cdot \opt$ where $\opt$ is the cost of an optimal $\alpha$-fair $k$-clustering. 

Finally, as we set $\eps = O(1/k)$ and by Theorem~\ref{thm:iterations} the runtime of the algorithm is $\tldO(k^5 n^4)$.
\end{proof}

\section{Experiments}\label{sec:experiments}
In this section, we provide an empirical evaluation of our local search based algorithm for $\alpha$-fair $k$-clustering with $k$-median and $k$-means cost functions. We evaluate the empirical performance of the following approaches:
\begin{itemize}
\item{\bf \textsc{FairKCenter}~\cite{jung2019center}.} 
First we consider the algorithm of~\cite{jung2019center}, where they proposed to perform a binary search to find a value of $1\leq \eta \leq 2$ for which the number of critical balls turns out to be exactly $k$. Then, the centers of these balls are the clustering centers.\footnote{\cite{jung2019center} remarked that while it is not always guaranteed that there exists such $\alpha$ that results in {\em exactly} $k$ balls, this approach finds $k$ centers on most natural datasets.}  
More precisely, for a given value of $\eta$, they run the algorithm of~\cite{chan2006spanners,charikar2010local}, which is similar to Algorithm~\ref{alg:critical-balls} but in line 6 a point $x$ is marked as ``covered'' if $d(x,c) \leq \eta \cdot r_k(x)$ where $c$ is the newly picked center.
\item{\bf Local Search with 1-Swap.}
 Second, we consider our local search algorithm as described in Algorithm~\ref{alg:local-search}. However to make it faster, we set $t$, the maximum size of swaps, equal to $1$ instead of $4$.
\item{\bf Greedy.} Finally as it seems a reasonable heurictic, we also consider the solution after the initialization step for our local search algorithm that is described in Algorithm~\ref{alg:local-search}, part {\bf I}. This algorithm first finds critical balls and then includes their centers to $S$. Next, it goes through iterations until $S$ becomes of size $k$. In each iteration it adds the point which furthest from the current set $S$. This algorithm is the initialization method we used for our local search algorithm and is described in Algorithm~\ref{alg:local-search}, part {\bf I}.
\end{itemize}
\paragraph{Dataset.}
We consider three datasets from UCI Machine Learning Repository~\cite{Dua:2017}\footnote{\href{https://archive.ics.uci.edu/ml/datasets}{https://archive.ics.uci.edu/ml/datasets}} which are standard benchmarks for clustering algorithms and in particular they were used in the context of fair $k$-median clustering in \cite{chierichetti2017fair,chen2019proportionally,backurs2019scalable,bera2019fair,huang2019coresets}. Formally, we consider the following datasets where in each of them we consider only numerical attributes:
\begin{itemize}
\item{\it  Diabetes.} This dataset provides the information and outcome regarding patients related to diabetes from 1999 to 2008 at 130 hospitals across US\footnote{\href{https://archive.ics.uci.edu/ml/datasets/diabetes+130-us+hospitals+for+years+1999-2008}{https://archive.ics.uci.edu/ml/datasets/diabetes+130-us+hospitals+for+years+1999-2008}}. Points in this datasets are in $\mathbb{R}^2$ and correspond to ``age'' and ``time-in-hospital'' attributes.   
\item{\it Bank.} This datasets corresponds to information from a Portuguese Bank\footnote{\href{https://archive.ics.uci.edu/ml/datasets/Bank+Marketing}{https://archive.ics.uci.edu/ml/datasets/Bank+Marketing}}. Here, points live in $\mathbb{R}^3$ and corresponds to ``age'', ``balance'' and ``duration-of-account''.
\item{\it Census.} The dataset is from 1994 US Census\footnote{\href{https://archive.ics.uci.edu/ml/datasets/adult}{https://archive.ics.uci.edu/ml/datasets/adult}} and here the selected attributes are ``age" , ``fnlwgt", ``education-num", ``capital-gain" and ``hours-per-week"; points are in $\mathbb{R}^5$.
\end{itemize} 

\begin{table}[!h]
\centering
\renewcommand{\arraystretch}{.75}
\begin{tabular}{l|l|l|l}
\toprule 
 {\bf Dataset} & {\bf Dimension} & {\bf \# of Points} & {\bf Aspect Ratio}\\
\midrule
Diabetes & $2$ & $101,765$ & $90.2$ \\ 
 Bank & $3$ & $4,520$ & $13511.9$\\
 Census & $5$ & $32,560$ & $	58685$\\
\bottomrule
\end{tabular}
\caption{Some statistics about the datasets used in our experiments. Aspect ratio denotes the ratio between maximum distance and minimum distance.}
\label{table:dataset-stat}
\end{table}
Finally, in all our experiments
we randomly sample a subset of size 1000 points from the data set and run our experiments on this sub-sample.

\paragraph{Experiment Setup.} In our experiments, we follow the description of Algorithm~\ref{alg:critical-balls} and~\ref{alg:local-search}. The only discrepancy is that instead of considering a point to be covered if it has a center within distance of $6$ times its fair radius, in our implementation, we consider a point covered if it has a center within distance of $3$ times its fair radius (see line 6 in Algorithm~\ref{alg:critical-balls}). 

In all experiments, the input parameter $\alpha$ to our local search algorithms (i.e., the desired fairness guarantee) is the fairness approximation $1 \leq \eta \leq 2$ returned by the \textsc{FairKCenter} algorithm of~\cite{jung2019center}.  

Finally, we consider values of $k$ to be in range $5$ to $30$ with steps of size $5$ and draw our plots as a function of $k$.
\begin{figure*}[!h]
\minipage{0.33\textwidth}
		\includegraphics[width=\textwidth]{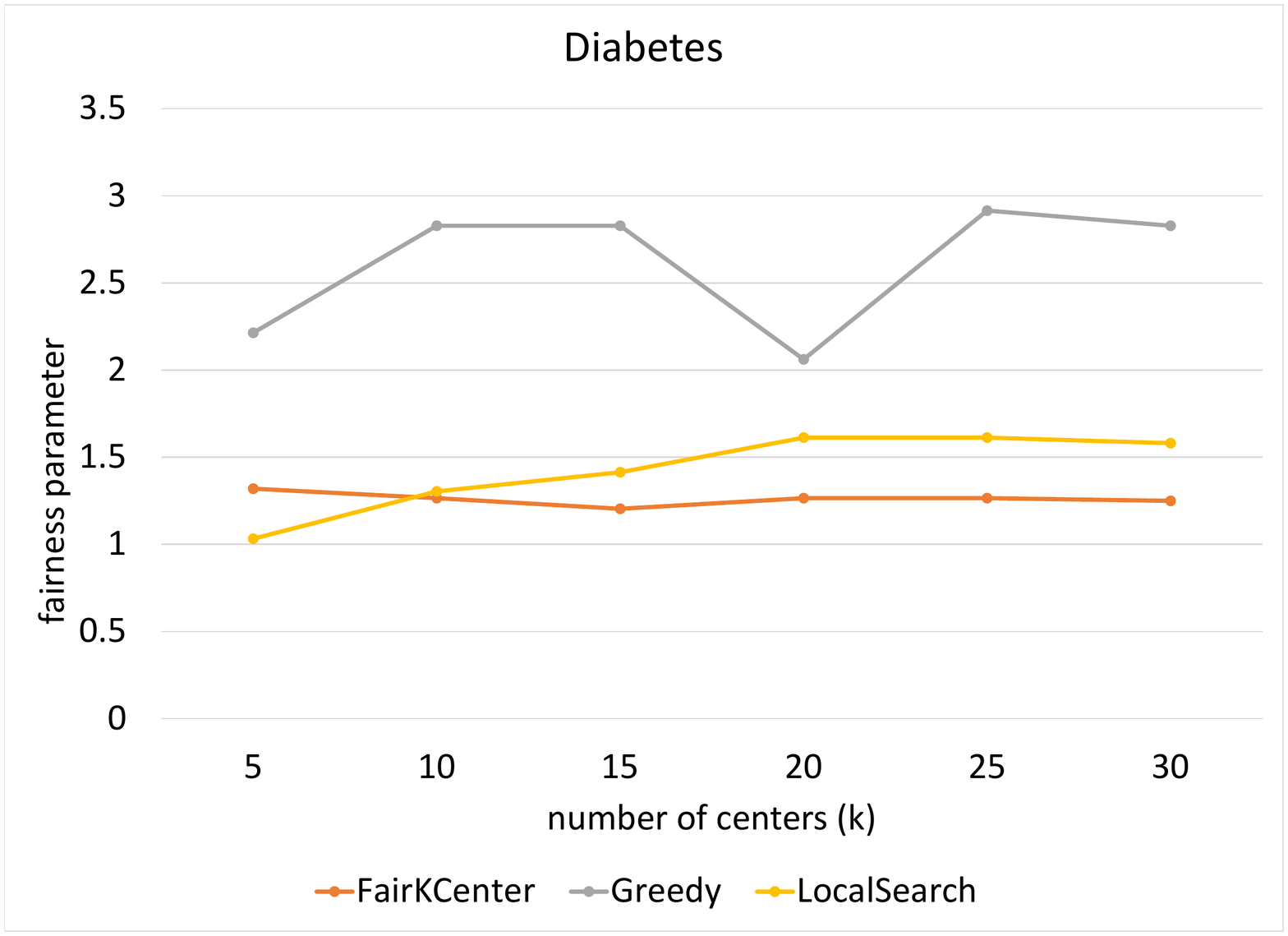}
\endminipage\hfill
\minipage{0.33\textwidth}
		\includegraphics[width=\textwidth]{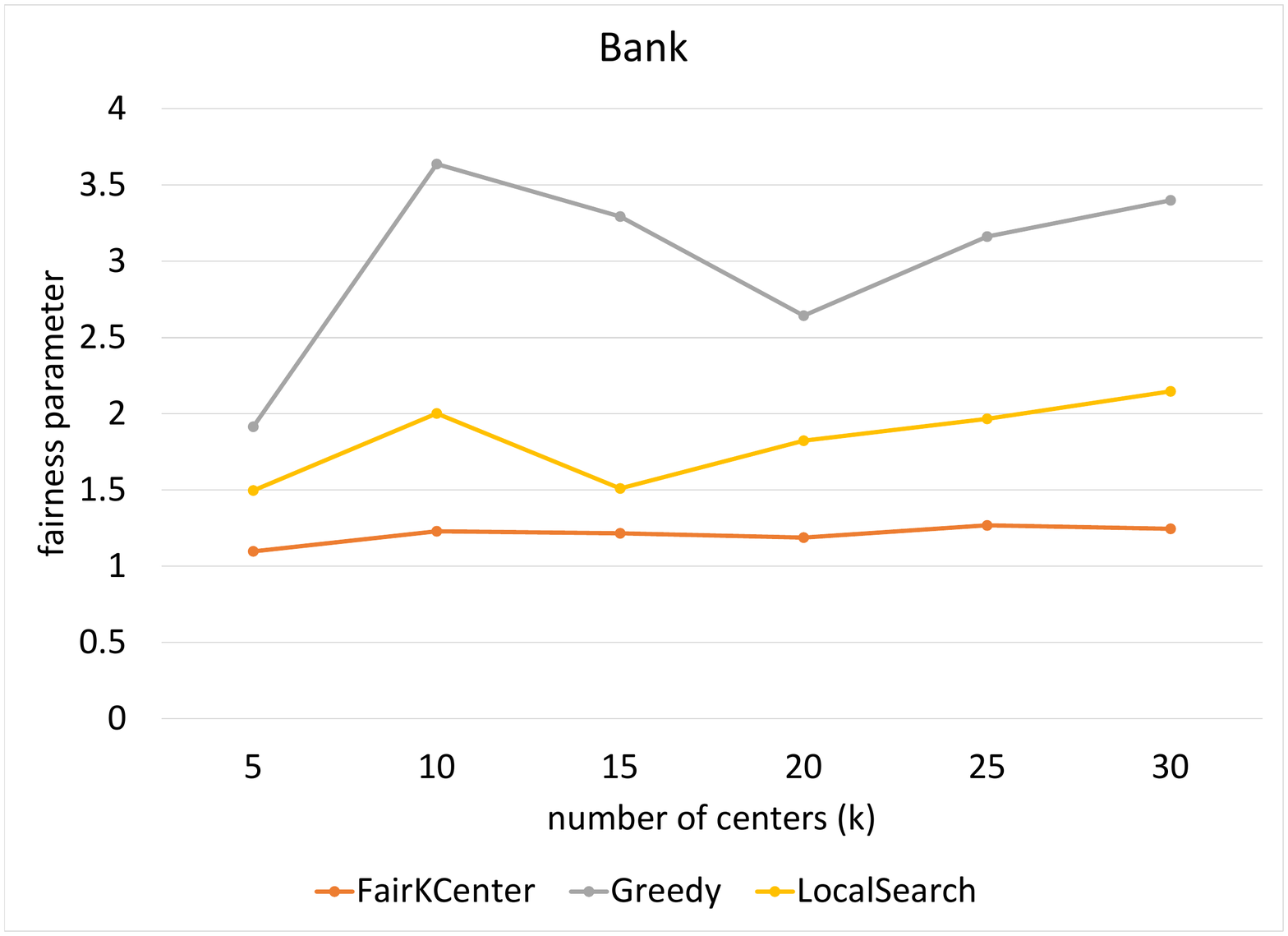}
\endminipage\hfill
\minipage{0.33\textwidth}%
		\includegraphics[width=\textwidth]{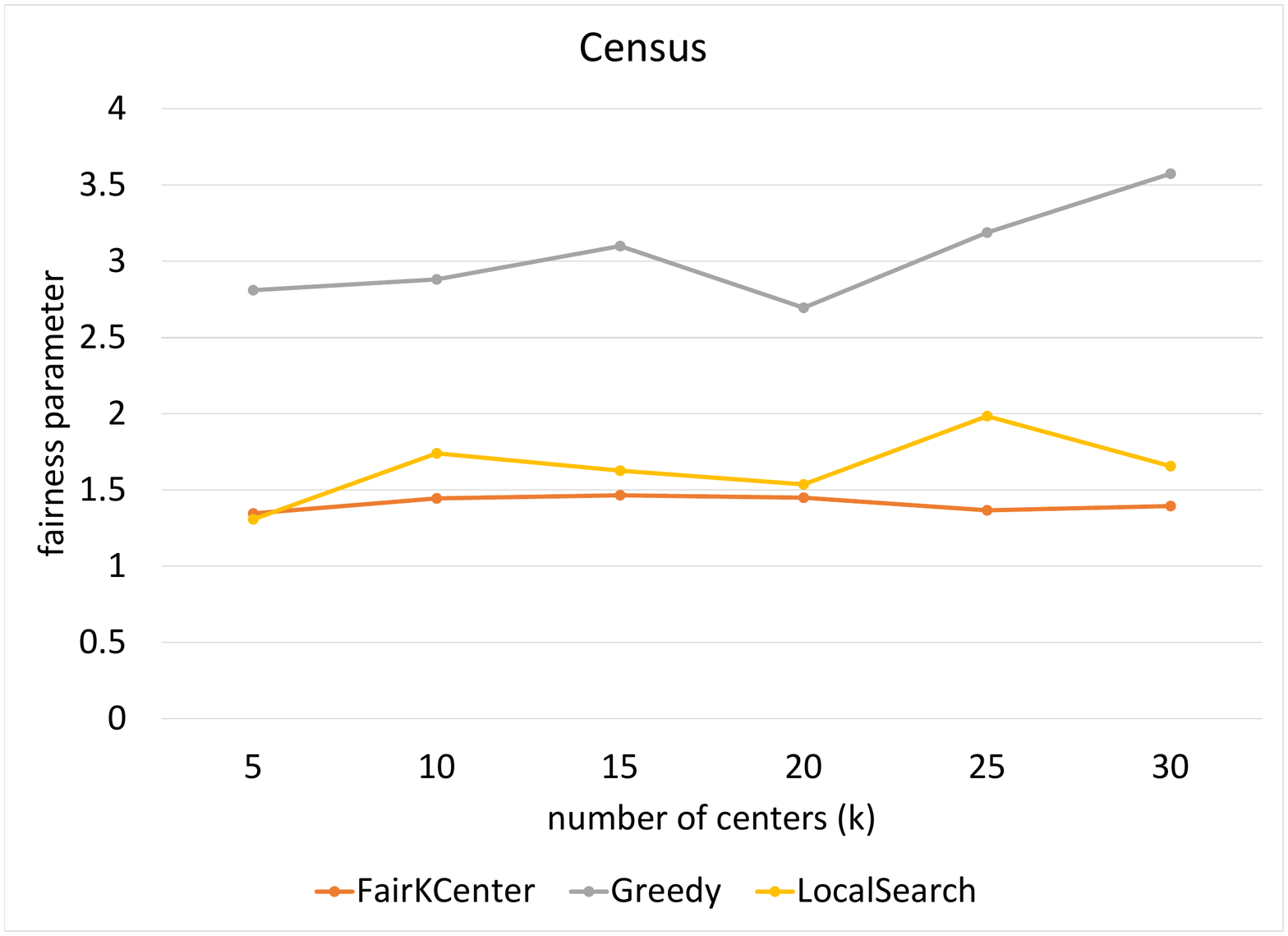}
\endminipage
\caption{Comparison of the fairness guarantees of the described algorithms for fair $k$-median on data sets Diabetes, Bank and Census.}
\label{fig:fair-median}
\end{figure*}

\begin{figure*}[!h]
\minipage{0.33\textwidth}
		\includegraphics[width=\textwidth]{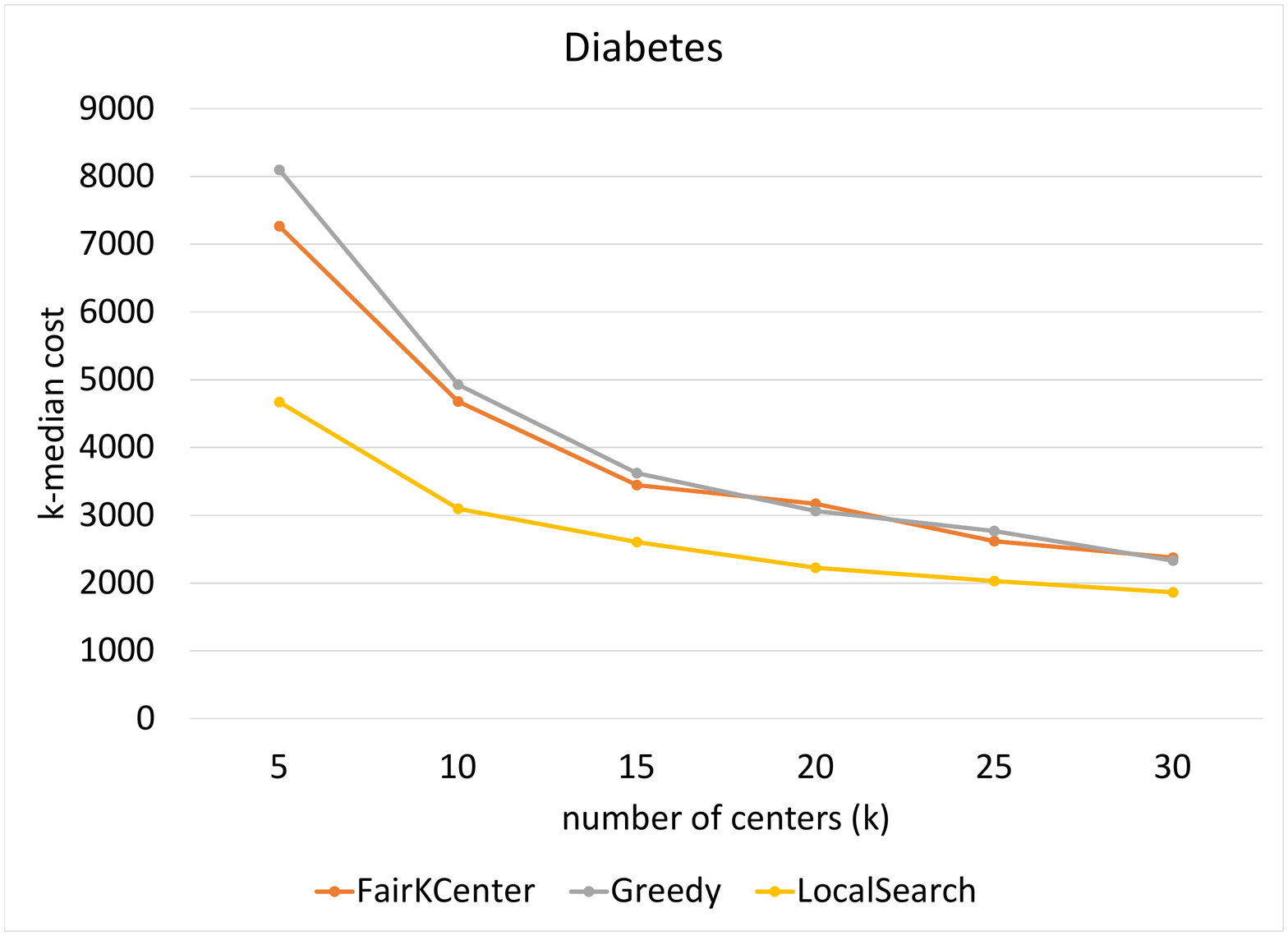}
\endminipage\hfill
\minipage{0.33\textwidth}
		\includegraphics[width=\textwidth]{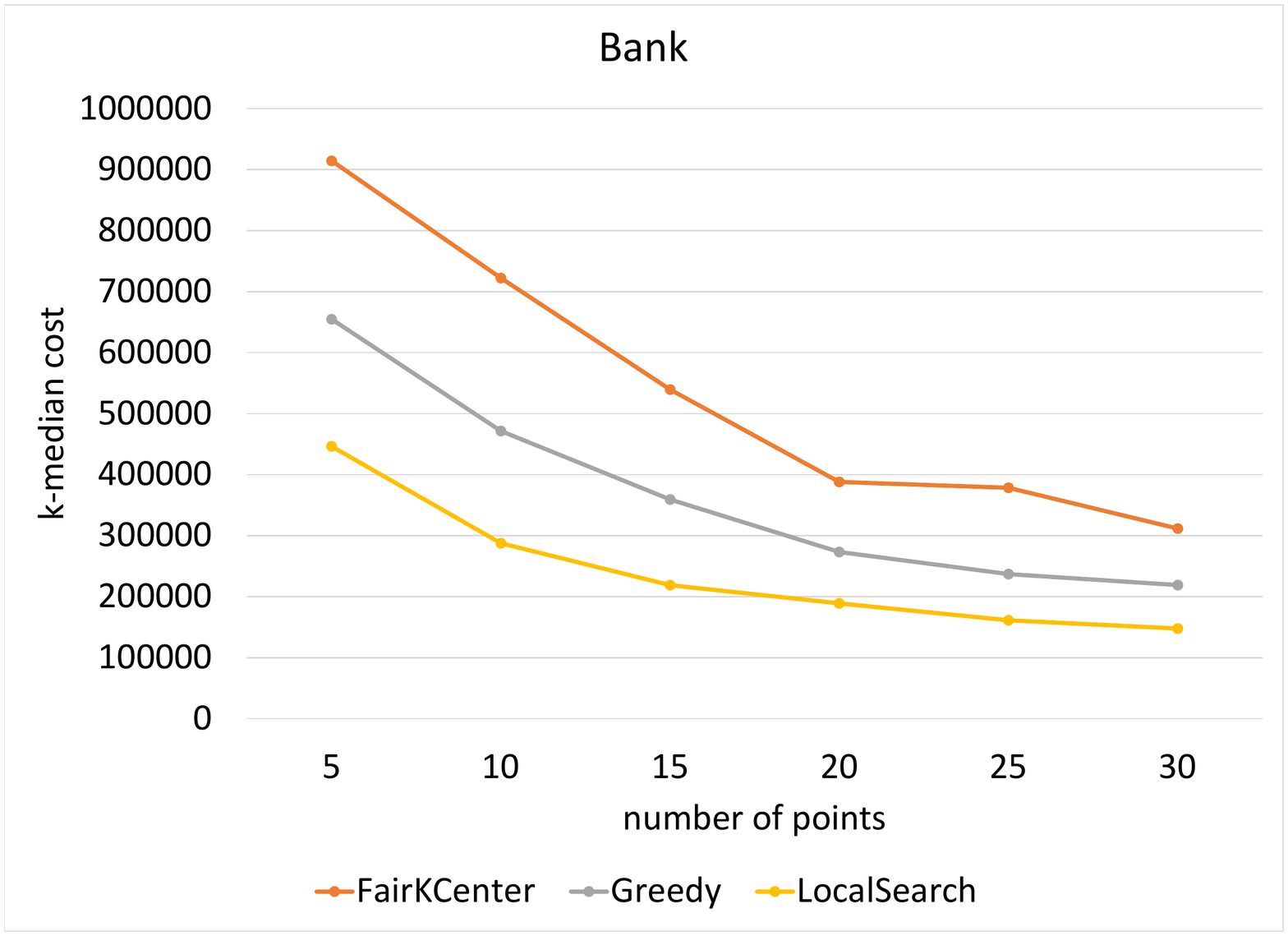}
\endminipage\hfill
\minipage{0.33\textwidth}%
		\includegraphics[width=\textwidth]{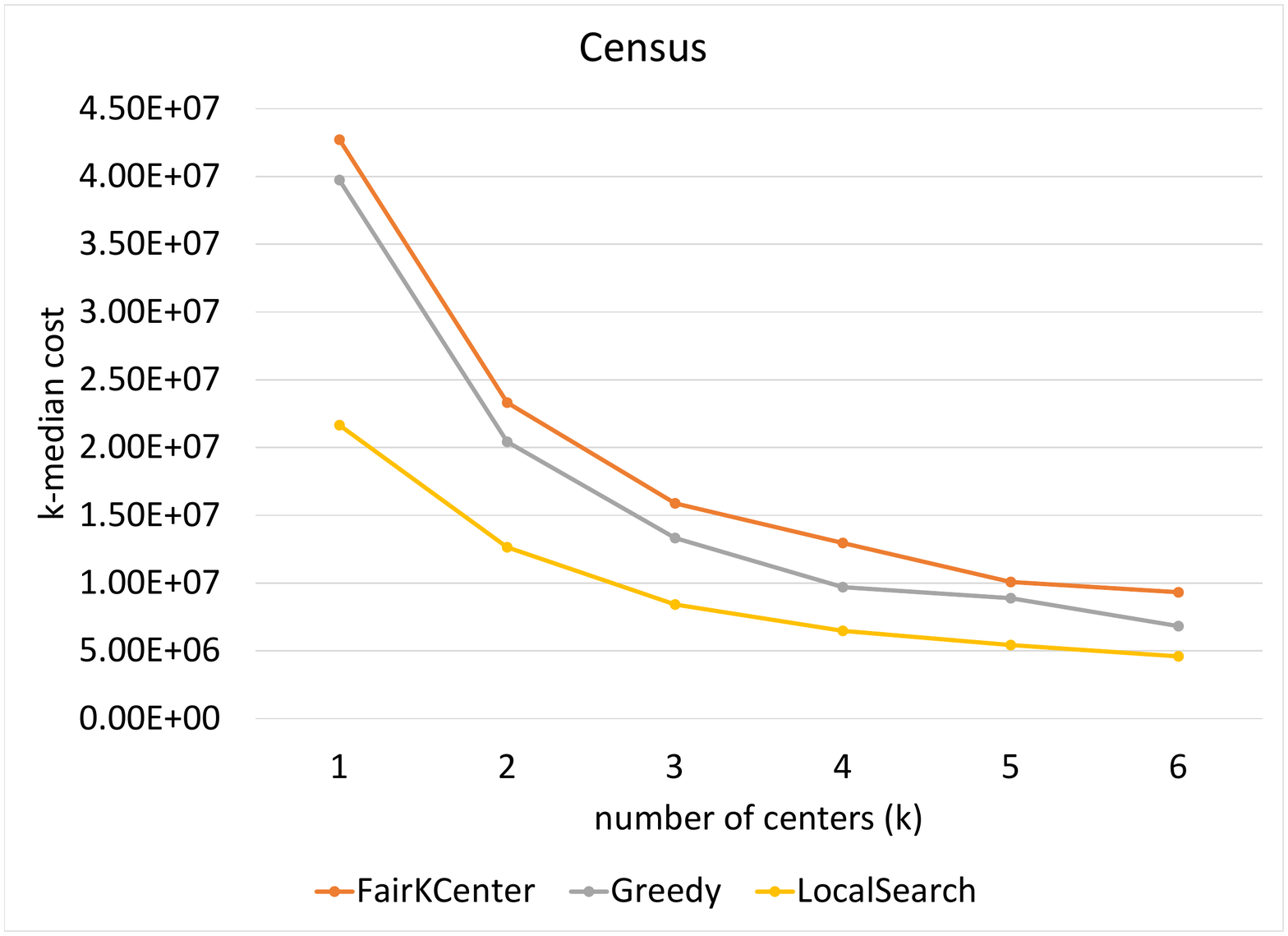}
\endminipage
\caption{Comparison of the $k$-median cost of the described algorithms for fair $k$-median on data sets Diabetes, Bank and Census.}
\label{fig:cost-median}
\end{figure*}

\begin{figure*}[!h]
\minipage{0.33\textwidth}
		\includegraphics[width=\textwidth]{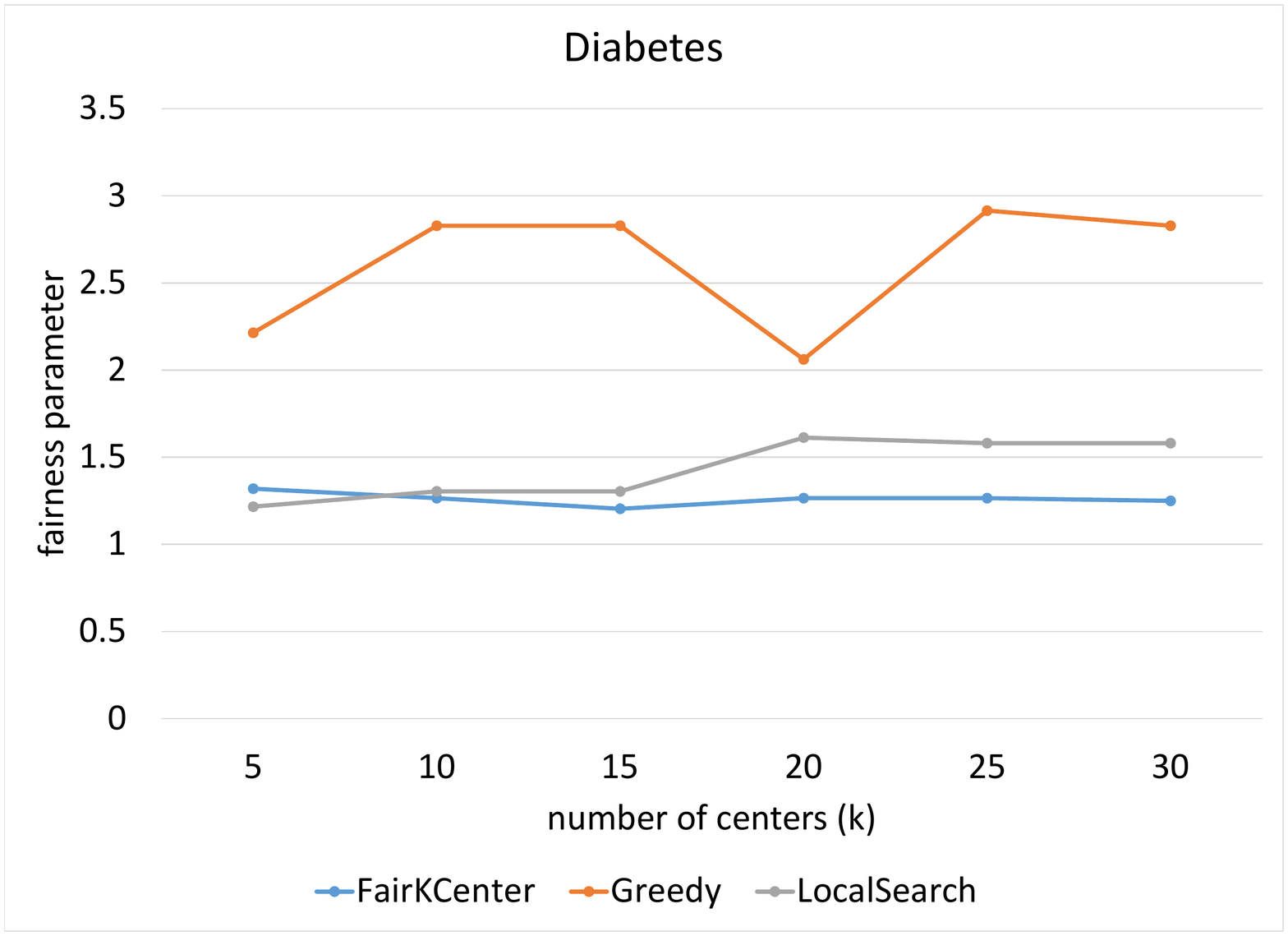}
\endminipage\hfill
\minipage{0.33\textwidth}
		\includegraphics[width=\textwidth]{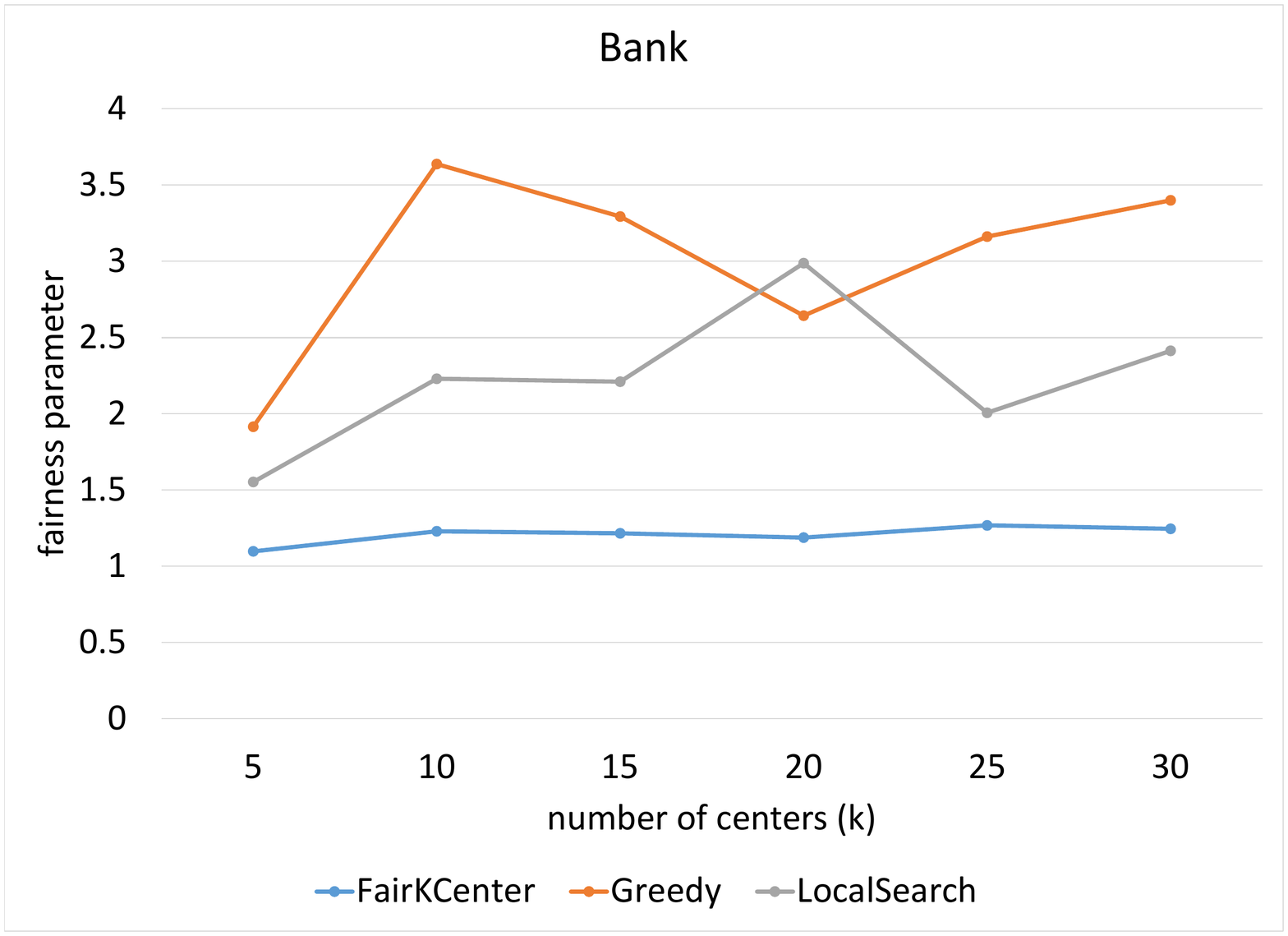}
\endminipage\hfill
\minipage{0.33\textwidth}%
		\includegraphics[width=\textwidth]{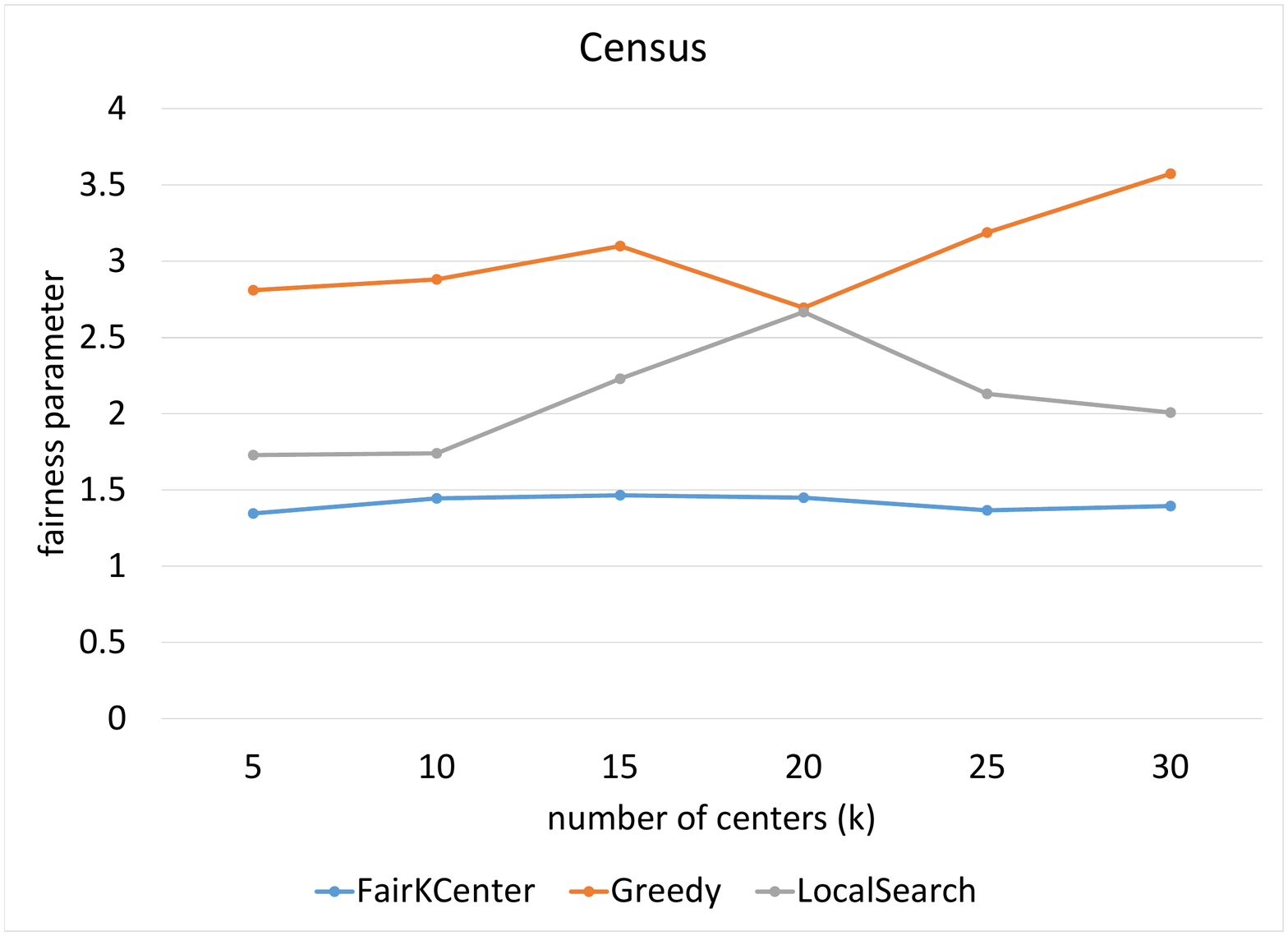}
\endminipage
\caption{Comparison of the fairness guarantees of the described algorithms for fair $k$-means on data sets Diabetes, Bank and Census.}
\label{fig:fair-means}
\end{figure*}

\begin{figure*}[!h]
\minipage{0.33\textwidth}
		\includegraphics[width=\textwidth]{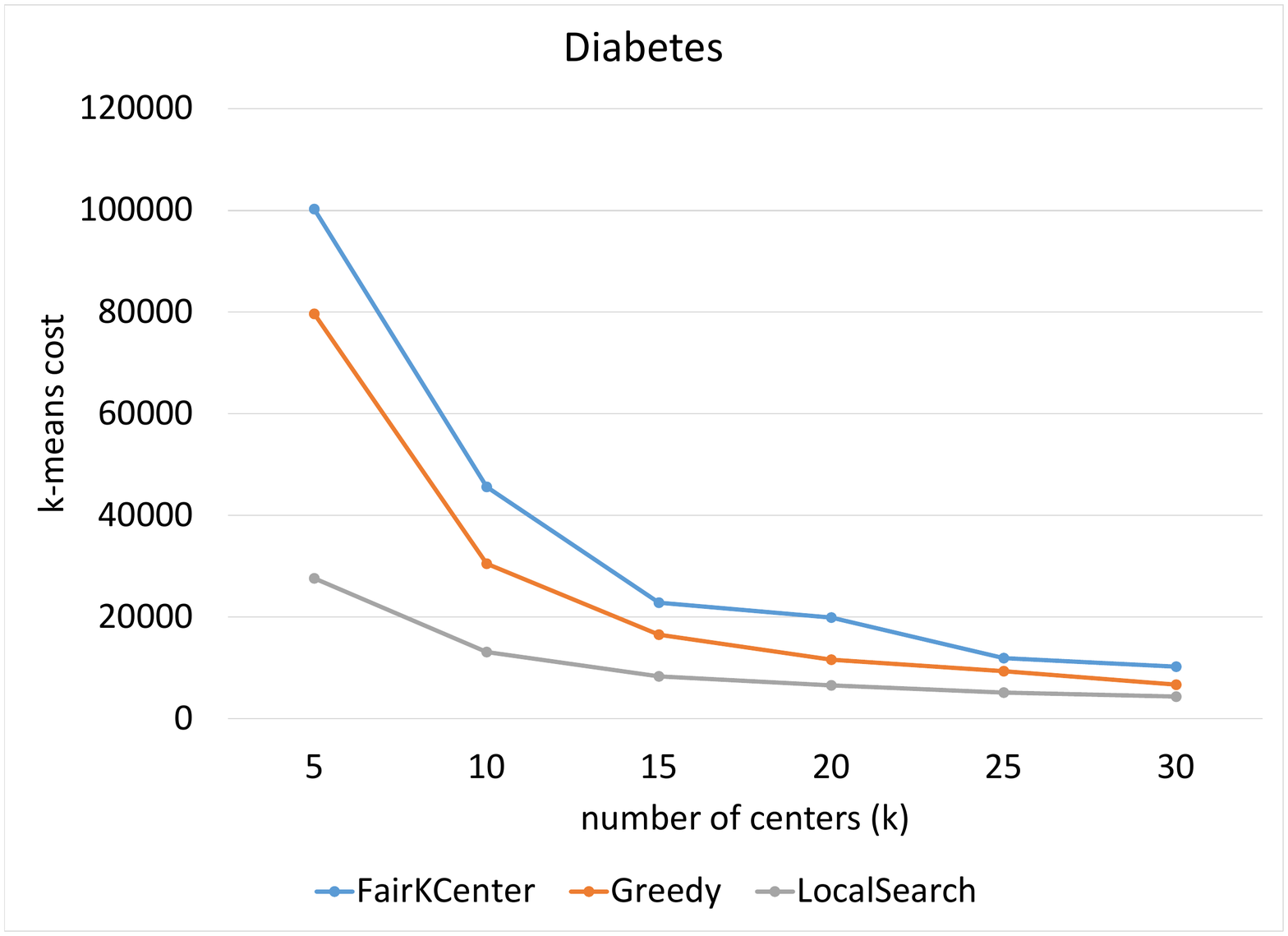}
\endminipage\hfill
\minipage{0.33\textwidth}
		\includegraphics[width=\textwidth]{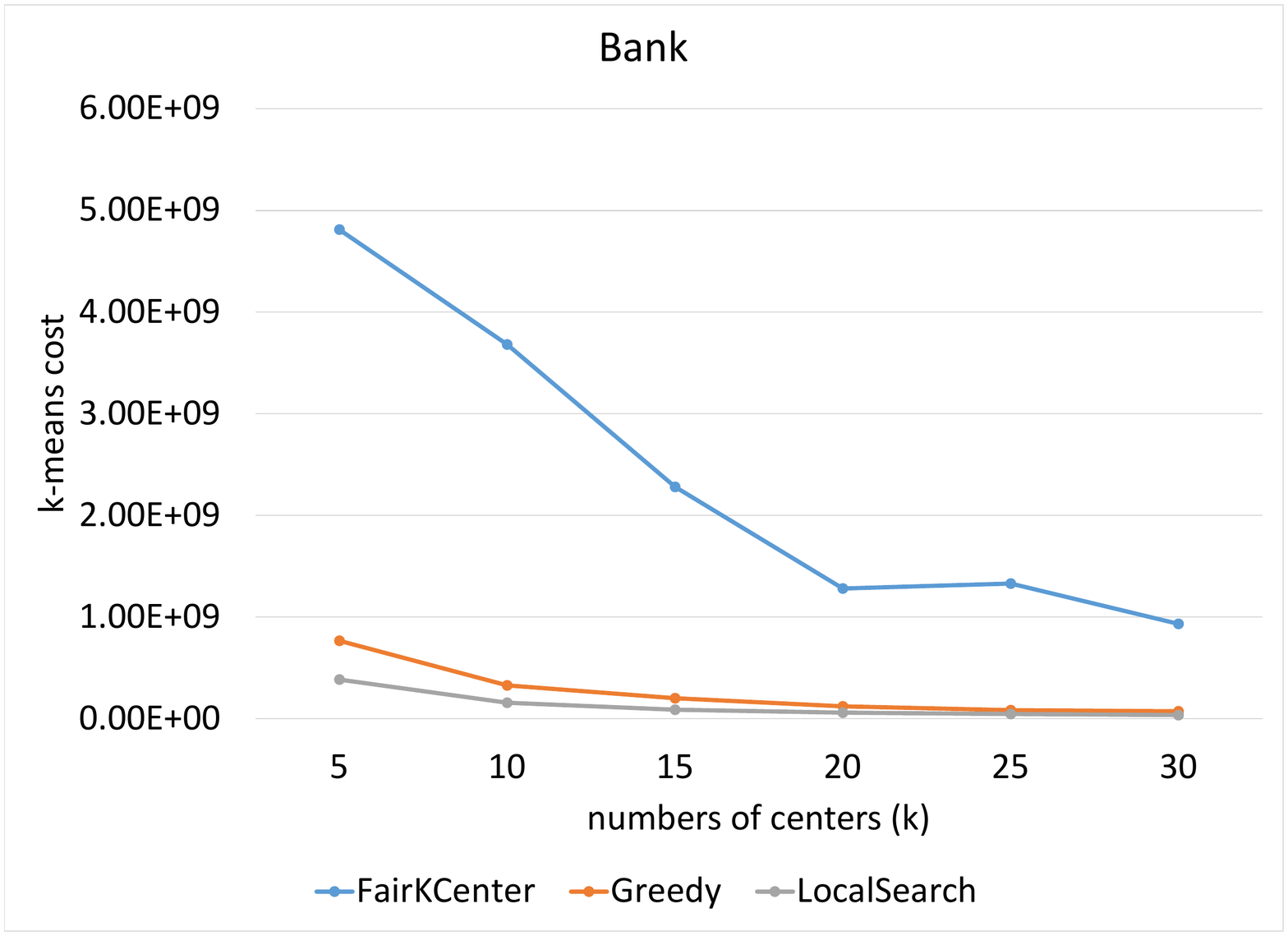}
\endminipage\hfill
\minipage{0.33\textwidth}%
		\includegraphics[width=\textwidth]{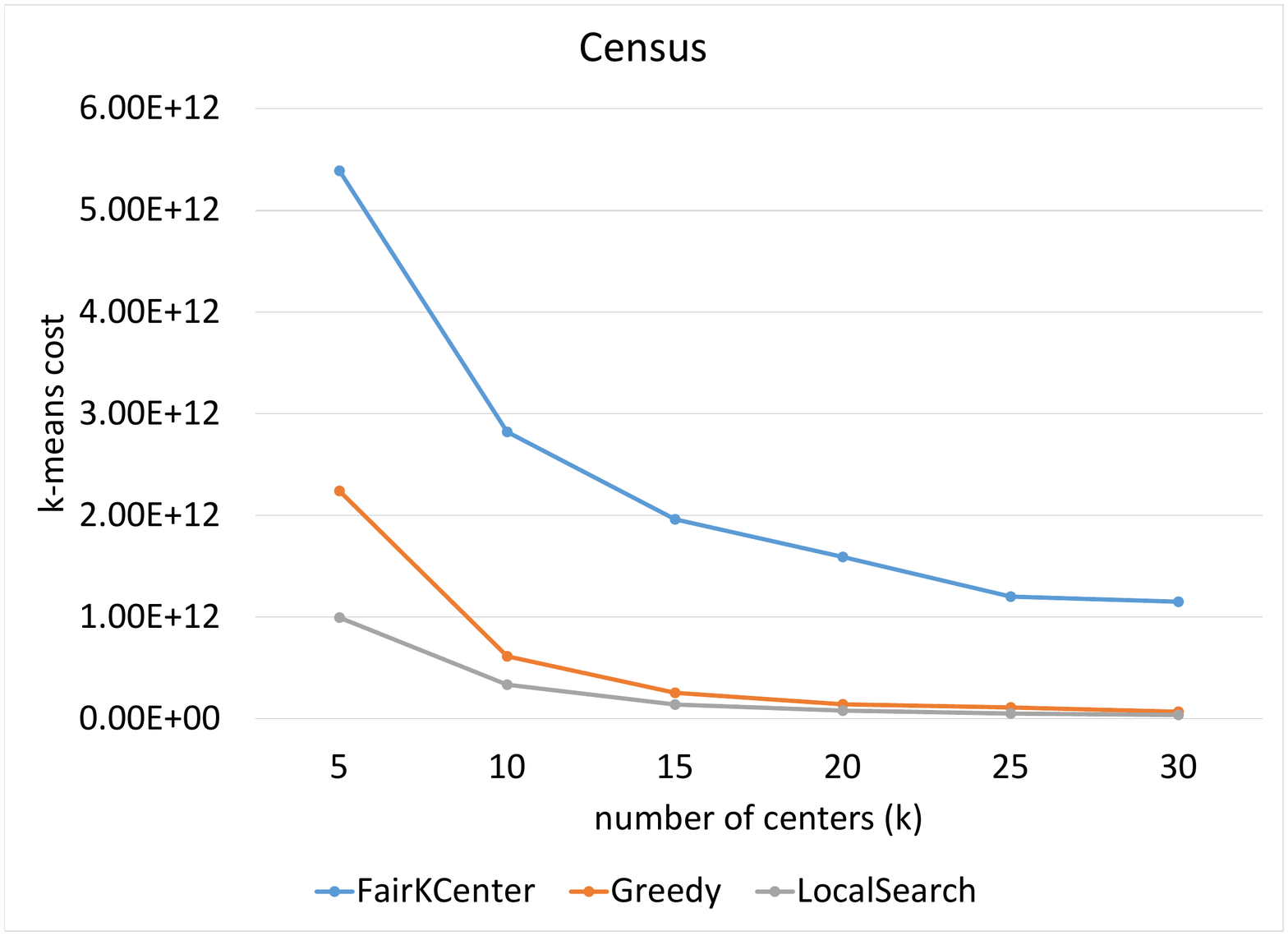}
\endminipage
\caption{Comparison of the $k$-means cost of the described algorithms for fair $k$-means on data sets Diabetes, Bank and Census.}
\label{fig:cost-means}
\end{figure*}

\paragraph{Results.} 
Figures \ref{fig:fair-median} and \ref{fig:cost-median} show empirical comparisons of the aforementioned algorithms, both in terms of fairness and the $k$-median cost of the solution.
Our plots imply that local-search based algorithms perform reasonably well with respect to the notion of $\alpha$-fairness: While its fairness guarantee is close to the fairness of \textsc{FairKCenter}, it always exhibits a better performance in terms of $k$-median cost. 
More precisely, on average the local search algorithm reports a solution whose cost is better than the algorithm of \cite{jung2019center} by a factor of 1.4, 2.25, and 1.93  
while the reported solution has a worse fairness by a factor of 1.13, 1.5, and 1.16 for Diabetes, Bank and Census respectively.

We also observe a similar behavior of local search based algorithm for fair $k$-means. 
Figures \ref{fig:fair-means} and \ref{fig:cost-means} show empirical comparisons of the aforementioned algorithms both in terms of fairness and the $k$-means cost of their solutions.
The plots show that local-search based algorithms perform reasonably well with respect to the notion of $\alpha$-fairness: While its fairness guarantee is close to the fairness of \textsc{FairKCenter}, it always exhibits a better performance in terms of $k$-means cost. 
More precisely, on average the local search algorithm reports a solution whose cost is better than the algorithm of \cite{jung2019center} by a factor of 2.93, 2.32, and 1.73  
while the reported solution has a worse fairness by a factor of 1.14, 1.85, and 1.48 for Diabetes, Bank and Census respectively.

\bibliographystyle{alpha}%
\bibliography{ref-fair-clustering}
\newpage
\appendix
\section{Missing Proofs}\label{sec:missing-proofs}
\subsection{Missing Proofs of Section~\ref{sec:intro}}\label{sec:miss-intro}
\begin{proofof}{\bf Observation~\ref{lem:optimal-is-unfair}.}
Consider the instance $\sI$ as shown in Figure~\ref{fig:costly-fair}. For simplicity, we assume that the input points $P$ live in a Euclidean $(k-1)$-dimensional space.\footnote{In fact, it is possible to modify our instance so that it lives in two dimensions. Even by applying standard dimensionality reduction techniques like Johnson-Lindenstrauss~\cite{johnson1984extensions}, we can preserve the distances in our construction approximately (up to a factor of $1\pm\eps)$) and reduce the number of dimensions to $O(\log k)$~\cite{cohen2015dimensionality, makarychev2019performance}.}
Suppose that there are $k-1$ points from $P$ on the left side where the distance of any pair of these $k-1$ points is exactly $M$. 
This can be achieved by picking the vertices of a standard simplex. Similarly, we pick $k-1$ ``nodes''\footnote{Note that \emph{nodes} are different from the actual points in $P$.} $\sV = \{v_1, \cdots, v_{k-1}\}$ on the right side such that the distance of any pair of nodes in $\sV$ is exactly $2R$. 
Then, we put ${n - k-1 \over k-1}-1 \geq {n\over k}-1$ (assuming $k^2\leq n$) points from $P$ on the ball of radius $r$ around each vertex in $\sV$ and exactly one point on each vertex in $\sV$. Further, we set the minimum distance of the points on right side and the points on the left side to $D$. 

Then, we choose the value of parameters so that $R>>r$ and $M=D >> 2(R+r)n$. Let $O$ be an optimal set of centers for $k$-median\footnote{A similar argument holds for $k$-means and the general $\ell_p$ norm cost function as well.} clustering of $P$. 
First we show that $O$ must include all points on the left side and exactly one point from the right side. If $O$ does not contain all points on the left side, then the total clustering cost is at least $M$, while in the described solution (i.e., picking all points on the left side and an arbitrary point from the right side) the total cost is at most $2(R+r)n << M$; since the distance of each point on the right side to its closest node in $\sV$ is $r$ and the distance of any pair of nodes in $\sV$ is exactly $2R$. 

Next, the fair radius of all points lied on $\sV$ is exactly $r$ and the fair radius of all points that live on balls of radius $r$ around the nodes in $\sV$ is at least $r$ and at most $2r$. 
Since the distance of at least one point on the right side to $O$ is at least $2R-2r$, the fairness approximation of $O$ is greater than ${2(R - r) \over 2r} = {R\over r} - 1$.

On the other hand, consider a set of centers $S$ that consists of the $k-1$ points lied on $\sV$ and one arbitrary point from the left side. Since $R>> r$, it is straightforward to check that all points both on the right side and on the left side have a center of $S$ in their fair radii. In particular, $S$ provides a $1$-fair $k$-clustering of $P$.

Thus, by setting $R\over r$ large enough, an optimal $k$-median clustering of $P$ can be arbitrarily unfair.    
\end{proofof}

\section{Fair Algorithms for the General $\ell_p$ Norm Cost Function}\label{sec:general-cost}
Our local search algorithm for $\alpha$-fair $k$-clustering with respect to the general $\ell_p$ norm cost function is similar to the one for $k$-median. For cost functions other than $k$-median, the local search algorithm was first analyzed with respect to the $k$-means cost function by~\cite{kanungo2004local} in Euclidean space using a so-called ``centroidal'' property of optimal solutions in $k$-means. Later, the analysis was both simplified and generalized by~\cite{gupta2008simpler}; their analysis showed that local search algorithm works for $k$-means and the more {\em general $\ell_p$ norm cost} function, $(\sum_{x\in P} d(x, S)^p)^{1/ p}$, in {\em general metric} spaces.

In this section, following the analysis of~\cite{gupta2008simpler}, we analyze our local search algorithm with respect to the $\ell_p$ norm cost function where $p\geq 1$. 
Note that this cost function has $k$-median (with $p=1$) and $k$-means (with $p=2$) as its special cases. Moreover, by setting $p=\log n$, it approximates the {\em $k$-center} cost function within a constant factor.  

We also note that, as in the analysis of fair $k$-median, we assume the existence of a $\Delta$-bounded mapping $\pi$ and $(t,\gamma)$-bounded covering of the edges of $\pi$, $\sQ$ (see Section~\ref{sec:mapping-covering} for more details on the bounded mapping and covering). With this assumption, we can show that if the local search algorithm stops at time $T$, the cost of the $k$-clustering of the point set $P$ using the centers at time $T$, $S$ is within a constant factor of the cost of an optimal $\alpha$-fair $k$-clustering of $P$ with respect to $\cost_p$. Note that by the termination condition of the local search algorithm, $S$ is a $(t,\eps)$-stable set of centers.
 
In the rest of this section, we use $\d(x,y)$ to denote $\d(x,y) := d(x,y)^p$.  
\begin{lemma}\label{lem:general-constant}
Consider a set of $n$ points $P$ in a metric space $(X, d)$. Let $O$ be a set of centers for an optimal $\alpha$-fair $k$-clustering of $P$ with respect to $\cost_p$ and let $S$ be a set of $(t, \eps)$-stable $k$ centers for which there exists a pair of $\Delta$-bounded mapping $\pi: O\rightarrow S$ and $(t, \gamma)$-bounded covering $\sQ$. Then, $\cost_p(S) \leq  16\gamma\cdot \Delta \cdot p \cdot \opt$ where $\opt$ denotes the cost of an optimal $\alpha$-fair $k$-clustering of $P$ with respect to $\cost_p$ (i.e., $\cost_p(O)$).  
\end{lemma}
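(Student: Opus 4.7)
The plan is to mirror the proof of Lemma~\ref{lem:constant-approximate-stable}, replacing the exact triangle inequality for $d$ by the following approximate triangle inequality for $\d(x,y):=d(x,y)^p$, valid for every $\eta>0$ and every $p\geq 1$:
\begin{equation}
\d(x,y)\ \leq\ (1+\eta)^{p-1}\,\d(x,z)\ +\ (1+1/\eta)^{p-1}\,\d(z,y), \label{eq:approx-tri}
\end{equation}
which is just convexity of $t\mapsto t^p$ applied with the split $\lambda=1/(1+\eta)$. The feature of \eqref{eq:approx-tri} I will exploit is that letting $\eta\to\infty$ drives the coefficient of $\d(z,y)$ toward $1$ at the price of inflating the coefficient of $\d(x,z)$; this asymmetry is needed because, unlike in the $k$-median case, the subtraction $-\d(x,s_x)$ that arises in the local-search inequality will no longer cancel the inflated $\d(x,s_x)$-term exactly.

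I would then follow the decomposition of Lemma~\ref{lem:constant-approximate-stable} verbatim. For each partition $Q\in\sQ$ the swap $S_Q:=(S\cup O(Q))\setminus S(Q)$ is feasible with respect to $\sB$ by Property~\textit{E-\ref{enum:partition-2}}, and the pointwise charging gives
\[
\sum_{x\in P}(\d(x,S_Q)-\d(x,S))\ \leq\ \sum_{x\in\sO_Q}(\d(x,o_x)-\d(x,s_x))\ +\ \sum_{x\in\sS_Q\setminus\sO_Q}(\d(x,s_{o_x})-\d(x,s_x)),
\]
where $s_{o_x}:=\nn_S(o_x)\in S_Q$ by Property~\textit{E-\ref{enum:partition-3}}. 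To bound $\d(x,s_{o_x})$, I would apply \eqref{eq:approx-tri} on the path $x\to o_x\to s_{o_x}$, use $\d(o_x,s_{o_x})\leq\d(o_x,s_x)$ (nearest-neighbor definition), and apply \eqref{eq:approx-tri} once more on the path $o_x\to x\to s_x$. The result has the form
\[
\d(x,s_{o_x})\ \leq\ A_p\,\d(x,o_x)\ +\ B_p\,\d(x,s_x),\quad A_p=(1+\eta)^{p-1}\bigl(1+(1+1/\eta)^{p-1}\bigr),\ B_p=(1+1/\eta)^{2(p-1)}.
\]

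Summing over $Q\in\sQ$ and invoking Properties~\textit{E-\ref{enum:partition-0}} (each edge is in at most $\gamma$ partitions) and \textit{D-\ref{enum:mapping-2}} (each $s\in S$ lies in at most $\gamma\Delta$ sets $S(Q)$), together with $\bigcup_Q\sO_Q=P$, would yield, exactly as in Lemma~\ref{lem:constant-approximate-stable},
\[
\sum_{Q\in\sQ}\sum_{x\in P}(\d(x,S_Q)-\d(x,S))\ \leq\ \gamma(1+\Delta A_p)\cost_p(O)\ +\ \bigl(\gamma\Delta(B_p-1)-1\bigr)\cost_p(S).
\]
By $(t,\eps)$-stability the left-hand side is at least $-\eps|\sQ|\cost_p(S)\geq -\eps\gamma k\,\cost_p(S)$. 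The crux is to calibrate $\eta$ so that $\gamma\Delta(B_p-1)$ is tiny. Since $(1+1/\eta)^{2(p-1)}-1\leq 4(p-1)/\eta$ for $\eta\geq 2(p-1)$, the choice $\eta=\Theta(p\gamma\Delta)$ makes $\gamma\Delta(B_p-1)\leq 1/4$ while keeping $A_p=O((p\gamma\Delta)^{p-1})$; combined with $\eps\leq 1/(4\gamma k)$ this rearranges to $\cost_p(S)\leq O(\gamma\Delta A_p)\cdot\cost_p(O)=O((p\gamma\Delta)^p)\cdot\cost_p(O)$ in the $\d$-aggregated ($\ell_p^p$) cost. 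Taking the $p$-th root converts this into the claimed $\cost_p(S)\leq 16\gamma\Delta p\cdot\cost_p(O)$ in the $\ell_p$ norm, after absorbing universal constants.

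The hard part will be precisely this calibration of $\eta$: two back-to-back applications of \eqref{eq:approx-tri} inflate the coefficient of $\d(x,s_x)$ to $(1+1/\eta)^{2(p-1)}>1$ for every finite $\eta$, so the clean cancellation of Lemma~\ref{lem:constant-approximate-stable} is unavailable for $p>1$. Pushing $\eta$ up to $\Theta(p\gamma\Delta)$ is the smallest setting that lets the leftover $(B_p-1)\cost_p(S)$-mass be absorbed into the stability slack, and the price paid---$A_p=\Theta((p\gamma\Delta)^{p-1})$ in the $\ell_p^p$ cost---is precisely what becomes the $O(p)$ blow-up in the $\ell_p$-norm approximation asserted by the lemma.
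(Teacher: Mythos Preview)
Your argument is essentially correct and arrives at the claimed bound, but it follows a genuinely different route from the paper's proof. Two small slips to note first: the stability lower bound should read $-p\eps\,|\sQ|\,\cost_p(S)^p$ (you are working in the $\ell_p^p$ aggregate, and the passage from $(1-\eps)^p$ to $1-p\eps$ contributes a factor of $p$), so your condition on $\eps$ should be $\eps\le 1/(4p\gamma k)$ rather than $\eps\le 1/(4\gamma k)$; and your displayed ``$\cost_p(S)$'' in that inequality should carry a $p$-th power. Neither affects the outcome.

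The substantive difference is in how you control $\sum_{x}\d(x,s_{o_x})$. You work \emph{pointwise}, applying the parametric convexity inequality \eqref{eq:approx-tri} twice to obtain $\d(x,s_{o_x})\le A_p\,\d(x,o_x)+B_p\,\d(x,s_x)$ and then tune the free parameter $\eta=\Theta(p\gamma\Delta)$ so that the surplus $\gamma\Delta(B_p-1)$ is absorbable into the stability slack. The paper instead works \emph{globally}: it applies Minkowski's inequality to the vectors $(d(x,o_x))_x$ and $(d(x,s_x))_x$ to obtain $\sum_x \d(x,s_{o_x})\le (2+\beta)^p\opt^p$ with $\beta:=\cost_p(S)/\opt$, substitutes this back, and solves the resulting self-referential inequality $\beta^p\bigl(\tfrac12+\gamma\Delta(1-(1+2/\beta)^p)\bigr)\le\gamma$ by contradiction to conclude $\beta\le 16\gamma\Delta p$. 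Your approach is more elementary (only convexity of $t^p$, no norm inequality and no implicit equation in $\beta$) but needs the two-step calibration of $\eta$; the paper's Minkowski route avoids the auxiliary parameter at the price of the bootstrapping argument on $\beta$. Both yield the same $O(p\gamma\Delta)$ factor after taking $p$-th roots.
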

\begin{proof}
Consider an arbitrary partition $Q \in \sQ$. 
Here, we bound the difference between the clustering cost of $P$ with the set $S$ as centers and the set $S_Q = (S\cup O(Q))\setminus{S(Q)}$ as centers. 

Let $S_j\subseteq P$ denote the cluster of points that are mapped to $s_j$ in the clustering with the set $S$ as centers and let $O_i\subseteq P$ denote the cluster of points that are mapped to $o_i$ in the clustering with the set $O$ as centers. Moreover, we use $\sO_Q$ and $\sS_Q$ respectively to denote $\bigcup_{o_i\in O(Q)} O_i$ and $\bigcup_{s_j\in S(Q)} S_j$.
\begin{align}
\sum_{x\in P} \d(x, S_Q) - \d(x, S) 
	&\leq \sum_{x\in \sO_Q} \d(x, S_Q) -  \d(x, S) + \sum_{x\in \sS_Q \setminus \sO_Q} \d(x, S_Q) - \d(x, S)  \nonumber \\
	&\leq \sum_{x\in \sO_Q} \d(x, o_x) - \d(x, s_x) \quad\rhd o_x=\nn_O(x), s_x = \nn_S(x) \nonumber \\ 
	&+ \sum_{x\in \sS_Q \setminus \sO_Q} \d(x, s_{o_x}) - \d(x, s_x) \quad\rhd s_{o_x}= \nn_S(o_x) \label{eq:general_cost_of_swap}
\end{align}
By summing over all partitions $Q\in \sQ$, 
\begin{align}
\sum_{Q\in \sQ} \sum_{x\in P} \d(x, S_Q) - \d(x, S) 
	&\leq \sum_{Q\in \sQ} \Big( \sum_{x\in \sO_Q} \big( \d(x, o_x) - \d(x, s_x)\big) + \sum_{x\in \sS_Q \setminus \sO_Q} \big( \d(x, s_{o_x}) - \d(x, s_x) \big) \Big) \nonumber\\ 
	&\leq \gamma\cdot \opt^p - \cost_p(S)^p \nonumber + \gamma\cdot \Delta \cdot (\sum_{x\in P} \d(x, s_{o_x}) - \d(x, s_x)) \nonumber \\
	&\leq \gamma\cdot \opt^p - (\gamma\cdot\Delta + 1) \cdot \cost_p(S)^p + \gamma\cdot \Delta \cdot \sum_{x\in P} \d(x, s_{o_x}) \label{eq:general-final-bound} 
\end{align}
where the second inequality follows from Property~D-\ref{enum:mapping-2}, Property~E-\ref{enum:partition-0} of $(\pi, \sQ)$ and the fact that given $S_j$, for all $x\in S_j$, $\d(x, s_{o_x}) - \d(x, s_j) \geq 0$. 
Next, we will bound the term $\sum_{x\in P} \d(x, s_{o_x})$ in Eq.~\eqref{eq:general-final-bound}.
\begin{claim}\label{clm:general-bound-R-term}
Let $s_{o_x}$ be the nearest center in $S$ to $o_x$ where $o_x$ is the nearest center in $O$ to $x$. Assuming $\beta = {\cost_p(S) / \opt}$, then $\sum_{x\in P} \d(x, s_{o_x}) \leq (2+\beta)^p\cdot \opt^p$.
\end{claim} 
\begin{proof}
Let $p_1, \cdots, p_n$ denote the points in $P$. We define vector $\X$ so that the $i$-th coordinate in $\X_i$ is equal to $d(p_i, o_{p_i})$. Similarly, we define $\Y$ so that $\Y_i = d(p_i, s_{p_i})$. Note that in particular, $\left\| 2\X \right\|_p = 2\cdot \opt$, $\left\| \Y \right\|_p = \cost_p(S)$ and all coordinates of $\X$ and $\Y$ are positive.
\begin{align*}
\sum_{x\in P} \d(x, s_{o_x}) 
&\leq \sum_{x\in P} (2d(x, o_x) + d(x, s_x))^p \;\rhd\text{Eq.~\eqref{eq:closest-stable-center}}\\
&= \left\|2\X + \Y\right\|_p^p\\
&\leq \big(\left\|2\X\right\|_p + \left\|\Y\right\|_p\big)^p \;\rhd\text{triangle inequality (Minkowski inequality on $L^p$ with $p\geq 1$)}\\
&\leq (2\cdot {\opt} + {\cost_p(S)})^p \\
&\leq (2+\beta)^p\cdot \opt^p \;\rhd\text{by $\beta = {\cost_p(S) / \opt}$}
\end{align*}
\end{proof}
Since $S$ is a $(t,\eps)$-stable set of centers and all partitions in $\sQ$ are of size at most $t$ (i.e., $|S(Q)|\leq t$), 
for each $Q\in \sQ$, $(\sum_{x\in P} \d(x, S_Q))^{1/p} \geq (1-\eps) \cdot \cost_p(S)$. Since for $\eps\in[0,1]$, $(1-\eps)^p \geq 1-p\eps$, it implies that,
\begin{align*}
\sum_{x\in P} \d(x, S_Q) - \d(x, S) = \sum_{x\in P} \d(x, S_Q) - \cost_p(S)^p \geq -p\cdot \eps \cdot \cost_p(S)^p.
\end{align*} 
Hence, together with Claim~\ref{clm:general-bound-R-term} and Eq.~\eqref{eq:general-final-bound}, 
\begin{align}
-p\cdot\eps |\sQ| \cost_p(S)^p  
&\leq \sum_{Q\in \sQ} \sum_{x\in P} \d(x, S_Q) - \d(x, S) \nonumber\\
&\leq \gamma\cdot \opt^p - (\gamma\cdot\Delta + 1) \cdot \cost_p(S)^p + \gamma\cdot \Delta \cdot \sum_{x\in P} \d(x, s_{o_x}) &&\rhd\text{by Eq.~\eqref{eq:general-final-bound}} \nonumber \\
&\leq \gamma\cdot \opt^p - (\gamma\cdot\Delta + 1) \cdot \cost_p(S)^p + \gamma\cdot \Delta \cdot (2+\beta)^p\cdot \opt^p\label{eq:general-bound}
\end{align}
Since $|\sQ| \leq k\cdot \gamma$, by setting $\eps =1/(2k\cdot \gamma\cdot p)$, $-p\cdot\eps |\sQ| \cost_p(S)^p \geq -\cost_p(S)^p/2$.
Hence, by rearranging Eq.~\eqref{eq:general-bound} and together with the assumption $\beta = {\cost_p(S) / \opt}$,
\begin{align}
\beta^p  = {\cost_p(S)^p \over \opt^p} \leq {\gamma (1 + \Delta\cdot (2+\beta)^p) \over {1\over 2} + {\gamma\cdot \Delta}}
\end{align}
which implies that 
\begin{align}
&\beta^p\cdot ({1\over 2} + \gamma\cdot \Delta) \leq \gamma + \gamma\cdot \Delta \cdot (2+\beta)^p \nonumber \\
\Rightarrow\; &{\beta^p}\cdot ({1\over 2} + \gamma\cdot \Delta\cdot (1- (1+{2\over \beta})^p) \leq \gamma \label{eq:general-formula}
\end{align}
\begin{claim}\label{clm:bound}
$1 - e^{1\over 8\gamma\cdot \Delta} \geq -{1\over 4\gamma\cdot \Delta}$.
\end{claim}
\begin{proof}
Since for $x\in(0,1)$, $\ln(1+x) \geq {x\over 2}$, $\ln(1+ {1\over 4\gamma\Delta}) \geq {1\over 8\gamma\Delta}$. Hence, since $\mathrm{exp()}$ is monotone, $1- e^{1\over 8\gamma\Delta} \geq -{1\over 4\gamma\Delta}$.
\end{proof}
Next, we show that $\beta \leq 2\cdot(8\gamma\cdot\Delta)\cdot p$. Suppose for contradiction that it is not the case. Then,
\begin{align*}
\beta^p \cdot ({1\over 2} + \gamma\cdot \Delta\cdot (1- (1+{2\over \beta})^p) 
&\geq \beta^p \cdot ({1\over 2} + \gamma\cdot \Delta\cdot(1 - (1+{1\over 8\gamma\cdot \Delta\cdot p})^p)) \\
&\geq \beta^p \cdot ({1\over 2} + \gamma\cdot \Delta\cdot(1 - e^{1\over 8\gamma\cdot \Delta})) &&\rhd (1+{1\over x})^x < e \\
&\geq \beta^p \cdot ({1\over 2} - {1\over 4}) &&\rhd\text{by Claim~\ref{clm:bound}}\\
&\geq \gamma &&\rhd\text{$p\geq 1$}
\end{align*}
which is a contradiction. Hence, $\beta = (\cost_p(S) / \opt) \leq 16\gamma\cdot\Delta\cdot p$ and the solution returned by the local search algorithm is an $O(p)$-approximation.
\end{proof}

\begin{theorem}\label{thm:general}
The local search algorithm with swaps of size at most $4$ returns a $(O(p),7)$-bicriteria approximate solution of $\alpha$-fair $k$-clustering of a point set $P$ of size $n$ with respect to the cost function $\cost_p$ in time $\tldO(pk^{5} n^4)$.
\end{theorem}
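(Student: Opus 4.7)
The plan is to combine the three building blocks already developed in the paper and verify that the parameters line up for swaps of size at most $4$: the critical-ball construction (Section \ref{sec:fairness-to-partition-constraint}) which enforces fairness, the $\Delta$-bounded mapping together with the $(t,\gamma)$-bounded covering (Sections \ref{sec:mapping}--\ref{sec:covering}) which drive the cost analysis, and the iteration/runtime bound from Theorem \ref{thm:iterations}. Since Lemma \ref{lem:general-constant} already handles the general $\ell_p$ case, the argument is essentially a parameter substitution.

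For the fairness factor $7$: by construction, the output $S$ of Algorithm \ref{alg:local-search} is feasible with respect to the critical balls $\sB$ produced by Algorithm \ref{alg:critical-balls} (the initialization in part I is feasible, and the local-search updates in part II preserve feasibility by the explicit feasibility check in the inner loop). Lemma \ref{lem:clustering-with-partition-fair} then immediately gives that $S$ is a $(7\alpha)$-fair $k$-clustering, which supplies the fairness factor of the bicriteria guarantee.

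For the cost approximation $O(p)$: the mapping constructed in Lemma \ref{lem:stable-mapping} and the covering constructed in Lemma \ref{lem:stable-partitioning} are built purely from the combinatorial structure of $\sB$, $S$ and $O$, with no reference to the cost function, so they carry over unchanged to the $\ell_p$ setting. Setting $t = 4$, $\Delta = 3$ and $\gamma = 6$, and choosing $\eps = 1/(2k\gamma p) = \Theta(1/(kp))$ as required by Lemma \ref{lem:general-constant}, I would conclude $\cost_p(S) \le 16\gamma\Delta\, p \cdot \opt = 288\, p \cdot \opt = O(p)\cdot \opt$.

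For the runtime, plugging $t = 4$ and $\eps^{-1} = \Theta(kp)$ into Theorem \ref{thm:iterations}, and reusing the same per-iteration bookkeeping that underlies Corollary \ref{cor:main-median}, yields the claimed $\tldO(pk^5 n^4)$ bound. The only conceptual point worth emphasizing (and the only real ``obstacle'', a minor one) is to inspect the mapping-and-covering construction of Sections \ref{sec:mapping}--\ref{sec:covering} and confirm that it never used properties specific to $p = 1$; once that is verified, the theorem is a direct consequence of the lemmas already proved.
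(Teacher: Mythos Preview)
Your proposal is correct and follows essentially the same approach as the paper's own proof: invoke Lemma~\ref{lem:clustering-with-partition-fair} for the fairness factor~$7$, combine Lemmas~\ref{lem:stable-mapping} and~\ref{lem:stable-partitioning} (giving $\Delta=3$, $t=4$, $\gamma=6$) with Lemma~\ref{lem:general-constant} for the $O(p)$ cost bound, and plug $\eps=\Theta(1/(kp))$ into Theorem~\ref{thm:iterations} for the runtime. Your additional remark that the mapping/covering construction is cost-function-agnostic is exactly the observation that makes the argument go through, and the paper relies on it implicitly.
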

\begin{proof}
By Lemma~\ref{lem:clustering-with-partition-fair}, the result of our local search algorithm returns a $(7\alpha)$-fair $k$-clustering. By Lemma~\ref{lem:general-constant} and the existence of a pair of $3$-bounded mapping and $(4,6)$-bounded covering (see Lemma~\ref{lem:stable-mapping} and~\ref{lem:stable-partitioning}), the $\cost_p$ of the returned solution is $O(p\cdot \opt)$ where $\opt$ is the cost of an optimal $\alpha$-fair $k$-clustering of $P$ with respect to $\cost_p$. 

Finally, as we set $\eps = O({1\over p\cdot k})$ and by Theorem~\ref{thm:iterations} the runtime of the algorithm is $\tldO(pk^5 n^4)$.
\end{proof}

\begin{corollary}[restatement of Theorem \ref{thm:main} for $k$-means]\label{cor:main-means}
The local search algorithm with swaps of size at most $4$ returns a $(O(1),7)$-bicriteria approximate solution of $\alpha$-fair $k$-means of a point set $P$ of size $n$ in time $\tldO(k^{5} n^4)$.
\end{corollary}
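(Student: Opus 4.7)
The plan is to observe that Corollary \ref{cor:main-means} is nothing but the specialization of the general-norm result (Theorem \ref{thm:general}) to the exponent $p=2$. The standard $k$-means objective $\sum_{x\in P} d(x,S)^2$ differs from $\cost_p(S)^p = \sum_{x\in P} d(x,S)^p$ only by a monotone $1/p$-th root, so minimizing one is equivalent to minimizing the other and the approximation ratio carries over without change. Hence it is enough to plug $p=2$ into the conclusion of Theorem \ref{thm:general}.

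Concretely, I would proceed as follows. First, I would note that by Lemma \ref{lem:disjoint-balls} the collection of critical balls $\sB$ can be computed in polynomial time, and by Lemma \ref{lem:clustering-with-partition-fair} any set of centers feasible with respect to $\sB$ is automatically $(7\alpha)$-fair. Next, I would invoke Algorithm \ref{alg:local-search} with swap bound $t=4$, precision parameter $\eps = \Theta(1/(pk)) = \Theta(1/k)$, and the squared distance $d(\cdot,\cdot)^2$ in place of $d(\cdot,\cdot)$ in the $\cost$ evaluation. By construction, the returned set $S$ is $(4,\eps)$-stable with respect to $\sB$, so in particular it is feasible and inherits the $(7\alpha)$-fairness guarantee.

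For the cost bound, I would appeal to the $3$-bounded mapping $\pi$ constructed in Lemma \ref{lem:stable-mapping} and the $(4,6)$-bounded covering $\sQ$ constructed in Lemma \ref{lem:stable-partitioning} (these constructions depend only on the combinatorial structure of $S$, $O$, and $\sB$, not on whether the cost is $k$-median or $k$-means, so they apply verbatim). Substituting these into Lemma \ref{lem:general-constant} with $p=2$, $\Delta=3$, $\gamma=6$, and $t=4$ gives $\cost_2(S) \le 16 \cdot 6 \cdot 3 \cdot 2 \cdot \opt = O(1)\cdot \opt$, where $\opt$ is the optimal $\alpha$-fair $k$-means cost. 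Finally, the runtime $\tldO(p k^5 n^4) = \tldO(k^5 n^4)$ comes directly from Theorem \ref{thm:iterations} with $t=4$ and $\eps=\Theta(1/k)$.

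There is essentially no obstacle: all the heavy lifting is done in Lemma \ref{lem:general-constant}, and the only thing to verify is that the squared-distance triangle inequality used in the initialization analysis of Theorem \ref{thm:iterations} (namely $d(x,y)^2 \le 2d(x,z)^2 + 2d(z,y)^2$) is invoked where needed; this is already flagged in the footnote of that proof. The minor point worth spelling out explicitly is that Minkowski's inequality, used in Claim \ref{clm:general-bound-R-term} via the $L^p$ triangle inequality, holds for $p=2$, which is immediate. Thus the corollary follows as a direct specialization of Theorem \ref{thm:general}.
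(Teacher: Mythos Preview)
Your proposal is correct and follows exactly the same approach as the paper: the paper's proof of this corollary is the single line ``It follows from Theorem~\ref{thm:general} by setting $p = 2$,'' and you have simply unpacked what that entails. The additional checks you spell out (combinatorial nature of the mapping/covering, Minkowski for $p=2$, the approximate triangle inequality for squared distances) are all valid and already implicit in the paper's framework.
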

\begin{proof}
It follows from Theorem~\ref{thm:general} by setting $p = 2$.
\end{proof}

\begin{corollary}[fair $k$-center]\label{cor:main-center}
The local search algorithm with swaps of size at most $4$ returns a $(O(\log n),7)$-bicriteria approximate solution of $\alpha$-fair $k$-center of a point set $P$ of size $n$ in time $\tldO(k^{5} n^4)$.
\end{corollary}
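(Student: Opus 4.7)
The plan is to obtain the $k$-center result by instantiating Theorem~\ref{thm:general} with $p = \log n$ and then converting the resulting $\ell_p$-norm guarantee into a guarantee on the max-distance (i.e., $k$-center) objective via the standard sandwich inequality between $\ell_p$ and $\ell_\infty$ norms.

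First I would run the local search algorithm as described in Algorithm~\ref{alg:local-search} using the cost function $\cost_{p}$ with $p = \log n$ and $\eps = \Theta(1/(k \log n))$. By Theorem~\ref{thm:general}, the returned center set $S$ is feasible with respect to the critical balls $\sB$ (so by Lemma~\ref{lem:clustering-with-partition-fair} it already yields a $(7\alpha)$-fair $k$-clustering), and it satisfies
\begin{align*}
\cost_{p}(S) \leq O(p) \cdot \cost_{p}(\opt^{(p)}_\alpha),
\end{align*}
where $\opt^{(p)}_\alpha$ denotes an optimal $\alpha$-fair $k$-clustering of $P$ with respect to $\cost_p$.

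Next I would convert this into a $k$-center guarantee. Let $\opt^{\mathrm{c}}_\alpha$ denote the $k$-center cost of an optimal $\alpha$-fair $k$-center clustering of $P$. For any set $T \subseteq P$ of $k$ centers one has the standard bounds
\begin{align*}
\max_{x \in P} d(x,T) \;\leq\; \Bigl(\sum_{x \in P} d(x,T)^p\Bigr)^{1/p} \;\leq\; n^{1/p}\cdot \max_{x \in P} d(x,T).
\end{align*}
Applying the left inequality to $T = S$ and the right inequality to the optimal $\alpha$-fair $k$-center solution (which is in particular a feasible $\alpha$-fair $k$-clustering with respect to $\cost_p$, hence its $\cost_p$ cost upper bounds $\cost_p(\opt^{(p)}_\alpha)$), I obtain with $p = \log n$ (so $n^{1/p} = e$):
\begin{align*}
\max_{x \in P} d(x,S) \;\leq\; \cost_{p}(S) \;\leq\; O(\log n)\cdot \cost_{p}(\opt^{(p)}_\alpha) \;\leq\; O(\log n)\cdot e \cdot \opt^{\mathrm{c}}_\alpha \;=\; O(\log n)\cdot \opt^{\mathrm{c}}_\alpha.
\end{align*}
Combined with the $(7\alpha)$-fairness guarantee, this is the claimed $(O(\log n), 7)$-bicriteria approximation.

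For the running time, Theorem~\ref{thm:general} gives $\tldO(p \cdot k^{5} n^{4})$; substituting $p = \log n$ yields $\tldO(k^{5} n^{4} \log n) = \tldO(k^{5} n^{4})$, where the extra $\log n$ factor is absorbed into the $\tldO$ notation. No new structural obstacle arises: the mapping/covering construction of Lemmas~\ref{lem:stable-mapping} and~\ref{lem:stable-partitioning} is independent of $p$, and the only nontrivial point is picking $\eps$ small enough (of order $1/(pk)$) so that the analysis of Lemma~\ref{lem:general-constant} goes through uniformly for $p = \log n$.
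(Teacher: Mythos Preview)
Your proposal is correct and follows essentially the same approach as the paper: instantiate Theorem~\ref{thm:general} with $p=\log n$ and use the sandwich inequality $\|\cdot\|_\infty \le \|\cdot\|_{\log n} \le O(1)\cdot\|\cdot\|_\infty$ to translate the $\cost_p$ guarantee into a $k$-center guarantee. Your write-up is actually more careful than the paper's, as you make explicit that the optimal $\alpha$-fair $k$-center solution is feasible for the $\cost_p$ problem (hence $\cost_p(\opt^{(p)}_\alpha)\le n^{1/p}\cdot\opt^{\mathrm c}_\alpha$) and that the extra $\log n$ factor in the runtime is absorbed by $\tldO$.
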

\begin{proof}
It follows from Theorem~\ref{thm:general} by setting $p = \log n$. Note that for any vector $\X\in \mathbb{R}^n$, $\left\|\X \right\|_{\infty} \leq \left\|\X \right\|_{\log n} \leq 2\cdot \left\|\X \right\|_{\infty}$.
\end{proof}

\end{document}